\newcommand{\A}{\mathcal{A}}
\newcommand{\B}{\mathcal{B}}
\newcommand{\D}{\mathsf{D}}
\newcommand{\R}{\mathsf{R}}
\newcommand{\N}{\mathbb{N}}
\newcommand{\F}{\mathsf{F}}
\newcommand{\V}{\mathsf{V}}
\newcommand{\Freg}{\mathsf{F_{\mathsf{reg}}}}
\newcommand{\Vreg}{\mathsf{V_{\mathsf{reg}}}}
\newcommand{\last}{\mathrm{last}}
\newcommand{\wnd}{\mathrm{wnd}}
\newcommand{\rev}{^\mathsf{R}}
\renewcommand{\det}{^\mathsf{D}}
\newcommand{\revdet}{^\mathsf{RD}}
\newcommand{\NL}{\textsc{NL}}
\newcommand{\PSPACE}{\textsc{Pspace}}
\theoremstyle{plain}
\newtheorem{thm}{Theorem}[section]
\newtheorem{defn}[thm]{Definition}
\newtheorem{prop}[thm]{Proposition}
\newtheorem{lem}[thm]{Lemma}
\newtheorem{cor}[thm]{Corollary}
\Crefname{thm}{Theorem}{Theorems}
\Crefname{defn}{Definition}{Definitions}
\Crefname{lem}{Lemma}{Lemmas}
\Crefname{prop}{Proposition}{Propositions}
\Crefname{cor}{Corollary}{Corollaries}
\Crefname{ex}{Example}{Examples}
\title{Automata theory on sliding windows}
\author[M.~Ganardi]{Moses Ganardi}
\author[D.~Hucke]{Danny Hucke}
\author[D.~K\"onig]{Daniel K\"onig}
\author[M.~Lohrey]{Markus Lohrey}
\address{Universit\"at Siegen, Germany\\
  \texttt{\{ganardi,hucke,koenig,lohrey\}@eti.uni-siegen.de}}
\author[K.~Mamouras]{Konstantinos Mamouras}
\address{University of Pennsylvania, Philadelphia, USA\\
  \texttt{mamouras@seas.upenn.edu}}
\begin{document}

\begin{abstract}
	In a recent paper we analyzed the space complexity of streaming algorithms whose goal is to decide
	membership of a sliding window to a fixed language. For the class of regular languages 
	we proved a space trichotomy theorem: for every regular language the optimal space bound is
	either constant, logarithmic or linear. 
	In this paper we continue this line of research: We present natural characterizations for the constant
	and logarithmic space classes and establish tight relationships to the concept of language growth. We also
	analyze the space complexity with respect to automata size and prove almost matching lower and upper bounds.
	Finally, we consider the decision problem whether a language given by a DFA/NFA admits a sliding window algorithm
	using logarithmic/constant space.
\end{abstract}

\maketitle

\section{Introduction}

\subsection{Streaming algorithms}
Streaming algorithms process an input sequence $a_1 a_2 \cdots a_m$ from left to right 
and  have at time $t$ only direct access to the current data value $a_t$.
Such algorithms have received a lot of attention in recent years, see~\cite{Aggarwal07} for a broad introduction.
The general goal of streaming algorithms is to avoid the explicit storage of the whole data stream.
Ideally, a streaming algorithm works in constant space, in which case it reduces to a deterministic finite
automaton (DFA), but polylogarithmic space with respect to the input length might be acceptable, too.
These small space requirements are motivated by the current explosion in the size of the input data,
which makes random access to the input often infeasible.
Such a scenario arises for instance when searching in large databases (e.g., genome databases or web databases),
analyzing internet traffic (e.g. click stream analysis), and monitoring networks. 
	
The first papers on streaming algorithms as we know them today are usually attributed to Munro and Paterson \cite{MunroP80}
and Flajolet and Martin \cite{FlajoletM85}, although the principle idea goes back to the work on online machines
by Hartmanis, Lewis and Stearns from the 1960's \cite{LewisSH65,StearnsHL65}.
Extremely influential for the area of streaming algorithms was the paper of Alon, Matias, and Szegedy \cite{AlonMS99}.

\subsection{The standard model and sliding window model.}
Two variants of streaming algorithms can be found in the literature:
\begin{itemize}
	\item In the {\em standard model} the algorithm reads an input stream $a_1 a_2  \cdots a_m$
	of data values from left to right.
	At time instant $t$ it has to output the value $f(a_1 a_2 \cdots a_t)$ for a certain function $f$.
	\item In the {\em sliding window model} the algorithm works on a sliding window.
	At time instant $t$, the active window is a certain suffix $a_{t-n+1} a_{t-n+2} \cdots a_t$ of $a_1 a_2 \cdots a_t$
	and the algorithm has to output $f(a_{t-n+1} a_{t-n+2} \cdots a_t)$.
\end{itemize}
For many applications the sliding window model is more appropriate. Quite often data items in a stream are outdated
after a certain time, and the sliding window model is a simple way to model this. The typical application is the analysis of a
time series as it may arise in medical monitoring, web tracking, or financial monitoring.  In all these applications,
data items are usually no longer important after a certain time.
Two variants of the sliding window model can be found in the literature; see e.g. \cite{ArasuM04}:
\begin{itemize}
	\item {\em Fixed-size model:} The size of the sliding window is a fixed constant (the window size). In other words: at each time 
	      instant a new data value $a_i$ arrives and the oldest data value from the sliding window expires.
	\item {\em Variable-size model:} The sliding window $a_{t-n+1} a_{t-n+2} \cdots a_t$ is determined by an adversary. 
	      At every time instant the adversary can either remove the first data value from the sliding window (expiration of a value),
	      or add a new data value at the right end (arrival of a new value).	      
\end{itemize}
In the seminal paper of Datar et al.~\cite{DatarGIM02}, where the fixed-size sliding window model was introduced,
the authors show how to maintain the number of $1$'s in a sliding window of fixed size $n$ over the alphabet $\{0,1\}$
in space $\frac{1}{\varepsilon} \cdot \log^2 n$ if one allows a multiplicative error of $1\pm \varepsilon$.
A matching lower bound is proved as well in \cite{DatarGIM02}. For the upper bound, Datar et al. introduced a new
data structure called
exponential histograms. Histogram techniques
and variants have been used to approximate a large variety of statistical data over sliding windows. Let us mention the work
on computation of  the variance and $k$-median \cite{BabcockDMO03}, quantiles \cite{ArasuM04},
and entropy \cite{BravermanO07} over sliding windows. Other computational problems that have been considered
for the sliding window model include 
optimal sampling \cite{BravermanOZ12}, various pattern matching problems \cite{BreslauerG14,CliffordFPSS15,CliffordFPSS16,CliffordS16},
database querying (e.g. processing of join queries \cite{GolabO03}) and graph problems 
(e.g. checking for connectivity and computation of matchings, spanners, and
spanning trees~\cite{CrouchMS13}).
Further references on the sliding window model can be found in the surveys  \cite[Chapter~8]{Aggarwal07} and \cite{Braverman16}.
	
\subsection{Language recognition in the streaming model.}

A natural problem that has been suprisingly neglected for the streaming model is language recognition.
The goal is to check whether an input string belongs to a given language $L$.
Let us quote Magniez, Mathieu, and Nayak \cite{MagniezMN14}:
``Few applications [of streaming] have been made in the context of formal languages, which may have impact on massive data such as DNA sequences and large XML files. For instance, in the context of databases, properties decidable by streaming algorithm have been studied \cite{SegoufinV02,SegoufinS07}, but only in the restricted case of deterministic and constant memory space algorithms.'' 
For Magniez et al.~this was the starting point to study  language recognition in the streaming model. Thereby they
restricted their attention to the above mentioned standard streaming model. Note that 
in the standard model  the membership problem 
for a regular language is trivial to solve: One simply has to simulate a DFA on the stream and thereby only store
the current state of the DFA. In \cite{MagniezMN14} the authors present a randomized streaming
algorithm for the (non-regular) Dyck language $D_s$ with $s$ pairs of parenthesis that works in space $\mathcal{O}(\sqrt{n} \log n)$
and time $\mathrm{polylog}(n)$ per symbol.
Further investigations on streaming language recognition
for various subclasses of context-free languages can be found 
in  \cite{BabuLRV13,BabuLV10,FrancoisMRS16,KonradM13,KrebsLS11,SegoufinS07,SegoufinV02}.
Let us emphasize that all these papers exclusively deal with the standard streaming 
model. Language recognition problems for the sliding window model have been completely neglected so far.
This was the starting point for our previous paper \cite{GanardiHL16}.
	
\subsection{Querying regular languages in the sliding window model.}

As mentioned above, the membership problem for a regular language $L$ has a trivial constant space solution in the 
standard streaming model:
One simply simulates a DFA for $L$ on the data stream by storing the current state.
This solution does not work for the sliding window model.
The problem is the removal of the left-most symbol from the sliding window. %in each step.
In order to check whether the active window belongs to a certain language $L$ one 
has to know this first symbol in general. In such a case one has to store the whole window content 
using $\mathcal{O}(n)$ bits (where $n$ is the window size).
A simple regular language where this phenomenon arises is the language $a \{a,b\}^*$ of all words that start with $a$. The point is that by repeatedly
checking whether the sliding window content belongs to $a \{a,b\}^*$, one can recover the exact content of the sliding window, which implies
that every sliding window algorithm for testing membership in  $a \{a,b\}^*$  has to use $n$ bits of storage (where $n$ is the window size).
	
For a function $s(n)$ let $\Freg(s(n))$ be the class of all languages $L$ with the following property: For 
every window size $n$ there exists an algorithm that reads a data stream, uses only space $s(n)$ 
and correctly decides at every time instant whether the active window (the last $n$ symbols
from the stream) belongs to $L$. Note that this is a {\em non-uniform} model: for every window size $n$ we 
use a separate algorithm.  The class $\Vreg(s(n))$ of languages that have variable-size sliding window algorithm with space complexity 
$s(n)$ is defined similarly, see page \pageref{def-fixed-size} for details. 
Our main result from \cite{GanardiHL16} is a space trichotomy for regular languages: 
\begin{enumerate}[(i)]
	\item $\Vreg(o(n)) = \Freg(o(n)) = \Freg(\mathcal{O}(\log n)) =  \Vreg(\mathcal{O}(\log n))$
	\item $\Freg(o(\log n)) = \Freg(\mathcal{O}(1))$
	\item $\Vreg(o(\log n)) = \Vreg(\mathcal{O}(1)) =$ all trivial languages (empty and universal languages)
\end{enumerate}  

\noindent
Each of the three cases is characterized in terms of the
syntactic homomorphism and the left Cayley graph of the syntactic monoid of the regular language.
The precise characterizations are a bit technical; see \cite{GanardiHL16} for the details.

In this paper we continue our investigation of sliding-window algorithms for regular languages.
As a first contribution, we present very
natural characterizations of the above classes in (i) and (ii): 
The languages in (i) are exactly the languages  that are 
reducible with a Mealy machine (working from right to left) to a regular language of polynomial growth.
Note that the regular languages
of polynomial growth are exactly the bounded regular languages \cite{SzilardYZS92}. A language $L$ is {\em bounded} if $L \subseteq w_1^* w_2^* \cdots w_n^*$
for words $w_1, w_2, \ldots, w_n$.  In addition, we show that the class (i) is the Boolean closure of regular left ideals (regular languages $L$ with
$\Sigma^* L \subseteq L$) and regular length languages (regular languages where $|u| = |v|$ implies that $u \in L$ iff $v \in L$).
The class (ii) is characterized as the Boolean closure of suffix-testable languages (languages $L$ where membership in $L$ only depends
on a suffix of constant length) and regular length languages.
A natural example for the classes above is the problem of testing whether the sliding window contains a fixed pattern $w$ as a factor (as a suffix) since we can check membership of the left ideal $\Sigma^*w\Sigma^*$ (or of the suffix-testable language $\Sigma^*w$).
		
We also consider the sliding-window space complexity of  regular languages in a uniform setting, where the 
size $m$ (number of states) of an automaton for the regular language is also taken into account.
In \cite{GanardiHL16}, we asked whether for DFAs of size $m$ that accept languages in $\Freg(\mathcal{O}(\log n)) =  \Vreg(\mathcal{O}(\log n))$,
there exists a sliding-window streaming algorithm with
space complexity  $\mathrm{poly}(m) \cdot \log n$. Here, we give a negative answer by proving a lower bound of the form
$\Omega(2^m \cdot \log n)$. Moreover, we also show almost matching upper bounds.

Finally,  we prove that one can test in nondeterministic logspace and hence in deterministic polynomial time 
whether  for a given DFA $\A$ the language $L(\A)$ belongs to the above class (i)  (resp., (ii)).
For NFAs these problems become \PSPACE-complete.

\subsection{Related work.}	
In \cite{Fijalkow16} Fijalkow defines the online space complexity of a language $L$. His definition is equivalent
to the space complexity of the language $L$ in the standard streaming model described above.
Among other results,  Fijalkow presents a probabilistic automaton
$\mathcal{A}$ such that the language accepted by $\mathcal{A}$ (with threshold $1/2$) needs
space $\Omega(n)$ in the streaming model. 
	
Streaming a language $L$ in the standard model is also related to the concept of  automaticity \cite{ShallitB96}. 
For a language $L \subseteq \Sigma^*$, the automaticity $A_L$ of $L$ is the function $n \mapsto A_L(n)$, where
$A_L(n)$ is the minimal number of states of a DFA $\mathcal{A}$ such that for all words $w$ of length at most $n$: 
$w \in L$ if and only if $w \in L(\mathcal{A})$.
Clearly, every regular language $L$ has constant automaticity. Karp \cite{Karp67} proved that for every non-regular language $L$,
$A_L(n) \geq (n+3)/2$ for infinitely many $n$. This implies that for every non-regular language
$L$, membership checking in the standard streaming model is not possible in space $o(\log n)$.

\section{Preliminaries}

Throughout this paper we use 
$\log x$ as an abbreviation for  $\lfloor \log_2 x \rfloor$. Note that if $w_1, w_2, w_3, \ldots$ 
is the length-lexicographic enumeration of all words from $\{0,1\}^*$ then $|w_i| \leq \log i$.
We use the following well-known bounds for binomial coefficients, where $e$ is Euler's constant:
\[
	\bigg(\frac{n}{k}\bigg)^k  \leq \binom{n}{k} \leq \bigg(\frac{e \cdot n}{k}\bigg)^k, \quad \text{for all $1 \le k \le n$}.
\]
Assume that $u_1, \ldots, u_k \in \{0,1\}^+$ are non-empty bit strings of total length $n = \sum_{i=1}^k |u_i|$.
To encode the tuple $(u_1, \ldots, u_k)$ we use a simple block code: We encode each bit in $u_i$ except the first one
by the mapping $0 \mapsto 00$, $1 \mapsto 01$. The first bit in each $u_i$ is encoded by the 
mapping $0 \mapsto 10$, $1 \mapsto 11$. Then, the resulting bit strings are concatenated, which results in an encoding
with $2n$ bits. In the rest of the paper, we will use this encoding without mentioning it explicitly.
In fact, more succinct encodings exist.

Let $\Sigma^{\leq n} = \{  w \in \Sigma^* : |w| \leq n \}$.
A {\em prefix} of a word $w \in \Sigma^*$ is a word $u \in \Sigma^*$ with $w = uv$ for some $v \in \Sigma^*$.
The set of all prefixes of $w \in \Sigma^*$ is denoted by $\mathrm{Pref}(w)$.
For a language $L \subseteq \Sigma^*$ we define $\mathrm{Pref}(L) = \bigcup_{w \in L} \mathrm{Pref}(w)$
to be the set of prefixes of words in $L$.

The {\em reversal} of a word $x = a_1 \cdots a_n$ is defined as $x\rev = a_n \cdots a_1$
and the {\em reversal} of a language $L$ is $L\rev = \{ x\rev : x \in L\}$.
The {\em reversal} of a function $\tau : \Sigma^* \to \Gamma^*$ is defined as $\tau\rev(x) = \tau(x\rev)\rev$.
Thus, $\tau(u) = v$ if and only if $\tau\rev(u\rev) = v\rev$.

\subsection{Automata.}

We use standard definitions from automata theory.
A {\em nondeterministic finite automaton} (NFA) is a tuple $\A = (Q,\Sigma,I,\Delta,F)$
where $Q$ is a finite set of states, $\Sigma$ is an alphabet, $I \subseteq Q$ is the set of initial states,
$\Delta \subseteq Q \times \Sigma \times Q$ is the transition relation and $F \subseteq Q$ is the set of final states.
A {\em deterministic finite automaton} (DFA) $\A = (Q,\Sigma,q_0,\delta,F)$ has a single initial state $q_0 \in Q$ instead of $I$ and
a transition function $\delta \colon Q \times \Sigma \to Q$ instead of the transition relation $\Delta$.
A {\em deterministic automaton} has the same format as a DFA, except that the state set $Q$ is not required to be finite.
If $\A$ is deterministic, the transition function $\delta$ is extended to a function $\delta \colon Q \times \Sigma^* \to Q$
in the usual way and we define $\A(x) = \delta(q_0,x)$ for $x \in \Sigma^*$.
The language accepted by $\A$ is denoted by $L(\A)$.

The {\em Myhill-Nerode congruence} $\sim_L$ of a language $L \subseteq \Sigma^*$ is the equivalence relation on $\Sigma^*$
defined by $x \sim_L y$ if and only if 
\[
	\forall z \in \Sigma^* : xz \in L \iff yz \in L,
\]
which is a {\em right congruence} on $\Sigma^*$, i.e. $x \sim_L y$ implies $xz \sim_L yz$ for all $x,y,z \in \Sigma^*$.
For a word $x \in \Sigma^*$ the left quotient $x^{-1}L$ is $\{ z \in \Sigma^* : x z \in L \}$. Thus,
$x \sim_L y$ if and only if  $x^{-1}L = y^{-1}L$.
If $\A$ is a deterministic automaton for a language $L \subseteq \Sigma^*$,
then $\A(x) = \A(y)$ implies $x \sim_L y$.
Furthermore, $L$ is recognized by the deterministic automaton $\A = (Q,\Sigma,q_0,\delta,F)$
with state set $Q = \Sigma^*/{\sim_L}$, transition function $\delta([x]_{\sim_L},a) = [xa]_{\sim_L}$,
initial state $q_0 = [\varepsilon]_{\sim_L}$ and final states $F = \{ [x]_{\sim_L} : x \in L \}$,
which is the {\em minimal deterministic automaton} for $L$ (up to isomorphism).

For an NFA $\A$ we denote with $\A\det$ the corresponding deterministic power set automaton 
(restricted to those states that are reachable from the initial state) and with
$\A\rev$ the NFA obtained from $\A$ by reversing
all transitions and swapping the set of initial states and the set of final states. Moreover, we define
$\A\revdet = (\A\rev)\det$. Thus, 
$L(\A\rev) = L(\A\revdet) = L(\A)\rev$.
If an NFA $\A$ has $m$ states, then both $\A\det$ and $\A\revdet$ have at most $2^m$ states.

A language $L \subseteq \Sigma^*$ is {\em recognized} by a monoid $M$,
if there exists a homomorphism $h \colon \Sigma^* \to M$ and a set $F \subseteq M$ such that $h^{-1}(F) = L$.
The {\em syntactic congruence} $\equiv_L$ of $L$ is defined by $x \equiv_L y$ if and only if 
\[
	\forall u,v \in \Sigma^* : uxv \in L \iff uyv \in L.
\]
It refines $\sim_L$, i.e., $x \equiv_L y$ implies $x \sim_L y$.
The {\em syntactic monoid} of a language $L$ is the quotient monoid $\Sigma^* /{\equiv_L}$
and the mapping $h \colon \Sigma^* \to \Sigma^* /{\equiv_L}$, $h(x) = [x]_{\equiv_L}$ is the {\em syntactic homomorphism} of $L$.
It is known that a language is regular if and only if its syntactic monoid is finite.

\subsection{Streaming algorithms.}

A data stream is just a finite sequence of data values. We make the assumption that these data values are from 
a finite set $\Sigma$. Thus, a data stream is a finite word $w = a_1 a_2 \cdots a_m \in \Sigma^*$. 
A streaming algorithm reads the symbols of a data stream from left to right. At time instant 
$t$ the algorithm has only access to the symbol $a_t$ and the internal storage, which is encoded
by a bit string. The goal of the streaming algorithm is to compute a certain function $f \colon \Sigma^* \to A$ 
into some domain $A$, which means that at time instant $t$ the streaming algorithm outputs the value $f(a_1 a_2 \cdots a_t)$.
In this paper, we only consider the Boolean case $A = \{0,1\}$; in other words, the streaming algorithm tests
membership of a fixed language.
Furthermore, we abstract away from the actual computation and only analyze the space requirement. 
Formally, a {\em streaming algorithm} over $\Sigma$ is a deterministic (possibly infinite) automaton $\A = (S,\Sigma,s_0,\delta,F)$,
where the states are encoded by bit strings. We describe this encoding by
an injective function $\mathrm{enc} \colon S \to \{0,1\}^*$.
The {\em space function} $\mathrm{space}(\A,\cdot) \colon \Sigma^* \to \N$ specifies the space used by $\A$ on a certain input:
For $w \in \Sigma^*$ let 
$\mathrm{space}(\A,w) = \max \{ |\mathrm{enc}(\A(u))| : u \in \mathrm{Pref}(w)\}$.
%If $s_0 \xrightarrow{a_1} s_1 \xrightarrow{a_2} \cdots \xrightarrow{a_m} s_m$
%is the run of $\A$ on $w = a_1 a_2 \cdots a_m \in \Sigma^*$, then $\mathrm{space}(\A,w) = \max \{ |\mathrm{enc}(s_i)| : {0 \le i \le m} \}$.
We also say that $\A$ is a {\em streaming algorithm for} the accepted language $L(\A)$.

\section{Sliding window streaming models}

In the above streaming model, the output value of the streaming algorithm at time $t$ depends on the whole past $a_1 a_2 \cdots a_t$ 
of the data stream. However, in many practical applications one is only interested in the relevant part of the past. 
Two formalizations of ``relevant past'' can be found in the literature:
\begin{itemize}
	\item Only the suffix of $a_1 a_2 \cdots a_t$ of length $n$ is relevant. Here, $n$ is a fixed constant. %and the length-$n$ suffix of $a_1 a_2 \cdots a_t$.
	      This streaming model is called the {\em fixed-size sliding window model}. 
	\item The relevant suffix of $a_1 a_2 \cdots a_t$ is determined by an adversary. In this model, %there is an active window and 
	      at every time instant the adversary can either remove the first symbol from the active window (expiration of a data value),
	      or add a new symbol at the right end (arrival of a new data value).
	      This streaming model is also called the {\em variable-size sliding window model}. 
\end{itemize}
In the following two paragraphs, we formally define these two models.

\subsection{Fixed-size sliding windows.}

Given a word $w = a_1 a_2 \cdots a_m \in \Sigma^*$ and a window length $n \ge 0$, we define $\last_n(w) \in \Sigma^n$ by
\[
	\last_n(w) =
	\begin{cases}
		a_{m-n+1} a_{m-n+2} \cdots a_m, & \text{if } n \le m, \\
		a^{n-m} a_1 \cdots a_m,         & \text{if } n > m,   
	\end{cases}
\]
which is called the {\em active window}.
Here $a \in \Sigma$ is an arbitrary symbol, which fills the initial window.
A sequence $\A = (\A_n)_{n \ge 0}$ is a {\em fixed-size sliding window algorithm} for a language $L \subseteq \Sigma^*$
if each $\A_n$ is a streaming algorithm for $\{ w \in \Sigma^* : \last_n(w) \in L \}$.
Its {\em space complexity} is the function $f_{\A} \colon \N \to \N \cup \{\infty\}$ where $f_{\A}(n)$ is the 
maximum encoding length of a state in $\A_n$. 

Note that for every language $L$ and every $n$ the language $\{ w \in \Sigma^* : \last_n(w) \in L \}$ is regular,
which ensures that $\A_n$ can be chosen to be a DFA and hence $f_{\A}(n) < \infty$ for all $n \ge 0$.
The trivial fixed-size sliding window algorithm for $L$ is the sequence 
$\B = (\B_n)_{n \ge 0}$, where $\B_n$ is the DFA with state set $\Sigma^n$ and transitions
$au \xrightarrow{b} ub$ for $a,b \in \Sigma$, $u \in \Sigma^{n-1}$.
States of $\B_n$ can be encoded with $\mathcal{O}(\log |\Sigma| \cdot n)$ bits.
By minimizing each $\B_n$, we obtain an {\em optimal fixed-size sliding window algorithm} 
$\A$ for $L$. Finally, we define $F_L(n) = f_{\A}(n)$.
Thus, $F_L$ is the space complexity of an optimal fixed-size sliding window algorithm for $L$.
Notice that $F_L$ is not necessarily monotonic. For instance, take $L = \{ au \colon u \in \{a,b\}^*, |u| \text{ odd} \}$. 
Then, we have $F_L(2n) \in \Theta(n)$ and  $F_L(2n+1) \in O(1)$.
The above trivial algorithm $\B$ yields $F_L(n) \in \mathcal{O}(n)$ for every language $L$.

Note that the fixed-size sliding window is a {\em non-uniform} model: for every window size we have
a separate streaming algorithm and these algorithms do not have to follow a common pattern.
Working with a non-uniform model makes lower bounds stronger.
In contrast, the variable-size sliding window model that we discuss next is a uniform model in the 
sense that there is a single streaming algorithm that works for every window length.

\subsection{Variable-size sliding windows.} 
For an alphabet $\Sigma$ we define the extended alphabet $\overline \Sigma = \Sigma \cup \{\downarrow\}$.
In the variable-size model the {\em active window} $\wnd(u) \in \Sigma^*$ for a stream $u \in \overline \Sigma^*$ is defined by
\begin{itemize}
	\item $\wnd(\varepsilon) = \varepsilon$
	\item $\wnd(ua) = \wnd(u) \cdot a$ for $a \in \Sigma$
	\item $\wnd(u \! \downarrow) = \varepsilon$ if $\wnd(u) = \varepsilon$
	\item $\wnd(u \! \downarrow) = v$ if $\wnd(u) = av$ for $a \in \Sigma$
\end{itemize}
A {\em variable-size sliding window algorithm} for a language $L \subseteq \Sigma^*$
is a streaming algorithm $\A$ for $\{ w \in \overline \Sigma^* : \wnd(w) \in L \}$.
Its {\em space complexity} is the function $v_{\A}\colon \N \to \N \cup \{\infty\}$ mapping each window length $n$ to the maximum number of bits used by $\A$
on inputs producing an active window of size at most $n$. %where the maximum window length is $n$.
Formally, it is the function
\[
	v_{\A}(n) = \max \{ \mathrm{space}(\A,u) : u \in \overline \Sigma^*, 
	|\wnd(v)| \le n \text{ for all } v \in \mathrm{Pref}(u) \},
\]
which is a monotonic function.\footnote{The definition of $v_\A(n)$ slightly deviates from the one given in \cite{GanardiHL16},
	namely $v'_{\A}(n) = \max \{ |\mathrm{enc}(\A(u))| : u \in \overline \Sigma^*, |\wnd(u)| = n \}$. 
	One easily sees that $v_\A(n) = \max_{k \le n} v'_{\A}(k)$ and hence $v_\A(n) = v'_\A(n)$
	for monotonic functions $v'_\A(n)$.}

\begin{lem}
	\label{lem:optimal-vs}
	For every language $L \subseteq \Sigma^*$ there exists a variable-size sliding window algorithm $\A$
	such that $v_{\A}(n) \leq v_{\B}(n)$ for every variable-size sliding window algorithm $\B$ for $L$ and every $n$.
\end{lem}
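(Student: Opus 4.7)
The plan is to construct the minimal deterministic automaton $\A_L$ for the streaming language $L' = \{w \in \overline\Sigma^* : \wnd(w) \in L\}$ (possibly infinite, since $L$ need not even be regular) and to equip its state set with a carefully chosen bit-string encoding so that the resulting $v_\A$ is pointwise minimal.

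First I would introduce a refinement of reachability. Call a stream $u \in \overline\Sigma^*$ \emph{$n$-bounded} if $|\wnd(v)| \leq n$ for every prefix $v$ of $u$, and let $S_n$ be the set of states of $\A_L$ reached from the initial state by some $n$-bounded stream. Then $S_0 \subseteq S_1 \subseteq \cdots$ and $\bigcup_n S_n$ is the reachable state set. The key observation is that each $S_n$ is \emph{finite}: every $n$-bounded stream $u$ is $\sim_{L'}$-equivalent to its active window $\wnd(u) \in \Sigma^{\leq n}$, so $k_n := |S_n| \leq \sum_{i=0}^{n} |\Sigma|^i$.

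Next, since $S_n \setminus S_{n-1}$ is finite for every $n$, I can enumerate the reachable states as $s_1, s_2, \ldots$ respecting first appearance, so that $S_n = \{s_1, \ldots, s_{k_n}\}$ for all $n \geq 0$. Encoding $s_i$ as the $i$-th bit string in length-lexicographic order gives an injective encoding with $|\mathrm{enc}(s_i)| \leq \log i \leq \log k_n$ whenever $s_i \in S_n$, by the remark at the beginning of the preliminaries. Taking $\A = \A_L$ with this encoding yields a variable-size sliding window algorithm for $L$ with $v_\A(n) \leq \log k_n$ for every $n$.

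The remaining step is the matching lower bound $v_\B(n) \geq \log k_n$ for an arbitrary variable-size sliding window algorithm $\B$ for $L$. Since $\B$ must decide $L'$ correctly, $\B(u) = \B(u')$ implies $u \sim_{L'} u'$, so the assignment $\B(u) \mapsto \A_L(u)$ is a well-defined surjection from the internal states of $\B$ reached on $n$-bounded streams onto $S_n$; in particular $\B$ visits at least $k_n$ distinct internal states on such streams. Injectively encoding $k_n$ distinct states into $\{0,1\}^*$ requires some bit string of length at least $\log k_n$, because $\{0,1\}^{\leq m}$ contains only $2^{m+1}-1$ strings. Hence $v_\B(n) \geq \log k_n \geq v_\A(n)$, which gives pointwise optimality. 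I do not foresee any serious obstacle; the only points requiring care are the finiteness of $S_n$ and the consistency of the first-appearance enumeration, both of which follow immediately from the fact that $n$-bounded streams are classified (up to $\sim_{L'}$) by their active windows in $\Sigma^{\leq n}$.
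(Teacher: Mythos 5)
Your proposal is correct and follows essentially the same route as the paper: take the minimal deterministic automaton for $\{w \in \overline\Sigma^* : \wnd(w) \in L\}$, use the fact that the active window determines the state to see that the states reachable within window bound $n$ form a finite set $S_n$, encode states in order of first appearance via the length-lexicographic enumeration of $\{0,1\}^*$, and for the lower bound observe that minimality forces any competing algorithm $\B$ to use at least $|S_n|$ states of encoding length at most $v_\B(n)$. The only cosmetic difference is that you index reachability by $n$-bounded streams rather than by input words of length at most $n$; these yield the same sets $S_n$, so nothing changes.
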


\begin{proof}
	Let $\A = (S,\overline\Sigma,s_0,\delta,F)$ be the minimal deterministic automaton for $\{ w \in \overline \Sigma^* : \wnd(w) \in L \}$.
	The state set $S$ can be finite or infinite.  
	It has the property that the active window determines the current state,
	i.e. $\A(x) = \A(\wnd(x))$ for all $x \in \overline \Sigma^*$.
	For $n \ge 0$ let $S_n \subseteq S$ be the set of states reachable in $\A$ from the initial state $s_0$
	by a word over $\overline \Sigma$ of length at most $n$.
	By the aforementioned property all words $x \in \overline \Sigma^*$ with $|\wnd(x)| \le n$
	lead to a state $\A(x) = \A(\wnd(x)) \in S_n$. 
	Now one can define an encoding such that the space complexity of $\A$ is $\log |S_n|$:
	Define an enumeration of $S$ by starting with $s_0$, then listing (in any order) all states  from $S_1 \setminus S_0$,
	followed by the states from $S_2 \setminus S_1$, and so one.
	Then we encode the $i$-th state from this list by the $i$-th bit string in length-lexicographical order.
							
	Now let $\B$ be any variable-size sliding window algorithm for $L$.
	Let $T_n$ be the set of states reachable in $\B$ from the initial state by a word of length at most $n$.
	By reading a word of length at most $n$, the window length never exceeds $n$.
	Therefore, the encoding length of any $t \in T_n$ is bounded by $v_\B(n)$, which implies $|T_n| \leq 2^{v_\B(n)+1}-1$.
	%By our definition of $\log x$
	We get  $\log |T_n| \le v_{\B}(n)$.
	Since $\A$ is minimal we have $|S_n| \le |T_n|$ and therefore $v_{\A}(n) \leq v_{\B}(n)$.
\end{proof}
We define $V_L(n) = v_{\A}(n)$, where $\A$ is a space {\em optimal variable-size sliding window algorithm} for $L$
from \cref{lem:optimal-vs}.
Since any algorithm in the variable-size model yields an algorithm in the fixed-size model, we have $F_L(n) \le V_L(n)$.

\subsection{Space complexity classes and closure properties.}
For a function $s \colon \N \to \N$ we define the classes $\F(s)$ \label{def-fixed-size} and $\V(s)$
of all languages $L \subseteq \Sigma^*$ which have a fixed-size (variable-size, respectively) sliding window algorithm
with space complexity bounded by $s(n)$. For a class $\mathcal{C}$ of functions 
(here, it will be always an $\mathcal{O}$-class, $\Theta$-class  or $o$-class) we define 
$\mathsf{X}(\mathcal{C}) = \bigcup_{s \in \mathcal{C}}\mathsf{X}(s)$ for $\mathsf{X} \in \{ \F,\V \}$.

%\TODO (suggestion: define $\mathsf{X}(s)$ as the class of languages with complexity $\mathcal{O}(s)$
%to avoid the clumsy notation $\mathsf{X}(\mathcal{O}(\log n))$.
%note that $\mathsf{X}(o(s))$ still forms the same class because $o(s)$ is closed under $O(\cdot)$.)

Several times we will make use of the simple fact that for both the fixed-size and the variable-size model,
space classes form a Boolean algebra:
\begin{lem}
	\label{lem:boolean}
	Let $X \in \{F,V\}$.
	If $L \subseteq \Sigma^*$ is a Boolean combination of languages $L_1, \ldots, L_k \subseteq \Sigma^*$,
	then $X_L(n) \le 2 \sum_{i=1}^k X_{L_i}(n)$.
	In particular, for any function $s(n)$, the classes $\F(\mathcal{O}(s))$ and $\V(\mathcal{O}(s))$ form  Boolean algebras.
\end{lem}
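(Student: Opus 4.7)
The plan is to run optimal sliding-window algorithms $\A_1, \dots, \A_k$ for $L_1, \dots, L_k$ in parallel and combine their acceptance decisions through the Boolean function $B$ for which $L = B(L_1, \dots, L_k)$. The product algorithm $\A$ has state set $S_1 \times \dots \times S_k$, performs transitions coordinatewise, and accepts a tuple $(q_1, \dots, q_k)$ iff $B$ evaluates to $1$ on the acceptance indicators of its components. This construction is purely combinatorial and respects the sliding-window structure in both models: in the fixed-size setting it is applied separately to the optimal algorithms for each window length $n$, while in the variable-size setting a single product algorithm accepts $\{w \in \overline\Sigma^* : \wnd(w) \in L\}$, since each $\A_i$ accepts $\{w \in \overline\Sigma^* : \wnd(w) \in L_i\}$ and $\wnd(w) \in L$ iff $B(\mathbf{1}_{L_1}(\wnd(w)), \dots, \mathbf{1}_{L_k}(\wnd(w))) = 1$.

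It remains to bound the encoding length of a product state. At any prefix of the input whose active window has length at most $n$, each $\A_i$ is in a state encoded by at most $X_{L_i}(n)$ bits, by definition of $X_{L_i}$. Concatenating the $k$ component encodings with the block code from the preliminaries---which encodes a tuple of nonempty bit strings of total length $N$ by a single bit string of length $2N$---yields an encoding of the product state of length at most $2 \sum_{i=1}^k X_{L_i}(n)$. Since $\A$ is a sliding-window algorithm for $L$, this gives $X_L(n) \le 2 \sum_{i=1}^k X_{L_i}(n)$, as claimed.

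The only minor point to be careful about is that the block code is stated for nonempty bit strings; a component whose state is encoded by the empty string can be padded to one bit without affecting the asymptotic bound. The Boolean-algebra consequence is then immediate: closure under union and intersection follows directly from the inequality (since a sum of $\mathcal{O}(s)$-functions is $\mathcal{O}(s)$), and closure under complement is trivial because swapping the accepting and non-accepting states does not change the space requirement. I do not expect any real obstacle here---the lemma is essentially the composition of a product construction with the block-coded tuple encoding.
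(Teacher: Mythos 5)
Your proposal is correct and follows essentially the same route as the paper's proof: run the $k$ optimal algorithms in parallel, encode the tuple of states via the block code (which accounts for the factor $2$), and let the Boolean combination of the component outputs determine acceptance. You simply spell out details (coordinatewise product, the empty-encoding edge case) that the paper leaves implicit.
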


\begin{proof}
	Run the sliding window-algorithms for $L_1, \dots, L_k$ in parallel and encode 
	the tuple of $k$ states by a single bit string.
	The output bits of the individual algorithms determine the output of the total algorithm.
\end{proof}
A {\em Mealy machine} $\mathcal{M} = (Q,\Sigma,\Gamma,q_0,\delta)$ consists of a finite set of states $Q$,
an input alphabet $\Sigma$, an output alphabet $\Gamma$, an initial state $q_0 \in Q$ and the transition function
$\delta \colon Q \times \Sigma \to Q \times \Gamma$.
For every $q \in Q$ the machine computes a length-preserving transduction $\tau_q \colon \Sigma^* \to \Gamma^*$
in the usual way: $\tau_q(\varepsilon) = \varepsilon$ and if $\delta(p,a) = (q,b)$ then $\tau_p(au) = b \, \tau_q(u)$.
We call $\tau_{q_0}\rev$ the {\em $\leftarrow$-transduction} computed by $\mathcal{M}$.
Thus, a $\leftarrow$-transduction is computed by a Mealy machine that works on an input word 
from right to left.
If $L$ is regular and $\tau$ is a $\leftarrow$-transduction, then $\tau(L)$ and $\tau^{-1}(L)$ are regular as well.
A $\leftarrow$-transduction $\tau$ is called a {\em $\leftarrow$-reduction}
from $K \subseteq \Sigma^*$ to $L \subseteq \Gamma^*$ if $x \in K$ if and only if $\tau(x) \in L$ for all $x \in \Sigma^*$.

\begin{lem}
	\label{lem:left-reduction}
	Let $X \in \{F,V\}$.
	If $K$ is $\leftarrow$-reducible to $L$ via a Mealy machine with $d$ states,
	then $X_K(n) \le 2d \cdot X_L(n)$.
	In particular, for any function $s(n)$ the classes $\F(\mathcal{O}(s))$ and $\V(\mathcal{O}(s))$ are closed under $\leftarrow$-reductions.
\end{lem}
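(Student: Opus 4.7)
The plan is to run $d$ copies of an optimal sliding window algorithm $\A$ for $L$ in parallel, one for each state of the Mealy machine $\mathcal{M} = (Q,\Sigma,\Gamma,q_0,\delta)$ with $|Q|=d$ that witnesses the reduction $\tau = \tau_{q_0}\rev$. For each $q \in Q$ I maintain a state $s_q$ of $\A$ and aim to preserve the invariant that if $w$ is the current outer window, then $s_q$ is a state of $\A$ whose active window equals $\tau_q\rev(w)$. In particular, $s_{q_0}$ will accept if and only if $\tau_{q_0}\rev(w) = \tau(w) \in L$, equivalently $w \in K$, which gives the desired output.

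Maintaining the invariant reduces to computing how $\tau_q\rev(w)$ changes under one window update. Upon arrival of $a \in \Sigma$, writing $(q',e) = \delta(q,a)$, a one-step unfolding of $\mathcal{M}$ on the reversed window yields $\tau_q\rev(wa) = \tau_{q'}\rev(w) \cdot e$. Hence, if in parallel over all $q \in Q$ I replace $s_q$ by the $\A$-successor of $s_{q'}$ on input $e$, the new $s_q$ correctly represents $\tau_q\rev(wa)$. In the variable-size model, an expiration $\downarrow$ of the leftmost symbol of $w$ strips precisely the first symbol of $\tau_q\rev(w)$ (again by unfolding $\mathcal{M}$), so I simply forward $\downarrow$ to every copy of $\A$. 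In the fixed-size model the shift-out of the oldest window symbol is handled internally by $\A_n$; the only subtlety is initialization, which is done once at startup by seeding each $s_q$ with the state of $\A_n$ after reading $\tau_q\rev(a^n)$, the $\tau_q\rev$-image of the padded initial window.

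Since $\tau_q$ is length-preserving, each parallel copy of $\A$ has, at every time, an active window of the same length as the outer window. Thus each $s_q$ uses at most $X_L(n)$ bits when the outer window has length at most $n$, and packaging the $d$-tuple of states with the block code from the preliminaries gives a total encoding length of at most $2d \cdot X_L(n)$. The main obstacle, as I see it, is the bookkeeping for the simultaneous parallel update: one must take care that the new $s_q$ is computed from the \emph{old} $s_{q'}$, and one must separately verify the identity $\tau_q\rev(wa) = \tau_{q'}\rev(w) \cdot e$ from the Mealy-machine semantics on the reversed stream rather than from the forward-simulation reading order.
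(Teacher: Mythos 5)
Your proposal is correct and follows essentially the same route as the paper's proof: maintaining one copy of an optimal algorithm for $L$ per Mealy state, using the unfolding identity $\tau_q\rev(wa)=\tau_{q'}\rev(w)\,e$ for arrivals, forwarding $\downarrow$ to all copies, and invoking length-preservation plus the block encoding for the $2d\cdot X_L(n)$ bound. The only (harmless) difference is that you spell out the fixed-size initialization, which the paper dismisses as analogous.
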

	
\begin{proof}
	We only give the proof for the variable-size model; analogous arguments hold for the fixed-size model.
	Let $\A$ be an optimal variable-size sliding window algorithm for $L$. %with (optimal) space complexity $V_L(n)$.
	Recall from the proof of \cref{lem:optimal-vs} that $\A(w) = \A(\wnd(w))$ for all streams $w \in \overline \Sigma^*$.
	Let $\mathcal{M} = (Q,\Sigma,\Gamma,q_0,\delta)$ be a Mealy machine
	such that $\tau\rev_{q_0}$ is a $\leftarrow$-reduction from $K$ to $L$.
	Let $Q = \{q_0, \dots, q_{d-1} \}$ be the state set of $\mathcal{M}$.
	
	We claim that there exists a sliding window algorithm $\B$ which
	given an input stream $w \in \overline \Sigma^*$ maintains an encoding of the tuple
	\[ \B(w) = (\A(\tau_{q_0}\rev(\wnd(w))), \ldots, \A(\tau_{q_{d-1}}\rev(\wnd(w)))). \]
	\begin{itemize}
		\item On input $\downarrow$ we can compute
		      \[
		      	\B(w \! \downarrow) = (\A(\tau_{q_0}\rev(\wnd(w)) \! \downarrow), \ldots, \A(\tau_{q_{d-1}}\rev(\wnd(w)) \! \downarrow))
		      \]
		      from $\B(w)$.
		\item Given an input symbol $a \in \Sigma$, compute $\delta(q_i,a) = (p_i,b_i)$ for all $0 \le i \le d-1$.
		      Since
		      \[
		      	\tau_{q_i}\rev(\wnd(wa)) = \tau_{p_i}\rev(\wnd(w)) \; b_i
		      \]
		      we can compute
		      \[
		      	\B(wa) = (\A(\tau_{p_0}\rev(\wnd(w)) \; b_0), \ldots, \A(\tau_{p_{d-1}}\rev(\wnd(w)) \; b_{d-1})).
		      \]
		\item The active window belongs to $L$ if and only if $\A(\tau_{q_0}\rev(\wnd(w)))$ is final in $\A$.
	\end{itemize}
	The above variable-size sliding window algorithm has space complexity $2 d \cdot V_L(n)$:
	If $w =  \overline \Sigma^*$ is such that $|\wnd(u)| \le n$ for all $u \in \mathrm{Pref}(w)$,
	then also $|\tau_{q_i}\rev(\wnd(u))| \le n$ for all $u \in \mathrm{Pref}(w)$, $0 \leq i \leq d-1$. Hence,
	$|\A(\tau_{q_i}\rev(\wnd(u)))| \le V_L(n)$ for all $u \in \mathrm{Pref}(w)$, $0 \leq i \leq d-1$.
	Thus, every tuple $\B(a_1 \cdots a_t)$ can be encoded with at most $2 d \cdot V_L(n)$ bits.		
\end{proof}

\subsection{Space trichotomy for regular languages.}

In \cite{GanardiHL16} we proved a trichotomy theorem on sliding window algorithms for regular languages.
We identified a partition of the class of regular languages into three classes
which completely characterize the sliding window space complexity in both the fixed-size and the variable-size model.
One can easily see that the syntactic monoid of a language does not determine its space complexity:
$\mathbb{Z}_2$ is the syntactic monoid of both languages $K = $ ``even length'' and $L = $ ``even number of $a$'s''
over $\{a,b\}$ but $\V_{K}(n) = \mathcal{O}(\log n)$ whereas $\F_{L}(n) = \Theta(n)$.
The definition of the mentioned three classes is given in terms of the
syntactic homomorphism and the left Cayley graph
of the syntactic monoid of the regular language, see \cite{GanardiHL16}.

For $\mathsf{X} \in \{ \F, \V\}$ and a class $\mathcal{C}$ of functions 
we abbreviate $\mathsf{X}(\mathcal{C}) \cap \mathsf{REG}$ by $\mathsf{X}_{\mathsf{reg}}(\mathcal{C})$,
where $\mathsf{REG}$ is the class of all regular languages.

\begin{thm}[\cite{GanardiHL16}]\label{thm:trichotomy}
	The following holds:
	\begin{itemize}
		\item $\Vreg(o(n)) = \Freg(o(n)) = \Freg(\mathcal{O}(\log n)) =  \Vreg(\mathcal{O}(\log n))$
		\item $\Freg(o(\log n)) = \Freg(\mathcal{O}(1))$
		\item $\Vreg(o(\log n)) = \Vreg(\mathcal{O}(1)) =$ all trivial languages (empty and universal languages)
	\end{itemize}  
\end{thm}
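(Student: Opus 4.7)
\emph{Proof plan.} My plan is to translate each of the claimed space bounds into a counting statement about Myhill--Nerode classes and then prove growth dichotomies for these counts from the structure of the syntactic monoid $M = \Sigma^*/{\equiv_L}$. In the fixed-size model, each $L_n = \{w \in \Sigma^* : \last_n(w) \in L\}$ is regular, and since the minimal DFA for $L_n$ is optimal one has $F_L(n) = \Theta(\log s_n)$, where $s_n$ is the number of reachable $\sim_{L_n}$-classes. By \cref{lem:optimal-vs}, an analogous identity $V_L(n) = \Theta(\log |S_n|)$ holds in the variable-size model, where $S_n$ is the set of states of the minimal variable-size automaton reachable by streams over $\overline\Sigma$ of length at most $n$. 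The whole theorem then reduces to understanding how $s_n$ and $|S_n|$ depend on $n$ and on the left Cayley graph of $M$ under $h(\Sigma)$.

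I would first handle case~(iii). For any non-trivial $L$, picking $u \in L$ and $v \in \Sigma^* \setminus L$, one exhibits for each $n$ a family of $\Omega(n)$ streams producing windows of length at most $n$ that are pairwise Myhill--Nerode inequivalent for the variable-size target language: by using $\downarrow$ to peel off prefixes of a carefully chosen long window, one forces the memory to encode which interior symbols remain. This gives $|S_n| = \Omega(n)$ and hence $V_L(n) = \Omega(\log n)$; combined with the obvious $\mathcal{O}(1)$-space algorithm for empty and universal languages, case~(iii) follows. A closely related argument on $s_n$ yields case~(ii): if $s_n$ is not eventually bounded by $|M|$, then a pigeonhole witness inside the window creates a genuine dependence on a specific position, and shifting this witness produces $\Omega(n)$ classes of $\sim_{L_n}$ for infinitely many $n$, so $F_L(n) = o(\log n)$ already forces $F_L(n) = \mathcal{O}(1)$.

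The main obstacle is case~(i), in which all four space classes must be shown equal. My plan is to identify a combinatorial criterion on the left Cayley graph of $M$ --- informally, that synchronizing suffixes drive every initial configuration into a small recurrent subgraph --- equivalent to $s_n$ being polynomially bounded in $n$ (as opposed to $2^{\Omega(n)}$). Granted this criterion, I would design an $\mathcal{O}(\log n)$-space variable-size algorithm that, in place of the full window, maintains the current window length together with the positions of a bounded number of ``transition-critical'' window symbols, each encodable in $\log n$ bits; updates on symbol arrivals and on $\downarrow$ events then reduce to simple arithmetic on these positions. This places $L$ in $\Vreg(\mathcal{O}(\log n))$, and together with the obvious chain $\Vreg(\mathcal{O}(\log n)) \subseteq \Vreg(o(n)) \subseteq \Freg(o(n)) \subseteq \Freg(\mathcal{O}(\log n))$ and the non-existence of intermediate growth rates for $s_n$, this closes the equalities in~(i). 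Establishing the Cayley-graph criterion itself --- and thus pinning down exactly when $s_n = \mathrm{poly}(n)$ versus $s_n = 2^{\Omega(n)}$ --- is the technical heart of the argument.
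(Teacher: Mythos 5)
Your overall strategy---translate each space bound into a count of reachable Myhill--Nerode classes ($F_L(n)=\log s_n$, $V_L(n)=\log|S_n|$) and then prove a growth dichotomy for these counts from the left Cayley graph of the syntactic monoid---is the strategy of the original proof in \cite{GanardiHL16}, and the easy parts are sound. In particular, your argument for case~(iii) is essentially the one the paper gives inside the proof of \cref{lem:psi-L}: for non-trivial $L$ the state of an optimal variable-size algorithm must determine the current window length, which already yields $|S_n|\ge n+1$ and hence $V_L(n)=\Omega(\log n)$. The problem is that the proposal stops exactly where the theorem becomes non-trivial. The entire content of cases~(i) and~(ii) is (a) the exclusion of intermediate growth rates for $s_n$ and $|S_n|$ and (b) the construction of an $\mathcal{O}(\log n)$-space algorithm in the subexponential case, and you defer both: the ``combinatorial criterion on the left Cayley graph'' is never stated, the claim that its failure forces $2^{\Omega(n)}$ many classes (rather than, say, $2^{\Theta(\sqrt n)}$) is never argued, and the ``transition-critical positions'' algorithm is only named, not defined. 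As written this is an outline with the technical heart missing, not a proof.

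Two further concrete gaps. First, you call the chain $\Vreg(\mathcal{O}(\log n)) \subseteq \Vreg(o(n)) \subseteq \Freg(o(n)) \subseteq \Freg(\mathcal{O}(\log n))$ ``obvious'', but only the first two inclusions are free ($F_L(n)\le V_L(n)$); the last one is itself one of the four equalities of case~(i) and uses regularity essentially, so it cannot be assumed. Second, case~(i) requires connecting the non-uniform fixed-size counts $s_n$ to the uniform variable-size count $|S_n|$ (you need: $s_n$ subexponential for all $n$ implies $|S_n|$ polynomial), and your plan analyzes the two counts in parallel without ever relating them. For comparison, the paper's own reproof of the variable-size part avoids the Cayley-graph analysis altogether: it proves $V_L(n)=\log|\psi_L(\Sigma^{\le n})|$, where $\psi_L$ records the $\sim_L$-classes of all suffixes of the window (\cref{lem:psi-L}), observes that $\psi_L(\Sigma^*)$ is regular because $\psi_L$ is a $\leftarrow$-transduction (\cref{lem:psi-regular}), and then imports the known fact that a regular language has growth either $\Theta(n^d)$ or $\Omega(r^n)$. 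That single reduction supplies exactly the intermediate-rate exclusion your plan leaves open; to complete your argument along your own lines you would have to reprove an analogue of that dichotomy for the reachable-state counts directly.
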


Strictly speaking, \cite[Theorem~7]{GanardiHL16} only claims $\V_L(n) \notin O(1)$ for all languages $\emptyset \subsetneq L \subsetneq \Sigma^*$.
However, the proof of \cite[Theorem~7]{GanardiHL16} does imply the stronger bound $\V_L(n) \notin o(\log n)$.
This statement will also be reproved in the following section.

Let us comment on a subtle point. When making statements about the space complexity functions
$V_L(n)$ and $F_L(n)$ it is in general important to fix the underlying alphabet. For instance
according to point (iii) from Theorem~\ref{thm:trichotomy} we have $V_{L}(n) \in \mathcal{O}(1)$ for the language $L = \{a\}^*$
if the underlying alphabet is $\{a\}$. On the other hand, if the underlying alphabet is $\{a,b\}$ then 
$V_L(n) \not \in \mathcal{O}(1)$ (in fact, $L$ then belongs to $\Vreg(\Theta(\log n))$).

\section{Variable-size space complexity and language growth}

In this section we reprove the space trichotomy (\cref{thm:trichotomy}) for the variable-size model.
For this we relate  the function $V_L(n)$ to the growth of a certain derived language and then use
the well known results about the growth of regular languages.  
We need the following definition.
For a language $L \subseteq \Sigma^*$ define the mapping $\psi_L \colon \Sigma^*  \to (\Sigma^*/{\sim_L})^*$ by:
\[ \psi_L(a_1 \cdots a_n)  = [a_1 \cdots a_n]_{\sim_L} [a_2 \cdots a_n]_{\sim_L} \cdots [a_n]_{\sim_L}. \]
Notice that $\psi_L$ is a length-preserving mapping from $\Sigma^*$ to the set of words over
the alphabet $\Sigma^*/{\sim_L}$.
Although $\Sigma^*/{\sim_L}$ may be infinite (namely for non-regular $L$),
the image $\psi_L(\Sigma^{\le n})$ has at most $|\Sigma|^{n+1}-1$ elements for each $n \ge 0$.

\begin{thm}
	\label{lem:psi-L}
	For every language $\emptyset \subsetneq L \subsetneq \Sigma^*$ 
	we have $V_L(n) = \log |\psi_L(\Sigma^{\le n})|$.
\end{thm}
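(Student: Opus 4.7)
My plan is to reduce to a state-counting problem. By the construction in the proof of \cref{lem:optimal-vs}, $V_L(n) = \log |S_n|$, where $S_n$ is the set of states of the minimal deterministic automaton $\A$ for $\{w \in \overline\Sigma^* : \wnd(w) \in L\}$ reachable by streams of length at most $n$. Since $\A(w) = \A(\wnd(w))$ and every $u \in \Sigma^{\leq n}$ is itself a reachable stream of length $\leq n$, one has $S_n = \{\A(u) : u \in \Sigma^{\leq n}\}$. So it suffices to show that $u \mapsto \psi_L(u)$ factors through $\A$-equivalence to a bijection $S_n \to \psi_L(\Sigma^{\leq n})$, which amounts to proving
\[
	u \sim u' \iff \psi_L(u) = \psi_L(u') \quad \text{for all } u, u' \in \Sigma^*,
\]
where $\sim$ is the Myhill--Nerode congruence of the language $\{w : \wnd(w) \in L\}$.

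For the direction $(\Leftarrow)$, I would introduce an auxiliary deterministic (possibly infinite) automaton $\B$ over $\overline\Sigma$ whose states are finite words over the alphabet $\Sigma^*/{\sim_L}$. On input $a \in \Sigma$ a state $c_1 \cdots c_k$ transitions to $(c_1 \cdot [a])(c_2 \cdot [a]) \cdots (c_k \cdot [a])\,[a]_{\sim_L}$ (right-multiplication by $[a]_{\sim_L}$ is well-defined because $\sim_L$ is a right congruence); on input $\downarrow$ it transitions to $c_2 \cdots c_k$, with $\varepsilon$ mapped to $\varepsilon$. An easy induction on $|w|$ gives $\B(w) = \psi_L(\wnd(w))$, and choosing final states so that acceptance is determined by the first component of the state (and by $\varepsilon \in L$ on the empty state) ensures $L(\B) = \{w : \wnd(w) \in L\}$. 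Hence $\psi_L(u) = \psi_L(u')$ implies $\B(uz) = \B(u'z)$ for every $z$, and thus $u \sim u'$.

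For the direction $(\Rightarrow)$ I would argue by contrapositive and construct a distinguishing $z$. If $|u| = |u'| = n$ with $u = a_1\cdots a_n$, $u' = b_1 \cdots b_n$ and $\psi_L(u) \neq \psi_L(u')$, then $[a_i \cdots a_n]_{\sim_L} \neq [b_i \cdots b_n]_{\sim_L}$ for some position $i$; picking $w \in \Sigma^*$ witnessing this inequivalence and taking $z = \downarrow^{i-1} w$ yields $\wnd(uz) = a_i \cdots a_n\, w$ and $\wnd(u'z) = b_i \cdots b_n\, w$, with different $L$-memberships. If instead $|u| < |u'|$, I take $z$ of the form $\downarrow^{|u|} z'$; then $\wnd(uz) = \wnd(z')$ and $\wnd(u'z) = \wnd(u''z')$, where $u''$ is the (non-empty) suffix of $u'$ of length $|u'|-|u|$, so it suffices to find $z'$ distinguishing $\varepsilon$ from $u''$.

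The main obstacle is this reduced task: under the hypothesis $\emptyset \subsetneq L \subsetneq \Sigma^*$, one must show $\varepsilon \not\sim u''$ for every non-empty $u''$. I would argue by contradiction, assuming $\varepsilon \sim u''$ with $d := |u''| \ge 1$. Streams $z = (a_1\downarrow)(a_2\downarrow)\cdots(a_d\downarrow)\,w$ give $\wnd(\varepsilon z) = w$ and $\wnd(u''z) = a_1 \cdots a_d\, w$, so prepending any length-$d$ word must preserve $L$-membership. Streams $z = w\,\downarrow^{|w|+j}$ for $0 \le j < d$ yield $\wnd(\varepsilon z) = \varepsilon$ while $\wnd(u''z)$ ranges (as $w$ varies over sufficiently long strings) over all words of length $d - j$, forcing every word of length at most $d$ to have the same $L$-membership as $\varepsilon$. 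Combining these two facts via division with remainder on $|w|$, every $w \in \Sigma^*$ must share $L$-membership with $\varepsilon$, contradicting the non-triviality of $L$.
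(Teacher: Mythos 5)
Your proof is correct, and at its core it establishes the same facts as the paper's proof, but it is packaged differently. The paper argues the two inequalities separately: for the upper bound it exhibits the streaming algorithm whose state on input $w$ is $\psi_L(\wnd(w))$ (your auxiliary automaton $\B$ is exactly this algorithm), and for the lower bound it shows that the state of an \emph{arbitrary} algorithm determines $\psi_L$ of the active window. You instead use \cref{lem:optimal-vs} to identify $V_L(n)$ with $\log|S_n|$ for the minimal deterministic automaton of $\{w : \wnd(w)\in L\}$ and then prove that its Myhill--Nerode classes, restricted to $\Sigma^*$, are precisely the fibers of $\psi_L$; this is a cleaner structural framing, but the two implications correspond one-to-one to the paper's two bounds. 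The only genuinely different ingredient is the step separating windows of different lengths, which is exactly where the hypothesis $\emptyset \subsetneq L \subsetneq \Sigma^*$ enters in both proofs. The paper assumes w.l.o.g.\ $\varepsilon \in L$, takes a length-minimal $y \notin L$, reads $y$ followed by repeated $\downarrow$'s, and recovers the window length as the last non-accepting position of the resulting run. You instead show directly that $\varepsilon \not\sim u''$ for every non-empty $u''$, deriving from the streams $(a_1\downarrow)\cdots(a_d\downarrow)w$ and $w\downarrow^{|w|+j}$ that $L$ would otherwise be trivial. Both arguments are valid; yours avoids the minimal-counterexample trick at the cost of a slightly longer case analysis, and it makes explicit that non-triviality of $L$ is equivalent to the automaton's ability to detect the window length.
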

	
\begin{proof}
	We first exhibit a variable-size sliding window algorithm $\A$ with space complexity $\log |\psi_L(\Sigma^{\le n})|$.
	The idea is that on input $w \in \overline \Sigma^*$ the algorithm $\A$ is in state $\A(w) = \psi_L(\wnd(w))$.
	Consider an active window $a_1 \cdots a_n \in \Sigma^*$. Three observations are crucial:
	\begin{itemize}
		\item The state $\psi_L(a_2 \cdots a_n)$ can be obtained from the state $\psi_L(a_1 \cdots a_n)$ by removing 
		      the first $\sim_L$-class  $[a_1 \cdots a_n]_{\sim_L}$.
		\item From the state $\psi_L(a_1 \cdots a_n)$ and a symbol
		      $a \in \Sigma$ one can obtain the state $\psi_L(a_1 \cdots a_n a) = 
		      [a_1 \cdots a_n a]_{\sim_L} [a_2 \cdots a_n a]_{\sim_L} \cdots [a_n a]_{\sim_L} [a]_{\sim_L}$,
		      since $\sim_L$ is a right-congruence.
		\item The first $\sim_L$-class in $\psi_L(a_1 \cdots a_n)$ determines whether $a_1 \cdots a_n \in L$.
	\end{itemize}
	These remarks define a variable-size sliding window algorithm for $L$ with state set $\psi_L(\Sigma^*)$.
	It remains to define the binary encoding of the states, This is done similarly to 
	the proof of \cref{lem:optimal-vs}:
	List $\psi_L(\Sigma^*)$ by starting with $\psi_L(\varepsilon) = \varepsilon$, followed by all states from 
	$\psi_L(\Sigma)$ (in any order), followed by all states from $\psi_L(\Sigma^2)$, and so on. The $i$-th 
	state in this list is encoded by the $i$-th bit string in length-lexicographical order.
	Under this encoding the above  variable-size sliding window algorithm has space complexity $\log |\psi_L(\Sigma^{\le n})|$.
			
	Conversely, consider a variable-size sliding window algorithm $\A$ for $L$
	with space complexity $v_{\A}(n)$.
	We have to show that $v_\A(n) \geq \log |\psi_L(\Sigma^{\le n})|$.
	Let $x = a_1 a_2 \cdots a_m \in \Sigma^*$ be an input word of length $m \leq n$.
	Notice that $|\mathrm{enc}(\A(x))| \leq v(m) \leq v(n)$ by the monotonicity of $v$.
							
	We first show that $\A(x)$ determines $m=|x|$.
	Assume that $\varepsilon \in L$ (the case that $\varepsilon \notin L$ is analog),
	and let $y \notin L$ where $|y|$ is chosen minimally.
	Starting from $\A(x)$ we read $y$ into $\A$, followed by an infinite sequence of $\downarrow$. We obtain a run
	\[
		\A(x) \xrightarrow{y} s_0 \xrightarrow{\downarrow} s_1 \xrightarrow{\downarrow} s_2 \xrightarrow{\downarrow} s_3 \xrightarrow{\downarrow} \dots
	\]
	where $s_m$ is not final and for all $i > m$ the state $s_i$ is final, by minimality of $|y|$.
	Clearly this run determines $m$.
			
	We now show that $\A(x)$ uniquely determines $\psi_L(a_1 \cdots a_m)$.
	By the above argument, we know that $\A(x)$ determines the window length $m$.
	Furthermore, $\A(x)$ determines every equivalence class $[a_k \cdots a_m]_{\sim_L}$ for $1 \le k \le m$: 
	Starting from $\A(x)$ we read $k-1$ times $\downarrow$ into $\A$. Then, the active window is  $a_k \cdots a_m$.
	We can determine the left quotient $(a_k \cdots a_m)^{-1}L = \{ z \in \Sigma^* : a_k \cdots a_m z \in L \}$
	by reading each word $z$ into $\A$ and testing whether $a_k \cdots a_m z \in L$.
	The left quotient in turn determines $[a_k \cdots a_m]_{\sim_L}$.
							
	To sum up, we have shown that every value $\psi_L(x)$ for $x \in \Sigma^{\leq n}$ can be encoded by a bit
	string of length at most $v(n)$, namely $\mathrm{enc}(\A(x))$.
	Since there are $|\psi_L(\Sigma^{\le n})|$ such values, it follows that $2^{v(n)+1}-1 \geq |\psi_L(\Sigma^{\le n})|$,
	which implies $v(n) \geq \log |\psi_L(\Sigma^{\le n})|$.
\end{proof}
Note that \cref{lem:psi-L} does not hold for $L = \emptyset$ or $L = \Sigma^*$. In these cases, we have $V_L(n) = 0$ and
$\log |\psi_L(\Sigma^{\le n})| = \log (n+1)$.

We can use \cref{lem:psi-L} to reprove the space trichotomy for regular languages in the variable-size 
sliding window model. For this, we need the following simple lemma:
	
\begin{lem}
	\label{lem:psi-regular}
	If $L \subseteq \Sigma^*$ is regular, then $\psi_L$ is a $\leftarrow$-transduction.
	In particular, $\psi_L(\Sigma^*)$ and $\psi_L(L)$ are regular.
	Furthermore $\psi_L$ is a $\leftarrow$-reduction from $L$ to $\psi_L(L)$.
\end{lem}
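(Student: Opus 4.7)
The plan is to exhibit an explicit Mealy machine that computes $\psi_L$ as a $\leftarrow$-transduction, and then read off the remaining claims as easy corollaries. Since $L$ is regular, fix its minimal DFA $\A = (Q,\Sigma,q_0,\delta,F)$, recalling that $\delta(q_0,w) = [w]_{\sim_L}$ and that $Q$ is finite. Each letter $a$ induces a transformation $\delta_a \colon Q \to Q$, $q \mapsto \delta(q,a)$, and the set $Q^Q$ of all such transformations is finite. I take $\mathcal{M}$ to have state set $Q^Q$, initial state $\mathrm{id}_Q$, output alphabet $Q$, and transitions: in state $f$ on input $a$ it moves to $f \circ \delta_a$ and outputs $(f \circ \delta_a)(q_0)$.

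A short induction on the length of the input then shows that when $\mathcal{M}$ reads $w\rev = a_n a_{n-1} \cdots a_1$ from left to right, its output is the sequence $\delta(q_0,a_n)\,\delta(q_0,a_{n-1}a_n)\,\cdots\,\delta(q_0,a_1\cdots a_n)$, which is exactly $\psi_L(w)\rev$. Hence $\tau_{q_0}\rev = \psi_L$, so $\psi_L$ is a $\leftarrow$-transduction. The regularity of $\psi_L(\Sigma^*)$ and of $\psi_L(L)$ is then immediate from the closure property noted just before \cref{lem:left-reduction}: direct images of regular languages under $\leftarrow$-transductions are regular.

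For the $\leftarrow$-reduction from $L$ to $\psi_L(L)$, the key observation is that the first letter of $\psi_L(w)$ is $[w]_{\sim_L}$. If $x \in L$, then $\psi_L(x) \in \psi_L(L)$ trivially. Conversely, if $\psi_L(x) = \psi_L(y)$ for some $y \in L$, comparing first letters gives $x \sim_L y$, and specializing the definition of $\sim_L$ to $z = \varepsilon$ yields $x \in L$. The only nontrivial point in the whole argument is bookkeeping: the definition $\tau_{q_0}\rev(w) = \tau_{q_0}(w\rev)\rev$ forces $\mathcal{M}$ to process $w$ starting from the rightmost letter while emitting $\psi_L(w)$ last letter first, which is why the update rule must be $f \circ \delta_a$ rather than $\delta_a \circ f$; I would verify the orientation on a concrete toy input such as $w = abc$ before writing up the formal induction.
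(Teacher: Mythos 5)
Your proof is correct and follows essentially the same route as the paper: both build a Mealy machine that, while reading $w\rev$, maintains the monoid element induced by the suffix read so far and outputs the corresponding $\sim_L$-class, and both derive the reduction claim by comparing first letters of $\psi_L(x)$ and $\psi_L(y)$. The only difference is cosmetic---you use the transition monoid $Q^Q$ of the minimal DFA with update $f \mapsto f \circ \delta_a$, whereas the paper uses the (isomorphic) syntactic monoid $M$ with update $m \mapsto h(a)\cdot m$ and a map $\nu\colon M \to \Sigma^*/{\sim_L}$ to produce the output.
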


\begin{proof}
	Let $h \colon \Sigma^* \to M$ be the syntactic homomorphism of $L$ into the syntactic monoid $M$ of $L$.
	Since the syntactic congruence refines the Myhill-Nerode congruence, there exists a function
	$\nu \colon M \to \Sigma^*/{\sim_L}$ such that $[x]_{\sim_L} = \nu(h(x))$ for all $x \in \Sigma^*$.
	Define the Mealy machine with the state set $M$ and transitions
	\[
		\delta(m,a) = (h(a) \cdot m, \nu(h(a) \cdot m))
	\]
	for all $m \in M$, $a \in \Sigma$.
	This Mealy machine computes the $\leftarrow$-transduction $\psi_L$.
		
	If $\psi_L(x) = \psi_L(y)$ then either both or none of the words $x,y$ belong to $L$.
	This proves that $\psi_L$ is indeed a reduction from $L$ to $\psi_L(L)$.
\end{proof}
The {\em growth} of a language $L \subseteq \Sigma^*$ is the function $g(n) = |\{ x \in L : |x| \le n\}|$.
Since the growth of every regular  language is either $\Theta(n^d)$ for some integer $d \ge 0$ or 
$\Omega(r^n)$ for some $r > 1$ \cite[Section~2.3]{GKRS10},
\cref{lem:psi-L,lem:psi-regular} reprove the trichotomy theorem for variable-size windows:
For every regular language $L$, $V_L(n)$ is either in $\mathcal{O}(1)$, $\Theta(\log n)$ or $\Theta(n)$.
Furthermore, since $|\psi_L(\Sigma^{\le n})| \ge n+1$ we have $V_L(n) \in \Omega(\log n)$ for every non-trivial language $L$.

\begin{thm}
	If $L \subseteq \Sigma^*$ has growth $g(n)$,
	then $F_L(n) \in \mathcal{O}(\log g(n) +  \log n)$. 
\end{thm}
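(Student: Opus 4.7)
The plan is to bound the number of states in the minimal DFA for the language $\{w \in \Sigma^* : \last_n(w) \in L\}$, since $F_L(n)$ is the logarithm of this number. First I would observe that in the fixed-size model with window length $n$ only $L_n := L \cap \Sigma^n$ matters, because every active window has length exactly $n$ (after the initial padding); hence I may replace $L$ by $L_n$, and note $|L_n| \le g(n)$. Reachable states of the minimal DFA correspond to equivalence classes of windows $u = u_1 \cdots u_n \in \Sigma^n$ under the relation: $u \equiv v$ iff for every $k \in \{0,\dots,n-1\}$ and every $z \in \Sigma^k$ one has $u_{k+1} \cdots u_n\, z \in L_n$ iff $v_{k+1} \cdots v_n\, z \in L_n$ (these are exactly the future windows obtained after $k$ more symbols arrive; continuations $z$ with $|z| \ge n$ give identical answers trivially).

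The core of the proof is a succinct encoding of each class. For $u \in \Sigma^n$ define $k^*(u)$ as the smallest $k \in \{0,\dots,n-1\}$ such that the suffix $u_{k+1} \cdots u_n$ is a prefix of some word in $L_n$, and set $k^*(u) = n$ if no such $k$ exists. Associate to $u$ the pair $\pi(u) = (k^*(u),\, u_{k^*(u)+1} \cdots u_n)$ (with second coordinate $\varepsilon$ when $k^*(u) = n$). The claim is that $\pi(u) = \pi(v)$ implies $u \equiv v$: for $k < k^*$ the suffix $u_{k+1} \cdots u_n$ is not a prefix of any word in $L_n$, so no $z \in \Sigma^k$ yields $u_{k+1} \cdots u_n\, z \in L_n$ (and the same holds for $v$); for $k \ge k^*$ the suffix $u_{k+1} \cdots u_n$ is a fixed tail of the common string $x = u_{k^*+1} \cdots u_n = v_{k^*+1} \cdots v_n$, so the tests coincide with those for $v$.

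To conclude I count the image of $\pi$. For each fixed $k^* \in \{0,\dots,n-1\}$ the second coordinate lies in the set of length-$(n-k^*)$ prefixes of words in $L_n$, whose cardinality is at most $|L_n| \le g(n)$ (each such prefix extends to a distinct word in $L_n$ by picking any one extension). Adding the single pair with $k^* = n$, we obtain $|\Sigma^n/{\equiv}| \le 1 + n \cdot g(n)$, and therefore
\[
	F_L(n) \le \log(1 + n \cdot g(n)) = \mathcal{O}(\log g(n) + \log n).
\]
The main obstacle is spotting that $k^*$ is the right canonical offset: once one realizes that for every $k \ge k^*$ the suffix $u_{k+1} \cdots u_n$ is a suffix of $u_{k^*+1} \cdots u_n$ (hence determined by the second coordinate of $\pi(u)$), and for every $k < k^*$ the set $\{z : u_{k+1}\cdots u_n\, z \in L_n\}$ is empty by choice of $k^*$, the counting is immediate.
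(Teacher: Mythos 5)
Your proposal is correct and matches the paper's proof in its essential idea: both isolate the longest suffix of the active window that is a prefix of some word in $L \cap \Sigma^n$, and observe that this datum --- an offset in $\{0,\dots,n\}$ together with one of at most $g(n)$ candidate strings --- determines everything relevant, yielding the bound $\log(n \cdot g(n) + 1) = \mathcal{O}(\log g(n) + \log n)$. The only difference is presentational: the paper spells out an explicit streaming algorithm that maintains the pair $(i,j)$ and verifies its update rules, whereas you bound the number of Myhill--Nerode classes of the minimal automaton $\B_n$ directly, which lets you skip the update argument entirely.
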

     
\begin{proof}
	Let $n \ge 0$ be a window size and let $w_1, \dots, w_m$ be an arbitrary enumeration of $L \cap \Sigma^n$
	where $m \le g(n)$.
	Assume that $w = a_1 \cdots a_n \in \Sigma^*$ is the active window.
	The algorithm stores the longest suffix $v = a_i \cdots a_n$ of $w$
	such that $v$ is a prefix of a word $w_j \in L \cap \Sigma^n$.
	Notice that $v$ can be encoded by the binary encoded number $j$ using $\log g(n)$ bits
	and the binary encoded number $i$ using $\log n$ bits.
	Of course, there may exist several words $w_j$ having $v$ as a prefix;
	in this case the concrete choice of $w_j$ does not matter.
	This information clearly suffices to check whether the active window belongs to $L$.
	Moreover, we can update the information:
	If $a_{n+1} \in \Sigma$ is the next symbol from the stream, then we distinguish the following cases:
        	\begin{itemize}
        		\item If $i > 1$ and $a_i \cdots a_n a_{n+1}$ is a prefix of a word from
        		$L \cap \Sigma^n$, say $w_{j'}$, $1 \leq j' \leq m$, then we replace $i,j$ by
        		$i-1,j'$.
        		\item Otherwise let $i < i' \le n+1$ be minimal such that $a_{i'} \cdots a_n a_{n+1}$ is a prefix of a word from
        		$L \cap \Sigma^n$, say $w_{j'}$, $1 \leq j' \leq m$. We 
        		replace $i,j$ by  $i',j'$.
        	\end{itemize} 
        	The correctness of this algorithm is straightforward.
     \end{proof}

\section{Logspace sliding-window algorithms}
	
In this section, we give a new and more natural characterization of languages in $\Vreg(\mathcal{O}(\log n))$.  Moreover, we analyze the 
influence of the size of the automaton on the $\mathcal{O}$-constant. In \cite{GanardiHL16} we gave the space bound 
$\mathcal{O}(m^m \cdot (m \cdot \log(m) + \log(n)))$ if the regular language is given by a DFA with $m$ states.
Below, we improve this bound to $\mathcal{O}(2^m \cdot m \cdot \log(n))$.

Let $\B = (Q,\Sigma,q_0,\delta,F)$ be a DFA. A {\em strongly connected component} (SCC for short)
of $\B$ is an inclusion-maximal subset $C \subseteq Q$ such that
for all $p,q \in C$ there exist words $u,v \in \Sigma^*$ such that $\delta(p,u)=q$ and $\delta(q,v) = p$.
The crucial property that enables logspace sliding-window algorithms is captured by the following definition:

\begin{defn}
	Let $\B = (Q,\Sigma,q_0,\delta,F)$ be a DFA. An SCC $C \subseteq Q$ is {\em well-behaved} if for all $q \in C$
	and $u,v \in \Sigma^*$ with $|u|=|v|$ and $\delta(q,u), \delta(q,v) \in C$ we have:
	$\delta(q,u) \in F$ if and only if $\delta(q,v) \in F$.
	If every SCC in $\B$ which is reachable from $q_0$ is well-behaved, then $\B$ is called {\em well-behaved}.
\end{defn}
It turns out that $L \in \Vreg(\mathcal{O}(\log n))$ if and only if $L\rev$ can be accepted
by a well-behaved DFA, and we will prove this fact below. 
Thereby we determine the dependence of the constant in the $\mathcal{O}(\log n)$ bound with respect to the size of 
an automaton (DFA or NFA) for $L$.

Let $\B$ be a well-behaved DFA and let $\rho$ be a run in $\B$, which does not necessarily start in the initial state.
Let $C_1, \ldots, C_k$ be the sequence of pairwise different SCCs that are visited by $\rho$ in that particular order.
The {\em path summary} of $\rho$ is the sequence $(p_1,\ell_1,p_2,\ell_2, \dots, p_k, \ell_k)$ where
$p_i$ is the first state in $C_i$ visited by $\rho$, and $\ell_i \ge 0$ is the number of symbols read in $\rho$
from the first occurrence of $p_i$ until the first state from $C_{i+1}$ (or until the end for $p_k$).
The number of different path summaries of runs of length $n$  in a DFA $\B$ with $m$ states can be bounded by
($e$ is Euler's constant)
\begin{equation} \label{number-summaries}
	m^m \cdot \binom{n+m-1}{m-1} \leq m^m \cdot \binom{n+m}{m} \leq m^m \cdot \bigg(\frac{e \cdot (n+m)}{m}\bigg)^m \leq e^m \cdot (n+m)^m .
\end{equation}
Here, (i) $m^m$ is the number of sequences of $m$ states (we can repeat the last state in a path summary so that we have exactly $m$ states)
and (ii) $\binom{n+m-1}{m-1}$ is the number of ordered partitions of $n$ into $m$ summands.

\begin{thm}
	\label{prop:uniform-algorithm}
	Let $L \subseteq \Sigma^*$ be regular and let $\A$ be a finite automaton for $L$ with $m$ states.
	Assume that $\B = \A^{\R\D}$ is well-behaved. There are constants $c_m, d_m$ that only depend on $m$ such that 
	the following holds:
	\begin{itemize}
		\item If $\A$ is a DFA then $V_L(n) \leq (2^m \cdot m +1) \cdot \log n  \ + \ c_m$ for $n$ large enough.
		\item If $\A$ is an NFA then $V_L(n) \leq (4^m +1) \cdot \log n  \ + \ d_m$ for $n$ large enough.
	\end{itemize}
\end{thm}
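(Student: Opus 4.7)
The plan is to construct explicit variable-size sliding-window algorithms for $L$ matching both claimed bounds, sharing a common framework. Let $\B = \A\revdet$; it accepts $L\rev$, has $M \leq 2^m$ states, and is well-behaved by assumption. Since $w \in L$ iff $w\rev \in L(\B)$, and well-behavedness guarantees that the path summary $(p_1, \ell_1, \ldots, p_k, \ell_k)$ of the run of $\B$ on $w\rev$ already determines its acceptance, the algorithms revolve around maintaining such summaries. By \eqref{number-summaries}, a single path summary of a length-$\le n$ run in $\B$ fits in $M \log n + O(M \log M)$ bits.

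First I would design the core subroutine: for a fixed starting state $s \in Q_\B$, maintain the path summary $\mathrm{PS}_s$ of the run of $\B$ on $w\rev$ from $s$ under window updates. A leftmost deletion of $w$ shortens the reversed run at its end; I update by decrementing $\ell_k$ (and, if $\ell_k$ would go below $0$, instead dropping the last SCC and decrementing $\ell_{k-1}$). An append of $a$ to $w$ prepends $a$ to $w\rev$, so the run begins one step earlier with $s \to s' := \delta_\B(s, a)$; the new $\mathrm{PS}_s$ is obtained by prepending one transition to the old $\mathrm{PS}_{s'}$, extending the first SCC-entry by $1$ if $s$ and $s'$ share an SCC and otherwise inserting a fresh entry $(s, 1)$. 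Consequently the updates are self-contained within any set $S \subseteq Q_\B$ of starting states that is stable under $s \mapsto \delta_\B(s, a)$.

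For the NFA case I would take $S = Q_\B$ (trivially stable, $|S| \le 2^m$) and run the core subroutine in parallel for every $s \in S$, reading off acceptance from $\mathrm{PS}_{q_0^\B}$. The total space is at most $|S| \cdot (M \log n + O(M \log M)) \le 4^m \log n + O(m \cdot 4^m)$, which fits into $(4^m + 1)\log n + d_m$ once $n$ is large enough that the $m$-dependent lower-order term is absorbed.

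For the DFA case I would keep $S = Q_\B$ but sharpen the per-summary bound using a key structural property: when $\A$ is a DFA with $m$ states, the SCC DAG of $\B$ has depth at most $m$, so every path summary has at most $m$ pairs $(p_i, \ell_i)$ and fits into $m(\log M + \log n) \le m \log n + O(m^2)$ bits. Granting this, the total space is at most $|S| \cdot (m \log n + O(m^2)) \le m \cdot 2^m \log n + O(m^2 \cdot 2^m)$, absorbed into $(m \cdot 2^m + 1)\log n + c_m$ for $n$ large enough. The main obstacle I anticipate is establishing this depth-$\le m$ bound: it should follow from the specific structure of reverse-determinization of a DFA together with well-behavedness, likely via a potential function on subsets of $Q_\A$ (for instance based on sizes of preimages under the transition monoid of $\A$) that progresses strictly along every edge of the SCC DAG of $\B$.
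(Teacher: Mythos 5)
Your NFA bound goes through: maintaining, for every state $s$ of $\B=\A\revdet$, the path summary of the run of $\B$ on $w\rev$ starting at $s$ is a legitimate (and more constructive) variant of the paper's argument, and the accounting $2^m\cdot(2^m\log n+O(m2^m))$ indeed yields $(4^m+1)\log n+d_m$. The DFA case, however, rests on the claim that the SCC DAG of $\B=\A\revdet$ has depth at most $m$ when $\A$ is an $m$-state DFA, and this claim is \emph{false} -- the paper's own lower-bound construction refutes it. Take the language $L_k$ with its DFA $\A_k$ on $m=k+3$ states; $\B_k=\A_k\revdet$ is well-behaved (otherwise \cref{prop-linear-lower-bound} would force $F_{L_k}(n)\notin o(n)$, contradicting $F_{L_k}(n)=O(\log n)$). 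If every chain of SCCs in $\B_k$ had length at most $d$, then every path summary would have at most $d$ entries, and the paper's counting argument (a distinguishing set $D$ with $|D|\le m-1$, each row of the matrix $A_w$ encoded by one path summary, and \cref{lem:psi-L}) would give $V_{L_k}(n)\le (m-1)\,d\,\log n+O_{m,d}(1)$. Comparing with the proven lower bound $F_{L_k}(n)\ge(2^k-1)(\log n-k)$ forces $d\ge(2^k-1)/(k+2)$, which exceeds $m=k+3$ already for $k=7$. So no potential function can certify depth $\le m$; the depth of $\A\revdet$ is in general exponential in $m$ even under well-behavedness, and your DFA estimate collapses back to $4^m\log n$.

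The factor $m$ in the paper's DFA bound comes from a completely different place: since an $m$-state DFA has at most $m$ distinct left quotients $x^{-1}L$, there is a distinguishing set $D$ with $|D|\le m-1$, and one only needs the $|D|$ path summaries of the runs $\tilde\rho_z$ (each still costing $2^m\log n$ bits via \eqref{number-summaries} applied to the $2^m$-state automaton $\B$), giving $(m-1)\cdot 2^m\log n$. Crucially, the paper never turns this encoding into an online data structure: it only shows that the tuple of path summaries \emph{determines} $\psi_L(w)$, counts the possible values, and invokes \cref{lem:psi-L} to convert the count into a space bound. Your framework cannot exploit this, because an explicitly updatable collection of path summaries must be closed under $s\mapsto\delta_\B(s,a)$, which drives you back to all $2^m$ starting states. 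To repair your proof of the DFA case you would either have to prove the (false) depth bound or abandon the explicit algorithm in favour of the counting route via \cref{lem:psi-L}.
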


\begin{proof}
	A set $D \subseteq \Sigma^*$ {\em distinguishes} $L$
	if for all $x,y \in \Sigma^*$ with $x \not \sim_L y$ there exists $z \in D$ such that exactly
	one of the words $xz$ and $yz$ belongs to $L$.
	If $\A$ is a DFA with $m$ states, then there are at most $m$ distinct left quotients $x^{-1} L$.
	Since every family of $m$ sets has a distinguishing set of size at most $m-1$ \cite{PPR97}, we get a
	set $D$ of size at most $m-1$  that distinguishes $L$.
	If $\A$ is an NFA with $m$ states, we can clearly choose $|D| \le 2^m-1$ by determinizing $\A$.
		
	For a window content $w = a_1 \cdots a_n$ we define a 0-1-matrix $A_w \colon D \times \{1, \dots, n\} \to \{0,1\}$ by
	$A_w(z,i) = 1$ iff $a_i \cdots a_n z \in L$.
	Notice that the $i$-th column $A_w(\cdot,i)$ determines $[a_i \cdots a_n]_{\sim_L}$, and vice versa, for all $1 \le i \le n$.
	In particular, the matrix $A_w$ determines $\psi_L(w)$ and vice versa. Thus, $|\psi_L(\Sigma^{\leq n})| = |\{ A_w \colon w \in \Sigma^{\leq n }\}|$.
	By \cref{lem:psi-L}, it therefore suffices to bound $|\{ A_w \colon w \in \Sigma^{\leq n }\}|$.
				
	We can encode each row $A_w(z,\cdot)$ of $A_w$ succinctly as follows. Consider one row indexed by $z \in D$.
	Let $\rho_z$ be the run of $\B$ on the word $(wz)\rev$ and $\tilde \rho_z$ be
	the subrun of $\rho_z$ which only reads the suffix $w\rev$ of $(wz)\rev$.
	One can reconstruct $A_w(z,\cdot)$ from the path summary of $\tilde \rho_z$.
	%(and the automaton $\A$, which is hardwired into the algorithm).
	Thus $A_w$ can be encoded by $|D|$ many path summaries. With \eqref{number-summaries}
	and the fact that $\B$ has at most $2^m$ states, we get the bound
	\[ 
		|\{ A_w \colon w \in \Sigma^{\leq n}\}| \leq \sum_{i=0}^n e^{2^m |D|} \cdot (i+2^m)^{2^m |D|} 
		\leq
		(n+1) \cdot e^{2^m |D|} \cdot (n+2^m)^{2^m |D|} .
	\]
	Hence, 
	for the DFA case (where $|D| \leq m-1$) we have
	\begin{eqnarray*}
          V_L(n) &=& \log |\psi_L(\Sigma^{\leq n})| \\
          &\leq & \log (n+1) + 2^m  \cdot m \cdot ( \log e + \log (n + 2^m) ) \\
          &\leq &  (2^m \cdot m +1) \cdot \log n  \ + \ c_m
         \end{eqnarray*}
	for $n$ large enough, where $c_m$ can be chosen as $1 + 2^m \cdot m \cdot \log e + m^2 \cdot 2^m$.
	The calculation for the NFA case (where $|D| \leq 2^m-1$) is analogous.
\end{proof}
Finally, we show a linear space lower bound for the case that the reversal of $L$ is recognized by a non-well-behaved DFA.

\begin{thm} \label{prop-linear-lower-bound}
	Let $L \subseteq \Sigma^*$ be a regular language and $\B$ be a DFA which recognizes $L\rev$ and which is not well-behaved.
	Then $V_L(n) \in \Omega(n)$ and $F_L(n) \in \mathcal{O}(n) \setminus o(n)$.
\end{thm}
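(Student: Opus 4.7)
The plan is to prove the linear lower bound on $V_L$ first and then derive the fixed-size claim from the trichotomy. The upper bound $F_L(n) \in \mathcal{O}(n)$ is immediate from the trivial algorithm that stores the active window verbatim. If $F_L(n) \in o(n)$ held, then by \cref{thm:trichotomy} we would get $L \in \Freg(o(n)) = \Vreg(\mathcal{O}(\log n))$, which contradicts $V_L(n) \in \Omega(n)$. Hence it suffices to prove the linear lower bound in the variable-size model.

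By \cref{lem:psi-L} this reduces to exhibiting $2^{\Omega(n)}$ words of length $\mathcal{O}(n)$ whose $\psi_L$-values are pairwise distinct. I will encode an arbitrary bit string $b \in \{0,1\}^n$ into a single word so that every bit is recoverable from $\psi_L$. Fix a reachable non-well-behaved SCC $C$ of $\B$, together with a witness $q \in C$ and words $u, v$ of common length $k$ such that $p_u := \delta(q,u)$ and $p_v := \delta(q,v)$ both lie in $C$ but only one of them---say $p_u$---is in $F$; pick $w_0$ with $\delta(q_0, w_0) = q$. The central gadget-construction step is to find, for some sufficiently large common length $N$, words $t_u, t_v$ of length $N$ with $\delta(p_u, t_u) = \delta(p_v, t_v) = q$. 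Since the witness forces $C$ to contain a nontrivial cycle, $C$ has a well-defined period $d$, and because $|u| = |v|$, the sets of lengths of $p_u \to q$ and $p_v \to q$ paths lie in the same residue class modulo $d$. Consequently, for every sufficiently large $N$ in that class both required $t_u, t_v$ exist; setting $x_0 := v t_v$ and $x_1 := u t_u$ yields two loops at $q$ of equal length $k+N$.

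For $b \in \{0,1\}^n$ put $y_b := w_0\, x_{b_1} x_{b_2} \cdots x_{b_n}$, a word of length $m := |w_0| + n(k+N) \in \mathcal{O}(n)$. I claim $b \mapsto \psi_L(y_b\rev)$ is injective, which gives $|\psi_L(\Sigma^{\leq m})| \ge 2^n$ and hence $V_L(m) \in \Omega(m)$. For $0 \le i < n$, let $\ell_i := |w_0| + i(k+N) + k$; the prefix of $y_b$ of length $\ell_i$ is $P_{b,i} := w_0 x_{b_1} \cdots x_{b_i} \alpha$ with $\alpha = v$ if $b_{i+1} = 0$ and $\alpha = u$ otherwise, so $\delta(q_0, P_{b,i}) = p_\alpha$ lies in $F$ iff $b_{i+1} = 1$. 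Unfolding $\sim_L$ using that $\B$ recognizes $L\rev$ shows that, for any two words $S, S'$ of equal length, $S \sim_L S'$ iff $\{p \in R : \delta(p, S\rev) \in F\} = \{p \in R : \delta(p, S'\rev) \in F\}$, where $R$ denotes the set of states of $\B$ reachable from $q_0$. The suffix of $y_b\rev$ of length $\ell_i$ equals $P_{b,i}\rev$, and since $q_0 \in R$, membership of $q_0$ in the corresponding set recovers the bit $b_{i+1}$. The components of $\psi_L(y_b\rev)$ at positions $\ell_0, \ldots, \ell_{n-1}$ therefore determine $b$.

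The main technical obstacle is the padding step that synchronizes the lengths of $x_0$ and $x_1$; once the period-and-residue argument is in place, the injectivity verification and the reduction from $F_L$ to $V_L$ via the trichotomy are routine.
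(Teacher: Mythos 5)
Your proposal is correct and follows essentially the same route as the paper: both invoke \cref{lem:psi-L} and encode $2^n$ bit strings into words of length $\mathcal{O}(n)$ built from two equal-length loops at a state of the non-well-behaved SCC, recovering each bit from the acceptance status (in $\B$) of a suitable equal-length prefix, and both derive the $F_L(n)\notin o(n)$ claim from \cref{thm:trichotomy}. The only divergence is in one technical step: where you synchronize the two loop lengths via the period of the SCC, the paper simply pads the return words with powers of the loops ($v_0 \mapsto v_0(u_0v_0)^{\ell-1}$, $v_1 \mapsto v_1(u_1v_1)^{k-1}$), which avoids appealing to the digraph-period fact.
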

	
\begin{proof}
	Let $q_0$ be the initial state of $\B$.
	Since $\B$ is not well-behaved, there are states $p, p_0, p_1$ and words $u, u_0, v_0, u_1, v_1 \in \Sigma^*$
	such that $|u_0| = |v_0|$, $p_0$ is not final, $p_1$ is final and
	$q_0 \xrightarrow{u} p$, $p \xrightarrow{u_0} p_0 \xrightarrow{v_0} p$ and $p \xrightarrow{u_1} p_1 \xrightarrow{v_1} p$.
	We can ensure that $|u_1| = |v_1|$: If $k = |u_0v_0|$ and $\ell = |u_1v_1|$, we replace $v_0$ by $v_0 (u_0v_0)^{\ell-1}$
	and $v_1$ by $v_1 (u_1v_1)^{k-1}$.
							
	For any $\alpha = \alpha_1 \cdots \alpha_n \in \{0,1\}^*$ we define
	the word
	\[
		w(\alpha) = u \, u_{\alpha_1}v_{\alpha_1} \cdots u_{\alpha_n}v_{\alpha_n}.
	\]
	Notice that the length of $w(\alpha)$ is $|u| + k \ell |\alpha| \in \mathcal{O}(|\alpha|)$.
	Let $\alpha \neq \beta$ be two bit strings of length $n$ which differ in position $i$,
	say $\alpha_i = 1$ and $\beta_i = 0$.
	Then $\B$ accepts $u \, u_{\alpha_1}v_{\alpha_1} \cdots u_{\alpha_i}$
	but rejects $u \, u_{\beta_1}v_{\beta_1} \cdots u_{\beta_i}$, which are prefixes of $w(\alpha)$
	and $w(\beta)$, respectively, of the same length.
	In particular, $\psi_L(w(\alpha)\rev) \neq \psi_L(w(\beta)\rev)$.
	Therefore, for any $n \ge 0$, the language $\psi_L(\Sigma^*)$ contains at least $2^n$ words of length $\mathcal{O}(n)$.
	By \cref{lem:psi-L} and monotonicity of $V_L(n)$, this implies $V_L(n) = \Omega(n)$.
	By \cref{thm:trichotomy} we also know that $F_L(n) \in \mathcal{O}(n) \setminus o(n)$.
\end{proof}
From \cref{prop:uniform-algorithm} and \cref{prop-linear-lower-bound} we obtain:
\begin{cor}
	Let $\mathsf{X} \in \{\F,\V\}$.
	A regular language $L \subseteq \Sigma^*$ belongs to $\mathsf{X}(\mathcal{O}(\log n))$
	if and only if $L^\R$ is recognized by a well-behaved DFA.
\end{cor}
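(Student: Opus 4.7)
The plan is to deduce this corollary as a direct pairing of the two preceding theorems: \cref{prop:uniform-algorithm} supplies the ``if'' direction, while \cref{prop-linear-lower-bound} supplies the ``only if'' direction by contraposition. The distinction between $\F$ and $\V$ causes no trouble, since \cref{prop:uniform-algorithm} yields an upper bound on $V_L$ (which dominates $F_L$) and \cref{prop-linear-lower-bound} simultaneously rules out $\mathcal{O}(\log n)$ behaviour for both $V_L$ and $F_L$.

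For the ``if'' direction, I would start from a well-behaved DFA $\B$ for $L^\R$ with $m$ states and set $\A := \B^\R$, viewed as an NFA for $L$. Because $\B$ is already deterministic, the subset construction applied to $\A^\R = \B$ produces only singleton reachable states, so $\A^{\R\D}$ is isomorphic to $\B$ (on its reachable part) and inherits its well-behavedness. Invoking \cref{prop:uniform-algorithm} in the NFA case then gives $V_L(n) \leq (4^m+1)\log n + d_m \in \mathcal{O}(\log n)$, whence $L \in \V(\mathcal{O}(\log n))$; combined with $F_L(n) \leq V_L(n)$, this also yields $L \in \F(\mathcal{O}(\log n))$.

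For the ``only if'' direction, I would take $\B$ to be the minimal DFA for $L^\R$ and argue that $\B$ must already be well-behaved, which would exhibit the required witness. Suppose toward a contradiction that it is not. Then \cref{prop-linear-lower-bound}, applied to this very $\B$, forces $V_L(n) \in \Omega(n)$ together with $F_L(n) \in \mathcal{O}(n) \setminus o(n)$. Either conclusion contradicts the standing hypothesis $\mathsf{X}_L(n) \in \mathcal{O}(\log n) \subseteq o(n)$, regardless of whether $\mathsf{X} = \F$ or $\mathsf{X} = \V$. Hence $\B$ is well-behaved.

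I do not anticipate a genuine obstacle here; the corollary is essentially bookkeeping between the two theorems. The only point that deserves a moment of care is the verification $\A^{\R\D} \cong \B$ in the ``if'' direction, which rests on the observation that determinising an already-deterministic automaton is trivial on reachable states. This needs to be stated explicitly so that the well-behavedness hypothesis of \cref{prop:uniform-algorithm} is literally satisfied rather than only morally so.
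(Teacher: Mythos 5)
Your proposal is correct and follows exactly the route the paper intends: the corollary is stated there with the one-line justification ``From \cref{prop:uniform-algorithm} and \cref{prop-linear-lower-bound} we obtain,'' and your argument simply fills in the bookkeeping, including the worthwhile explicit check that for $\A = \B^{\mathsf{R}}$ one has $\A^{\mathsf{RD}} \cong \B$ on reachable states so that the hypothesis of \cref{prop:uniform-algorithm} is literally met. No gaps.
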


\subsection{Alternative Characterizations of $\Vreg(\mathcal{O}(\log n))$}

In the following we will give two further very natural characterizations of the languages in
$\Vreg(\mathcal{O}(\log n))=\Freg(\mathcal{O}(\log n))$ that we will also need in \cref{sec-computing-space}.

A language $L \subseteq \Sigma^*$ is called a {\em left ideal (right ideal)} if $\Sigma^* L \subseteq L$ ($L \Sigma^* \subseteq L$).
A language $L \subseteq \Sigma^*$ is called a {\em length language} if for all $n \in \mathbb{N}$,
either $\Sigma^n\subseteq L$ or $L \cap \Sigma^n = \emptyset$. Clearly,  $L$ is a length language iff 
$L\rev$ is a length language, and $L$ is left ideal iff $L\rev$ is a right ideal.
In this section we will prove the following theorem.

\begin{thm} \label{thm-characterizations}
	Let $L \subseteq \Sigma^*$ be regular.
	The following statements are equivalent:
	\begin{enumerate}
		\item $L \in \F(\mathcal{O}(\log n))$
		\item $L \in \V(\mathcal{O}(\log n))$
		\item $L\rev$  is recognized by a well-behaved DFA.
		\item $L$ is $\leftarrow$-reducible to a regular language of polynomial growth.
		\item $L$ is a Boolean combination of regular left ideals and regular length languages.
	\end{enumerate}	
\end{thm}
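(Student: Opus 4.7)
The plan is to add $(4)$ and $(5)$ to the already-established equivalence $(1) \Leftrightarrow (2) \Leftrightarrow (3)$ (from the corollary preceding the theorem) by proving $(3) \Rightarrow (4) \Rightarrow (2)$, $(5) \Rightarrow (2)$, and $(3) \Rightarrow (5)$.

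For $(3) \Rightarrow (4)$, starting with a well-behaved DFA $\B = (Q,\Sigma,q_0,\delta,F)$ for $L\rev$, I would design a Mealy machine on state set $Q$ that mirrors $\B$ and outputs the destination state whenever a transition crosses an SCC boundary, or a padding symbol $\bot$ otherwise. The corresponding $\leftarrow$-transduction $\tau$ reduces $L$ to the regular target language $L'$ consisting of reversed outputs whose non-$\bot$ symbols spell a valid SCC-entry sequence in $\B$ and whose number of trailing $\bot$s, together with the last non-$\bot$ state, satisfies the well-behaved acceptance condition modulo the period of the last SCC. Since each output contains at most $|Q| - 1$ non-$\bot$ symbols, $L'$ is contained in a finite union of bounded languages of the form $\bot^* q_{j_1} \bot^* \cdots q_{j_r} \bot^*$ and therefore has polynomial growth. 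For $(4) \Rightarrow (2)$, combine \cref{lem:left-reduction} with the established fact that every regular language of polynomial growth has $V(n) \in \mathcal{O}(\log n)$ (which follows from the fixed-size growth bound in the previous section combined with the trichotomy $\Vreg(\mathcal{O}(\log n)) = \Freg(\mathcal{O}(\log n))$). For $(5) \Rightarrow (2)$, I would verify directly that regular left ideals and regular length languages lie in $\V(\mathcal{O}(\log n))$ and then invoke \cref{lem:boolean}: the reverse of a regular left ideal $\Sigma^* K$ is the right ideal $K\rev \Sigma^*$ whose minimal DFA has absorbing final states, so every SCC is uniformly final or non-final (trivially well-behaved); and the standard counter DFA for a regular length language $L_N$ is well-behaved because inside its cyclic sink SCC the state reached after reading $\ell$ symbols from a fixed starting state depends only on $\ell \bmod p$. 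Both cases yield $\V \in \mathcal{O}(\log n)$ via the corollary.

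The main obstacle is $(3) \Rightarrow (5)$. Equivalently, I show that if $L\rev$ has a well-behaved DFA $\B$, then $L\rev$ is a Boolean combination of regular right ideals and regular length languages; reversal then gives $(5)$ because left ideals are exactly reverses of right ideals while length languages are reversal-invariant. Let $P$ denote the least common multiple of the periods of all SCCs of $\B$. I would decompose $L\rev$ as a finite disjoint union indexed by pairs (path-summary entry-state sequence $(p_1,\ldots,p_k)$, residues $(s_1,\ldots,s_k) \in \{0,1,\ldots,P-1\}^k$). For each such pair, the event ``$w$'s run enters $C_i$ exactly at $p_i$, never leaves $C_k$ after entering it, and first visits $p_i$ at a position $\equiv s_i \pmod{P}$'' is expressible as a Boolean combination of regular right ideals, using first-visit sets $V_q^{\mathrm{first}} = \{u : \delta(q_0,u) = q$ and $\delta(q_0,u') \neq q$ for all $u' \subsetneq u\}$ intersected with length-mod classes, together with products $V_p \cdot U_q \cdot \Sigma^*$ (where $U_q = \{u : \delta(p,u) = q\}$) used via complementation to forbid visits to states outside the prescribed SCC chain between consecutive entry events. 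Under a fixed shape and residue choice, well-behavedness reduces the acceptance condition $\delta(q_0,w) \in F$ to $\ell_k = |w| - t_k \in N_{p_k}^{\mathrm{acc}}$ for an ultimately periodic set $N_{p_k}^{\mathrm{acc}}$; since $t_k \equiv s_k \pmod{P}$, this translates to an ultimately periodic condition on $|w| \bmod P$, i.e., a regular length language. Taking the finite disjunction over all shapes and residues yields $L\rev$ as a Boolean combination of regular right ideals and regular length languages. The delicate and most technical piece is the bookkeeping to ensure that the ``shape'' conditions---visit order of entry states together with the avoidance of all states outside the prescribed SCC chain during each inter-entry segment---genuinely assemble into a finite Boolean combination of regular right ideals.
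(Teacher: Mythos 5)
Your proof is logically complete (the implications $3\Rightarrow 4\Rightarrow 2$, $5\Rightarrow 2$, $3\Rightarrow 5$ together with the known $1\Leftrightarrow 2\Leftrightarrow 3$ do close the cycle), but it takes a genuinely different route through the two nontrivial implications. For statement 4, the paper goes $2\Rightarrow 4$ via the map $\psi_L$ that sends $a_1\cdots a_n$ to the sequence of Myhill--Nerode classes of its suffixes: \cref{lem:psi-regular} shows $\psi_L$ is a $\leftarrow$-reduction to $\psi_L(L)$, and \cref{lem:psi-L} shows that $|\psi_L(\Sigma^{\le n})|=2^{V_L(n)}$ is polynomial, so no automaton-level construction is needed; your $3\Rightarrow 4$ via a path-summary Mealy machine is a workable substitute but has to argue regularity and boundedness of the image language by hand. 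For statement 5, the paper does \emph{not} attempt your direct $3\Rightarrow 5$; it proves $4\Rightarrow 5$ by (a) showing the class of Boolean combinations of regular left ideals and length languages is closed under preimages of $\leftarrow$-transductions, and (b) decomposing every polynomial-growth regular language into languages of ``linear cycle automata'' (using the known structure of polynomial-growth regular languages from \cite{GKRS10}), normalizing all cycle lengths to a common value, and exhibiting each such language as an intersection of $L\Sigma^*$, the complement of a right ideal, and a length language. Your path-summary/residue decomposition is essentially doing the same work without the intermediate of bounded languages; it can be pushed through, but be aware of two points you gloss over: the acceptance condition in the last SCC is only \emph{ultimately} periodic in $\ell_k$, so the finitely many exceptional values $\ell_k<T$ must be split off separately (fortunately ``$\ell_k\ge j$'' is the right ideal $V_{p_k}^{\mathrm{first}}\Sigma^j\Sigma^*$, so this is fixable), and all your shape constraints must genuinely be phrased as $K\Sigma^*$-type conditions. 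The paper's factorization through polynomial growth buys modularity and reuses a decomposition of independent interest; your argument is more self-contained but considerably heavier on bookkeeping, as you yourself note.
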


\noindent
The equivalence of points 1.~and 2.~was already shown in \cite{GanardiHL16}, and the equivalence
of 2.~and 3.~was shown in the last section.
The implication from 2.~to 4.~follows from \cref{lem:psi-L,lem:psi-regular}.
In the rest of the section, we prove the directions from 5.~to 3., and from 4.~to 5.

We start with two simple observations, which prove the direction from 5.~to 3.

\begin{lem}
	If a regular language $L \subseteq \Sigma^*$ is a right ideal or a length language,
	then the minimal DFA for $L$ is well-behaved.
\end{lem}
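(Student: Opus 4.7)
The plan is to treat the two cases (right ideal and length language) separately, since in each case a slightly different structural property of the minimal DFA forces well-behavedness, and both implications are short.

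For the right ideal case, the key observation is that in the minimal DFA of a right ideal the property ``is final'' is constant on every SCC. Indeed, suppose $q$ is final and reachable from $q_0$ via some word $x$, so $x \in L$. Since $L \Sigma^* \subseteq L$, every continuation $xw$ is in $L$, hence every state reachable from $q$ is final. Applied to states $p,q$ in the same SCC, this shows $p$ is final iff $q$ is final. Well-behavedness is then immediate: if $q \in C$ and $\delta(q,u),\delta(q,v) \in C$ with $|u|=|v|$, both target states lie in $C$ and hence share the finality status of $q$.

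For the length language case, I would first show that the minimal DFA state $\delta(q_0,x)$ depends only on $|x|$. If $L$ is a length language and $|u|=|v|$, then for every $z$ we have $|uz|=|vz|$, so $uz \in L \iff vz \in L$; that is $u \sim_L v$. Thus $\delta(q_0,u) = \delta(q_0,v)$ in the minimal DFA. Now fix an SCC $C$, pick $q \in C$ together with $w \in \Sigma^*$ such that $\delta(q_0,w) = q$, and take $u,v$ with $|u|=|v|$ and $\delta(q,u),\delta(q,v) \in C$. Then
\[
\delta(q,u) = \delta(q_0,wu) = \delta(q_0,wv) = \delta(q,v),
\]
so in particular they agree on being final, and the DFA is well-behaved.

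I do not expect any real obstacle here; the only subtle point is remembering that well-behavedness is only required for states reachable from $q_0$, so that in the length-language argument the factorization $q = \delta(q_0,w)$ is available. Neither case needs any facts beyond the definitions of right ideal, length language, Myhill--Nerode equivalence, and the definition of well-behaved SCC given just above the lemma.
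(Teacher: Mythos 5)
Your proof is correct and follows essentially the same route as the paper: for right ideals you show finality is constant on each SCC of the minimal DFA, and for length languages you show that states reached by equal-length words from a fixed state coincide (the paper only notes they agree on finality, a trivially weaker statement). No gaps.
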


\begin{proof}
	If $\A$ is the minimal DFA for a length language then for all states 
	$q$ and all $u,v \in \Sigma^*$ with $|u|=|v|$, we have:
	$\delta(q,u) \in F$ if and only if $\delta(q,v) \in F$.
	
	If $\A$ is the minimal DFA for a right ideal, then for all final states
	$q$ and all $u \in \Sigma^*$, the state $\delta(q,u)$ is final as well.
	Hence, for every SCC $C$ either all states of $C$ are final or all states
	of $C$ are non-final.
\end{proof}

\begin{lem}
	The class of languages $L \subseteq \Sigma^*$ recognized by well-behaved DFAs
	is closed under Boolean operations.
\end{lem}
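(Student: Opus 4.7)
The plan is to treat complement directly and reduce the remaining Boolean operations to intersection, from which union and set difference follow via De~Morgan together with complement. For complement, I would simply flip the accepting set: given a well-behaved DFA $\A = (Q,\Sigma,q_0,\delta,F)$ for $L$, the DFA $\A' = (Q,\Sigma,q_0,\delta,Q \setminus F)$ recognizes $\Sigma^* \setminus L$, has identical transition and reachability structure, hence the same reachable SCCs, and the well-behaved condition is manifestly symmetric under swapping $F$ with $Q \setminus F$.

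For intersection, given well-behaved DFAs $\A_i = (Q_i, \Sigma, q_{0,i}, \delta_i, F_i)$ for $L_i$ with $i \in \{1,2\}$, I would take the usual product DFA $\A$ with state set $Q_1 \times Q_2$, componentwise transitions, initial state $(q_{0,1},q_{0,2})$, and accepting set $F_1 \times F_2$, and verify that it is well-behaved. Let $C$ be a reachable SCC of $\A$. The key observation is that for each $i \in \{1,2\}$ the projection $\pi_i(C)$ is contained in a single reachable SCC $C_i$ of $\A_i$: any two states $(p_1,p_2), (p_1',p_2') \in C$ are mutually reachable in $\A$ by some words $u,v$, and projecting those runs to coordinate $i$ witnesses that $p_i$ and $p_i'$ are mutually reachable in $\A_i$; reachability of $C_i$ from $q_{0,i}$ follows by projecting any path in $\A$ from $(q_{0,1},q_{0,2})$ into $C$. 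Now fix $(q_1,q_2) \in C$ and $u,v \in \Sigma^*$ with $|u| = |v|$ such that $(\delta_1(q_1,u),\delta_2(q_2,u))$ and $(\delta_1(q_1,v),\delta_2(q_2,v))$ both lie in $C$. Then $\delta_i(q_i,u), \delta_i(q_i,v) \in C_i$ for each $i$, so well-behavedness of $\A_i$ gives $\delta_i(q_i,u) \in F_i$ iff $\delta_i(q_i,v) \in F_i$; conjoining these equivalences for $i = 1$ and $i = 2$ yields $(\delta_1(q_1,u),\delta_2(q_2,u)) \in F_1 \times F_2$ iff $(\delta_1(q_1,v),\delta_2(q_2,v)) \in F_1 \times F_2$, as required.

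The only real subtlety is the projection step: SCCs of a product DFA need not coincide with products of SCCs of the factors, but they always sit inside a single SCC in each coordinate, which is exactly what is needed to invoke the well-behavedness hypothesis on each component. Everything else is a routine product construction.
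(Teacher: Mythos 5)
Your proof is correct and follows essentially the same route as the paper's: complement by flipping the accepting set, and intersection via the product automaton, using the observation that each coordinate projection of a reachable SCC of the product lies inside a single reachable SCC of the corresponding factor. Your write-up just makes that projection step slightly more explicit than the paper does.
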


\begin{proof}
	If $\A$ is well-behaved then the complement automaton $\overline \A$ is also well-behaved.
	Given two well-behaved DFAs $\A_1,\A_2$,
	we claim that the product automaton $\A_1 \times \A_2$ recognizing the intersection language is also well-behaved.
	Consider an SCC $S$ of $\A_1 \times \A_2$ which is reachable from the initial state and let $(p_1,p_2), (q_1,q_2), (r_1,r_2) \in S$
	such that
	\[
		(p_1,p_2) \xrightarrow{u} (q_1,q_2) \text{ and } (p_1,p_2) \xrightarrow{v} (r_1,r_2)
	\]
	for some words $u,v \in \Sigma^*$ with $|u| = |v|$.
	Since for $i \in \{1,2\}$ we have $p_i \xrightarrow{u} q_i$ and $p_i \xrightarrow{v} r_i$, and $\{p_i,r_i,q_i\}$
	is contained in an SCC of $\A_i$ (which is also reachable from the initial state), we have
	\begin{eqnarray*}
            (q_1,q_2) \text{ is final} & \iff & q_1 \text{ and } q_2 \text{ are final} \\
            & \iff & r_1 \text{ and } r_2 \text{ are final} \\
            & \iff & (r_1,r_2) \text{ is final},
         \end{eqnarray*}
	and therefore $\A_1 \times \A_2$ is well-behaved.
\end{proof}
It remains to show the implication from 4.~to 5. 

\begin{lem}
	The class of Boolean combinations of regular left ideals and regular length languages
	is closed under pre-images of $\leftarrow$-transductions.
\end{lem}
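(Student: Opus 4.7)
The plan is to exploit the fact that preimages distribute over Boolean operations: $\tau^{-1}(A\cup B)=\tau^{-1}(A)\cup \tau^{-1}(B)$ and $\tau^{-1}(\Gamma^*\setminus A)=\Sigma^*\setminus \tau^{-1}(A)$. Hence it suffices to show that $\tau^{-1}(L)$ lies in the target class whenever $L$ is a single regular left ideal or a single regular length language over $\Gamma$. In fact, I will prove the stronger statement that $\tau^{-1}$ sends each of these two types of languages to the same type.

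The length-language case is immediate. Because $\tau$ is length-preserving, $|\tau(x)|=|x|$, so whether $x\in\tau^{-1}(L)$ depends only on $|x|$. Setting $N=\{n\in\N:\Gamma^n\subseteq L\}$, we get $\tau^{-1}(L)=\bigcup_{n\in N}\Sigma^n$, which is a length language over $\Sigma$; it is regular because $N$ is ultimately periodic, being determined by the regular language $L$.

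For a regular left ideal $L\subseteq\Gamma^*$, I would pass to the reversed language. Since $L\rev$ is a regular right ideal, the minimal DFA $\A=(S,\Gamma,s_0,\delta_{\A},F_{\A})$ for $L\rev$ has absorbing final states: once $F_{\A}$ is entered, further transitions stay inside $F_{\A}$. Let $\mathcal{M}=(Q,\Sigma,\Gamma,q_0,\delta_{\mathcal{M}})$ be a Mealy machine with $\tau=\tau_{q_0}\rev$, and form the product DFA $\mathcal{M}'$ over $\Sigma$ whose states are pairs $(q,s)\in Q\times S$, whose transitions feed the Mealy output letter-by-letter into $\A$, i.e.\ $\delta'((q,s),a)=(q',\delta_{\A}(s,b))$ whenever $\delta_{\mathcal{M}}(q,a)=(q',b)$, with initial state $(q_0,s_0)$ and final states $Q\times F_{\A}$. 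After reading $y\in\Sigma^*$, the $S$-coordinate of the reached state equals $\A(\tau_{q_0}(y))$, so $L(\mathcal{M}')=\{y\in\Sigma^*:\tau_{q_0}(y)\in L\rev\}$. Because $F_{\A}$ is absorbing in $\A$, the set $Q\times F_{\A}$ is absorbing in $\mathcal{M}'$; hence $L(\mathcal{M}')$ is itself a regular right ideal over $\Sigma$, and therefore $\tau^{-1}(L)=\{x:\tau_{q_0}(x\rev)\in L\rev\}=L(\mathcal{M}')\rev$ is a regular left ideal.

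The only nontrivial step is the left-ideal case, and within that the point requiring care is the translation via reversal between left and right ideals together with verifying that composing the Mealy machine with $\A$ preserves the absorbing-final-states structure; both are direct consequences of length-preservation of $\tau_{q_0}$ and of right-ideality of $L\rev$.
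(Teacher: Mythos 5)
Your proof is correct. The skeleton matches the paper's: reduce to single generators via $\tau^{-1}(K\cup L)=\tau^{-1}(K)\cup\tau^{-1}(L)$ and $\tau^{-1}(\Gamma^*\setminus L)=\Sigma^*\setminus\tau^{-1}(L)$, and dispose of length languages by length preservation. Where you diverge is the left-ideal case. The paper handles it in one line with the set identity $\tau^{-1}(\Gamma^* L)=\Sigma^*\tau^{-1}(L)$ (valid for any $\leftarrow$-transduction because prepending $x$ to the input prepends a block of length $|x|$ to the output), which for a left ideal $L=\Gamma^*L$ immediately gives $\tau^{-1}(L)=\Sigma^*\tau^{-1}(L)$; regularity is inherited from the general fact, stated earlier in the paper, that $\leftarrow$-transductions preserve regularity under preimages. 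You instead reverse to a right ideal, take its minimal DFA with absorbing final states, form the product with the Mealy machine, observe that absorbing final states persist in the product, and reverse back. Both arguments are sound; yours is longer but has the merit of being fully self-contained on the regularity of $\tau^{-1}(L)$ (you exhibit an explicit DFA), whereas the paper's identity is the more economical way to see the ideal property itself. Your observation that the preimage of each generator type is again of the \emph{same} type (not merely a Boolean combination) is also implicit in the paper's argument, so nothing is lost or gained there.
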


\begin{proof}
	For any function $\tau: \Sigma^* \to \Gamma^*$ and $K,L \subseteq \Gamma^*$
	we have $\tau^{-1}(K \cup L) = \tau^{-1}(K) \cup \tau^{-1}(L)$ and $\tau^{-1}(\Gamma^* \setminus L) = \Sigma^* \setminus \tau^{-1}(L)$.
	Now assume that $\tau$ is a $\leftarrow$-transduction. Since it is 
	length-preserving, the $\tau$-pre-image of a length language is again a length language.
	Finally, $\tau$-pre-images of left ideals are left ideals again because $\tau^{-1}(\Gamma^* L) = \Sigma^* \tau^{-1}(L)$.
\end{proof}
It remains to prove that every regular language of polynomial growth  is a Boolean combination
of regular left ideals and regular length languages.
Since a language $L$ and its reversal $L\rev$ have the same growth, we can instead show that every regular language of polynomial growth  is a Boolean combination
of regular right ideals and regular length languages.
The idea is to decompose every regular language of polynomial growth
as a finite union of languages recognized by so called linear cycle automata.

In the following we will allow {\em partial} DFAs $\mathcal{A} = (Q,\Sigma,q_0,\delta,F)$
where $\delta \colon Q \times \Sigma \to Q$ is a partial function.
An SCC $C$ of a partial DFA $\mathcal{A} = (Q,\Sigma,q_0,\delta,F)$
is called a {\em cycle} if for every $p \in C$ there exists at most one $a \in \Sigma$ such that
$\delta(p,a) \in C$.
Note that a singleton SCC $C = \{p\}$ such that $\delta(p,a) \neq p$ whenever 
$\delta(p,a)$ is defined is a cycle, too. Such a cycle is called {\em trivial}.
A partial DFA $\mathcal{A} = (Q,\Sigma,q_0,\delta,F)$ is a {\em linear cycle automaton} if
\begin{itemize}
	\item for all $p,q \in Q$ there exists at most one symbol $a \in \Sigma$ such that $\delta(p,a) = q$,
	\item every SCC $C$ of $\mathcal{A}$ is a (possibly trivial) cycle,
	      %\item the partial ordering on the SCCs of $\mathcal{A}$ is a linear ordering, %$C_1, \ldots, C_k$,
	\item there is an enumeration $C_1, \ldots, C_k$ of the SCCs of $\A$ such that
	      there is exactly one transition from $C_i$ to $C_{i+1}$ for $1 \leq i \leq k-1$, and there
	      is no transition from $C_i$ to $C_j$ for $j > i+1$,
	\item $q_0$ belongs to $C_1$,
	\item $|F| = 1$ and the unique final state belongs to $C_k$. %the maximal SCC.
\end{itemize}

\begin{lem}
	If $L$ is a regular language with polynomial growth,
	then $L$ is a finite union of languages recognized by linear cycle automata.
\end{lem}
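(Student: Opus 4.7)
The plan is to read off the decomposition from a trimmed version of the minimal DFA of $L$, using as the key structural fact that every SCC of such an automaton is already a cycle in the sense defined above. Once this is established, $L$ will split as a finite union indexed by the possible routes from $q_0$ to an accepting state through the acyclic graph of SCCs, each such route yielding one linear cycle automaton by restriction.

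I would first pass from the minimal DFA of $L$ to the partial DFA $\A'$ obtained by deleting every state from which no final state is reachable, so that $L(\A') = L$ and every state of $\A'$ is both reachable and co-reachable. The structural claim to prove is then: every SCC $C$ of $\A'$ is a cycle. If not, some $p \in C$ admits two distinct letters $a_1 \neq a_2$ with $\delta(p, a_1), \delta(p, a_2) \in C$, and strong connectedness gives words $u_1, u_2$ with $p \xrightarrow{a_i u_i} p$. Replacing each $a_i u_i$ by a suitable power, I may assume both loops have the same length $\ell$ while still starting with different letters; arbitrary concatenations of the two loops then yield $2^m$ pairwise distinct loops at $p$ of length $m\ell$, and extending each by a fixed word reaching a final state (available by co-reachability of $p$) produces $2^m$ distinct words in $L$ of length $O(m)$, contradicting polynomial growth.

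With this in place, I would enumerate every finite sequence of pairwise distinct SCCs $C_1, \dots, C_k$ of $\A'$ with $q_0 \in C_1$, every choice of bridging transition $x_i \xrightarrow{a_i} e_{i+1}$ with $x_i \in C_i$, $e_{i+1} \in C_{i+1}$ and $\delta(x_i, a_i) = e_{i+1}$ (for $1 \le i \le k-1$), and every choice of final state $f \in C_k \cap F$. To each such datum $\sigma$ I attach a partial DFA $\B_\sigma$ with state set $C_1 \cup \cdots \cup C_k$ whose transitions are exactly the internal cycle transitions of each $C_i$ together with the $k-1$ chosen bridging transitions, whose initial state is $q_0$, and whose unique final state is $f$. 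Because the targets of the internal and bridging transitions at $x_i$ lie in different SCCs, determinism of $\A'$ forces their labels to differ, so $\B_\sigma$ is a well-formed partial DFA, and a direct inspection shows it satisfies every clause of the linear cycle automaton definition.

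The identity $L = \bigcup_\sigma L(\B_\sigma)$ follows routinely: every accepting run of $\B_\sigma$ lifts to an accepting run of $\A'$, giving $\supseteq$, while for $\subseteq$ the unique $\A'$-run of any $w \in L$ visits a uniquely determined sequence of distinct SCCs together with uniquely determined entry states, exit states, and bridging transitions, specifying a single $\sigma$ for which $w \in L(\B_\sigma)$. The main obstacle is the structural claim of the second paragraph; the loop-counting argument there must be carried out carefully enough to guarantee that the $2^m$ words produced are genuinely pairwise distinct, after which the remaining steps are essentially bookkeeping.
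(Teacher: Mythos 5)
Your proposal is correct and follows essentially the same route as the paper: trim the minimal DFA to co-reachable states, observe that polynomial growth forces every SCC to be a cycle, and split $L$ over the finitely many ``path descriptions'' (chains of SCCs together with chosen bridging transitions and a final state), each yielding a linear cycle automaton by restriction. The only difference is that the paper obtains the cycle property of SCCs by citing a known lemma on regular languages of polynomial growth, whereas you prove it directly via the $2^m$ pairwise-distinct equal-length loops argument, which is a correct, self-contained substitute.
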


\begin{proof}
	Let $\A = (Q,\Sigma,q_0,\delta,F)$ be the minimal DFA for a regular language $L \subseteq \Sigma^*$ of polynomial growth.
	We first remove from $\A$ all states from which no state in $F$ is reachable; then $\A$ becomes  a partial DFA.
	By \cite[Lemma 2]{GKRS10} for every $q \in Q$ there exists a word $u_q \in \Sigma^*$ such that the language
	$\{ w \in \Sigma^* : \delta(q,w) = q \}$ is a subset of $u_q^*$. Thus,  for every SCC $C$ of $\A$ and every state $q \in C$
	there is at most one symbol $a \in \Sigma$ with $\delta(q,a) \in C$.
		
	A {\em path description} is a sequence
	\[
		P = (p_1,C_1,q_1,a_1, p_2,C_2,q_2,a_2, \dots, p_k,C_k,q_k)
	\]
	where $C_1, \dots, C_k$ is a chain in the partial ordering on the set of SCCs of $\mathcal{A}$,
	$p_1 = q_0$, $p_i,q_i \in C_i$ for all $1 \le i \le k$, $\delta(q_i,a_i) = p_{i+1}$ for all $1 \le i < k$
	and $q_k \in F$.
	Clearly there are only finitely many path descriptions.
	To every accepting run of $\mathcal{A}$ we can assign a path description, which indicates the SCCs
	traversed in the run and the transitions that lead from one SCC to the next SCC.
	We can write $L(\mathcal{A})$ as a finite union	of languages over all path descriptions.
	For every path description $P$, we take the set of all words accepted  by a run of $\A$ whose path
	description is $P$.
						
	Consider a single path description $P= (p_1,C_1,q_1,a_1, p_2,C_2,q_2,a_2, \dots, p_k,C_k,q_k)$ and let 
	$\mathcal{B}$ be the restriction of $\mathcal{A}$ to the SCCs $C_i$.
	Furthermore all transitions between two distinct SCCs are removed except for the transitions
	$(q_i,a_i,p_{i+1})$. 
	Finally, $q_k$ becomes the only final state of $\mathcal{B}$.
	Then $\mathcal{B}$ is indeed a linear cycle automaton.
\end{proof}

\begin{lem}
	Let $\A$ be a linear cycle automaton.
	There are linear cycle automata $\mathcal{A}_1, \dots, \mathcal{A}_s$ such that $L(\mathcal{A}) = \bigcup_{i=1}^s L(\mathcal{A}_i)$
	and in each $\mathcal{A}_i$ each non-trivial cycle has the same length.
\end{lem}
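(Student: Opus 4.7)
The plan is to take $N$ to be the least common multiple of the lengths of the non-trivial cycles of $\mathcal{A}$, and to decompose $L(\mathcal{A})$ according to the residues of the individual loop counts modulo $r_i := N/n_i$.

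First I would describe the shape of an accepting run. Let $C_1, \ldots, C_k$ be the SCCs of $\mathcal{A}$ in the linear order guaranteed by the definition, with designated entry points $p_i$ and exit points $q_i$. For each non-trivial $C_i$ of length $n_i$, let $v_i \in \Sigma^{n_i}$ be the label sequence of one full loop of $C_i$ starting at $p_i$, and let $u_i$ be the prefix of $v_i$ read from $p_i$ to $q_i$ (so $u_i = \varepsilon$ iff $p_i = q_i$). Because $\mathcal{A}$ is a linear cycle automaton and the inter-SCC letters are fixed, its accepting runs are in bijection with tuples $(m_i)$ of non-negative integers indexed by the non-trivial $C_i$: the run corresponding to $(m_i)$ reads exactly $v_i^{m_i} u_i$ inside each such $C_i$.

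Now set $N = \mathrm{lcm}\{n_i : C_i \text{ non-trivial}\}$ and $r_i = N/n_i$. For every tuple $\vec{\jmath} = (j_i)$ with $0 \le j_i < r_i$, define $\mathcal{A}_{\vec{\jmath}}$ by replacing each non-trivial $C_i$ with (i) a chain of $j_i n_i$ fresh trivial SCCs whose successive labels spell $v_i^{j_i}$, followed by (ii) a fresh cycle of length $N$ with loop label $v_i^{r_i}$ starting at the end of the chain; the exit from this new cycle is attached at the state reached from its entry after reading exactly $u_i$. All other SCCs of $\mathcal{A}$, all inter-SCC transitions, and the initial and final states are kept (the first state of the chain plays the role of $p_i$ and the exit state of the new cycle plays the role of $q_i$). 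An accepting run of $\mathcal{A}_{\vec{\jmath}}$ reads $v_i^{j_i} \cdot (v_i^{r_i})^{m'_i} \cdot u_i = v_i^{j_i + r_i m'_i} u_i$ inside the replacement of $C_i$ for some $m'_i \ge 0$, so $L(\mathcal{A}_{\vec{\jmath}})$ is precisely the set of words of $L(\mathcal{A})$ whose loop-count tuple satisfies $m_i \equiv j_i \pmod{r_i}$ for every non-trivial $C_i$.

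Since every non-negative $m_i$ has a unique residue modulo $r_i$, the finite union $\bigcup_{\vec{\jmath}} L(\mathcal{A}_{\vec{\jmath}})$ equals $L(\mathcal{A})$, and by construction every non-trivial cycle of each $\mathcal{A}_{\vec{\jmath}}$ has length $N$. The main thing to check is that each $\mathcal{A}_{\vec{\jmath}}$ is indeed a linear cycle automaton: the SCCs still form a chain (each replacement is itself a chain of trivial SCCs followed by a single cycle spliced between the neighbours of $C_i$), and the ``at most one letter between any two states'' and the ``single initial / single final state'' conditions are preserved because we use fresh state copies inside every replacement. I expect this routine bookkeeping to be the only real obstacle in the proof.
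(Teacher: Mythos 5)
Your proposal is correct and follows essentially the same route as the paper: take the least common multiple $N$ of the non-trivial cycle lengths, and for each choice of residues $0 \le j_i < N/n_i$ replace the $i$-th non-trivial cycle by a path of length $j_i n_i$ followed by a cycle of length $N$ labelled by the appropriate power of the original loop word, then take the finite union over all residue tuples. The paper's proof is just a terser version of the same construction (it writes $d_i$ for your $j_i$ and omits the explicit loop-count bookkeeping).
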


\begin{proof}
	Let $m_1, \dots, m_k$ be the lengths of each non-trivial cycle (SCC) in $\mathcal{A}$ and $m$ be the least common multiple of $m_1, \dots, m_k$.
	The language $L(\mathcal{A})$ is the finite union of all languages accepted by linear cycle automata that are obtained from $\A$ by doing 
	the following replacement for every non-trivial cycle
	\[
		C :   q_1 \xrightarrow{a_1} q_2 \xrightarrow{a_2} q_3 ~ \cdots ~ q_{m_i-1} \xrightarrow{a_{m_i-1}}  q_{m_i} \xrightarrow{a_{m_i}} q_1
		% p \xrightarrow{w} p 
	\]
	of $\A$. W.l.o.g. assume that $q_1$ is either the initial state of $\A$ or the target state of the unique transition entering $C$.
	Choose an arbitrary number 
	$0 \le d_i < \frac{m}{m_i}$ (we then take the finite union over all such choices).
	We replace $C$ by a path $P$ of length $d_i m_i$ followed by cycle $C'$ of length $m$, having the form
	\[
		P : q'_1 \xrightarrow{w^{d_i}} q_1,   \quad C' : q_1 \xrightarrow{a_1} q_2 \xrightarrow{a_2} q_3 \cdots q_{m-1} \xrightarrow{a_{m-1}}  q_{m} \xrightarrow {a_m} q_1,
		%r \xrightarrow{w^{m/m_i}} r.	    
		%q^{(0)} \xrightarrow{w} q^{(1)} \xrightarrow{w} \cdots \xrightarrow{w} q^{(d-1)} \xrightarrow{w} q \xrightarrow{w^{m/m_i}} q.
	\]
	where $a_1 a_2 \cdots a_m = (a_1 a_2 \cdots a_{m_i})^{m/m_i}$. All states on the path $P$ except for $q_1$ are new and also all states
	$q_{m_i+1}, \ldots, q_m$ are new. 
	If $q_1$ is the initial state of $\A$ then $q'_1$ is the new initial state. Otherwise,  the unique transition entering $C$ is redirected to the new state $q'_1$.  	
	The union of the languages recognized by all automata of this form is $L(\mathcal{A})$.
\end{proof}

\begin{lem}
	Let $\mathcal{A}$ be a linear cycle automaton in which each non-trivial cycle has the same length.
	Then $L(\mathcal{A})$ is a Boolean combination of regular right ideals and regular length-languages.
\end{lem}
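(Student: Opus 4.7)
My plan is to exhibit $L(\mathcal{A})$ as an intersection $L(\mathcal{A}) = P \cap R \cap \Lambda$, where $P$ is a prefix-closed regular language (so $\Sigma^* \setminus P$ is a right ideal), $R$ is a regular right ideal, and $\Lambda$ is a regular length language; this would realize $L(\mathcal{A})$ as a Boolean combination of right ideals and length languages. Let $C_1, \ldots, C_k$ denote the SCCs of $\mathcal{A}$ in their linear order, let $m$ be the common length of the non-trivial cycles, let $p_k$ be the entry state of $C_k$, and let $f \in C_k$ be the unique final state. If $\mathcal{A}$ has no non-trivial cycle, then $L(\mathcal{A})$ is a single word $w_0$ and $\{w_0\} = w_0\Sigma^* \cap \Sigma^{|w_0|}$ is already of the desired form; so I assume $m$ is defined.

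The three ingredients will be as follows. I take $P := \mathrm{Pref}(L(\mathcal{A}))$, which is regular and prefix-closed, so $\Sigma^* \setminus P$ is a regular right ideal and $P$ itself lies in the Boolean algebra generated by right ideals. I take $R := K\Sigma^*$ where $K$ is the regular language of words whose run in $\mathcal{A}$ ends inside $C_k$; this $R$ is a regular right ideal. Finally, $\Lambda$ is the length language consisting of those words whose length lies in the unique residue class modulo $m$ that the lengths of words in $L(\mathcal{A})$ occupy.

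The forward inclusion $L(\mathcal{A}) \subseteq P \cap R \cap \Lambda$ is immediate. For the converse I take $w \in P \cap R \cap \Lambda$. Because $w \in P$ the run $\mathcal{A}(w)$ is defined, and because some prefix of $w$ lies in $K$ and $C_k$ has no outgoing transitions in $\mathcal{A}$, the run stays in $C_k$ from its first entry onward; writing $t_0$ for that first entry time, $\mathcal{A}(w)$ is the state reached by walking $(|w|-t_0)$ steps around $C_k$ starting at $p_k$. The crux is that $t_0 \bmod m$ depends only on $\mathcal{A}$, not on $w$: each earlier non-trivial SCC $C_i$ is traversed via some number of complete loops of length $m$ together with a fixed path $\pi_i$ of length $|\pi_i| < m$ to its exit state, and there are $k-1$ inter-SCC transitions, so $t_0 \equiv \sum_{i<k} |\pi_i| + (k-1) \pmod m$. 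Combined with $w \in \Lambda$, this pins $(|w|-t_0) \bmod m$ to the fixed number $\ell_k$ of steps from $p_k$ to $f$ along the cycle, forcing $\mathcal{A}(w) = f$ and hence $w \in L(\mathcal{A})$. The main obstacle is exactly this residue computation: it is the sole place where the equal-length cycle hypothesis is used essentially, since without it $t_0 \bmod m$ would depend on $w$ through the varying loop lengths and the residue of $|w|$ alone would no longer determine the terminal state in $C_k$.
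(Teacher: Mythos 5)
Your proof is correct and takes essentially the same route as the paper's: both write $L(\mathcal{A})$ as an intersection of a regular right ideal, the prefix-closed set $\mathrm{Pref}(L(\mathcal{A}))$ (the paper phrases it as $\{x : \mathrm{Pref}(x)\subseteq\mathrm{Pref}(L)\}$, which is the same language), and a length language recording $|w|$ modulo the common cycle length $m$ --- the only differences being that the paper uses $L\Sigma^*$ where you use $K\Sigma^*$, and verifies the converse by factoring $x=yz$ with $y\in L$ and walking $z$ around the final cycle rather than via your residue computation for the entry time $t_0$. The one spot to tidy is the case where $C_k$ is trivial while $m$ is still defined by an earlier SCC: there ``walking around $C_k$'' is vacuous, and you should instead observe that the run on $w$, being defined yet unable to leave the transitionless final state, forces $|w|=t_0$ and hence $\mathcal{A}(w)=f$ directly.
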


\begin{proof}
	Let $L \subseteq \Sigma^*$ be the language recognized by $\mathcal{A}$.
	There are numbers $p,q \ge 0$ such that each word in $L$ has length $p+qn$ for some $n \ge 0$.
	Here $q$ is the uniform length of the non-trivial cycles in $\mathcal{A}$.
	We claim that $L$ is the intersection of the three languages
	\begin{itemize}
		\item $L \Sigma^*$, which is a regular right ideal,
		\item $\{ x \in \Sigma^* : \mathrm{Pref}(x) \subseteq \mathrm{Pref}(L) \}$, which is the complement of a regular right ideal,
		\item $\Sigma^p  (\Sigma^q)^*$, which is a length language.
	\end{itemize}
	Clearly $L$ is contained in the described intersection. Conversely, consider a word $x$ in the intersection.
	We have $x = yz$ where $y \in L$. Hence, $|y| = p + qn$ for some $n$.
	Since $|x| = p+qn'$ for some $n'$, the length $|z|$ is divided by $q$.
	Since $y \in L$, $\mathcal{A}(y)$ is the unique final state of $\mathcal{A}$, which 
	belongs to the unique maximal SCC $C$ of $\mathcal{A}$.
	If $C$ is non-trivial, then it is a cycle of length $q$ and also $\mathcal{A}(yz)$ is the final state, i.e., $x \in L$.
	If $C$ is trivial, then $y, yz \in L$ implies  $z = \varepsilon$
	and $x$ is also accepted by $\mathcal{A}$.
\end{proof}
This concludes the proof for the direction from 4.~to 5.

\subsection{Lower bounds}

Recall that the space bound in \cref{prop:uniform-algorithm} is exponential in the number $m$ of automaton states.
In the following we show that this bound is tight, already for the fixed-size sliding window model.
For $k \ge 0$ we define the language $L_k \subseteq \{0, \dots, k\}^*$ by
\begin{itemize}
	\item $L_0 = 0^+$, and
	\item $L_k = L_{k-1} \cup L_{k-1} \, k \, \{0, \dots, k-1\}^*$ for $k \geq 1$.
\end{itemize}
Observe that a word $a_1 \cdots a_n \in \{0, \dots, k\}^*$ belongs to $L_k$ if and only if
$n \ge 1$, $a_1 = 0$ and for each $1 \le i \le n$ it holds that $a_i = 0$ or $a_i \neq \max_{1 \le j \le i-1} a_j$.
We can construct a DFA $\A_k$ for $L_k$ with $k+3$ states,
which stores the maximum value seen so far in its state,
see \cref{fig:zimim}.

\tikzstyle{state} = [circle,draw, minimum size = 20pt]
	
\begin{figure}[t]
	\centering
	\begin{tikzpicture}[semithick,->,>=stealth]
		\node[state, initial, initial text={}] (in) {};
		\node[state, right = 2em of in] (0) {0};
		\node[state, right = 2em of 0] (1) {1};
		\node[state, right = 2em of 1] (2) {2};
				
		\node[right = 1em of 2]  {$\cdots$};
				
		\node[state, right = 4em of 2] (s) {$s$};
		\node[state, right = 4em of s] (t) {$t$};
				
		\node[right = 1em of t]  {$\cdots$};
				
		\node[state, right = 4em of t] (k) {$k$};
				
		\draw (in) edge node[above] {\footnotesize $0$} (0);
		\draw [loop below] (0) edge node[below] {\footnotesize $0$} (0);
		\draw (0) edge node[above] {\footnotesize $1$} (1);
		\draw [bend left = 40] (0) edge node[above] {\footnotesize $2$} (2);
		\draw [loop below] (1) edge node[below] {\footnotesize $0$} (1);
		\draw (1) edge node[above] {\footnotesize $2$} (2);
		\draw [loop below] (2) edge node[below] {\footnotesize $0,1$} (2);
												
		\draw [loop below] (s) edge node[below] {\footnotesize $t < s$} (s);
		\draw [bend left = 40] (s) edge node[above] {\footnotesize $t > s$} (t);
		\draw [loop below] (k) edge node[below] {\footnotesize $0, \ldots, k-1$} (k);
	\end{tikzpicture}
	\caption{An automaton for $L_k$. Omitted transitions lead to a sink state. All states are final, except from the sink state.}
	\label{fig:zimim}
\end{figure}
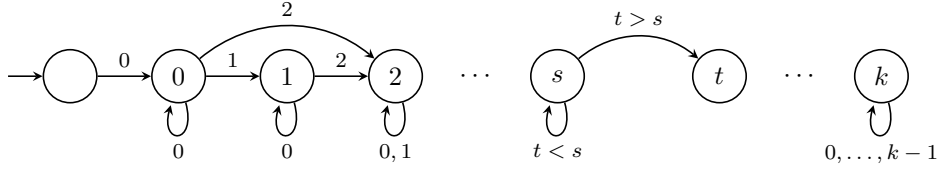

To prove that each $L_k$ belongs to $\V(\mathcal{O}(\log n))$,
we show that $L_k$ is a Boolean combination of regular left ideals.
Given a word $x = a_1 \cdots a_n \in \Sigma^*$ and a language $L \subseteq \Sigma^*$,
a position $1 \le i \le n$ is an {\em $L$-alternation point},
if exactly one of the words $a_i \cdots a_n$ and $a_{i+1} \cdots a_n$ belongs to $L$.
Denote by $\mathrm{alt}_L(x)$ the number of $L$-alternation points in $x$.

\begin{lem}
	\label{lem:alternations}
	Let $L \subseteq \Sigma^*$ be regular.
	Then $L$ is a Boolean combination of at most $k$ regular left ideals
	if and only if $\mathrm{alt}_L(x) \le k$ for all $x \in \Sigma^*$.
\end{lem}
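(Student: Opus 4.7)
For the forward direction, suppose $L = f(I_1, \ldots, I_k)$ for some Boolean function $f$ and regular left ideals $I_1, \ldots, I_k$. Fix $x = a_1 \cdots a_n$ and consider the chain of suffixes $\varepsilon, a_n, a_{n-1}a_n, \ldots, x$, where each is obtained from the previous by prepending one symbol. For any left ideal $I$, prepending a symbol preserves membership in $I$ (since $\Sigma^*I \subseteq I$), so the membership sequence along the chain is monotone non-decreasing, yielding $\mathrm{alt}_{I}(x) \le 1$. Since every $L$-alternation point must be accompanied by an $I_j$-alternation point for some $j$, we obtain $\mathrm{alt}_L(x) \le \sum_{j=1}^k \mathrm{alt}_{I_j}(x) \le k$.

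For the converse, let $\B = (Q, \Sigma, q_0, \delta, F)$ be the minimal DFA for $L\rev$. Reading $y = a_n \cdots a_1$ from left to right in $\B$ traces the $L$-memberships of the suffixes of $x = y\rev$ from shortest to longest, so the number of $F$-alternations along this run equals $\mathrm{alt}_L(x)$. For each state $q \in Q$, let $\alpha(q)$ denote the maximum number of $F$-alternations along any run from $q_0$ to $q$. The hypothesis yields $\alpha(q) \le k$ for every reachable $q$. A crucial observation is that any two runs from $q_0$ to $q$ have the same parity of alternations, namely $F(q_0) \oplus F(q)$, so $\alpha(q) \equiv F(q_0) \oplus F(q) \pmod{2}$.

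For $1 \le j \le k$ define $J_j = \{ y \in \Sigma^* : \alpha(\B(y)) \ge j \}$. Each $J_j$ is regular since it depends only on the state $\B(y)$ of a finite DFA. Moreover, $J_j$ is a right ideal: if $y \in J_j$ then some run from $q_0$ to $\B(y)$ has at least $j$ alternations, and extending it by any symbol $a$ produces a run from $q_0$ to $\B(ya)$ with at least $j$ alternations, hence $\B(ya) \in J_j$. Since $y \in L\rev$ iff $\alpha(\B(y))$ has parity $1 \oplus F(q_0)$, and $\alpha(\B(y)) = j$ iff $y \in J_j \setminus J_{j+1}$ (with the convention $J_{k+1} = \emptyset$), the language $L\rev$ is a Boolean combination of the $k$ regular right ideals $J_1, \ldots, J_k$. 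Reversing yields $L$ as a Boolean combination of the $k$ regular left ideals $J_j\rev$.

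The main obstacle is the converse. The two key ingredients are monotonicity of $\alpha$ along DFA transitions (which gives the right-ideal property of each $J_j$) and the parity identity $\alpha(q) \equiv F(q_0) \oplus F(q) \pmod{2}$ (which lets us recover $F(\B(y))$ from the indicators $[y \in J_1], \ldots, [y \in J_k]$); everything else is routine bookkeeping.
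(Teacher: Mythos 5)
Your proof is correct and follows essentially the same route as the paper: left ideals admit at most one alternation point each (giving the forward bound), and for the converse one stratifies words by alternation count into a decreasing chain of ideals and recovers $L$ as the union of the layers of the appropriate parity. The only cosmetic difference is that you pass to the minimal DFA of $L\rev$ and use the maximum alternation count $\alpha(q)$ over runs to a state, whereas the paper defines its chain $P_i = \{x \in \Sigma^* : \mathrm{alt}_L(x) \ge i\}$ of left ideals directly from the exact count; your parity identity makes the two bookkeeping choices interchangeable.
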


\begin{proof}
	If $L$ is a Boolean combination of regular left ideals $L_1, \dots, L_k$,
	then each $L$-alternation point in a word is an $L_i$-alternation point for some $1 \le i \le k$.
	Since $L_i$ is a left ideal, each word has at most one $L_i$-alternating point and
	we obtain $\mathrm{alt}_L(x) \le k$ for all $x \in \Sigma^*$.
		
	Conversely, assume that $\mathrm{alt}_L(x) \le k$ for all $x \in \Sigma^*$.
	Without loss of generality assume $\varepsilon \in L$, which ensures that $x \in L$ if 
	and only if $\mathrm{alt}_L(x)$ is even. If $\varepsilon \not\in L$, then $x \in L$ if 
	and only if $\mathrm{alt}_L(x)$ is odd, and we can argue similarly as below.
		
	We define $P_i = \{ x \in \Sigma^* : \mathrm{alt}_L(x) \ge i \}$ for $i \ge 0$
	and write $L$ as
	\[
		L = \bigcup_{0 \le i \le k \text{ even}} (P_i \setminus P_{i+1}).
	\] 
	Each $P_i$ is a left ideal because prolonging a word on the left only increases the number of $L$-alternation points.
	Furthermore, each $P_i$ is regular: by  enriching a DFA for $L$ with a counter up to $i$, a DFA can verify that the input
	$x$ satisfies $\mathrm{alt}_L(x) \ge i$.	
	%\begin{itemize}
	%	\item $P_{0} = \Sigma^*$,
	%	\item $P_{2i+1} = \Sigma^* \, ((\Sigma^* \, P_{2i}) \setminus L)$,
	%	\item $P_{2i+2} = \Sigma^* \, ((\Sigma^* \, P_{2i+1}) \cap L)$.
	%\end{itemize}
	Using the fact that $P_{0} = \Sigma^*$ and $P_i = \emptyset$ for all $i > k$,
	we can write $L$ as
	\[
		L = \begin{cases} (\Sigma^* \setminus P_{1}) \cup (P_{2} \setminus P_{3}) \cup \cdots \cup (P_{k-2} \setminus P_{k-1}) \cup P_{k}, & \text{if $k$ is even}  \\
		(\Sigma^* \setminus P_{1}) \cup (P_{2} \setminus P_{3}) \cup \cdots \cup (P_{k-1} \setminus P_{k}),
		& \text{if $k$ is odd.}
		\end{cases}
	\]
	This proves that $L$ is a Boolean combination of the regular left ideals $P_{1}, \dots, P_{k}$,
	which concludes the proof.
\end{proof}

\begin{lem}
	For all $k \ge 0$ and $x \in \N^*$ we have $\mathrm{alt}_{L_k}(x) \le 2^{k+2} - 2$. Moreover,
	$V_{L_k}(n)  \le (2^{k+3} \cdot (k+3) +1) \cdot \log n  \ + \ c_k$ for $n$ large enough, where $c_k$ only depends on $k$.
\end{lem}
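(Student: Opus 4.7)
My plan is to prove the alternation bound by induction on $k$, and then to obtain the space bound by direct application of \cref{prop:uniform-algorithm} to the DFA $\A_k$ of \cref{fig:zimim}.

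For $\mathrm{alt}_{L_k}(x) \le 2^{k+2} - 2$, I would induct on $k$. The base $k = 0$ is immediate: since $L_0 = 0^+$, the $L_0$-suffix-status sequence of any $x$ has the shape $0^* 1^* 0$, giving at most $2 = 2^{0+2}-2$ flips. For the inductive step, fix $x = a_1 \cdots a_n$, let $p^*$ be the largest index with $a_{p^*} \ge k$, and let $q$ be the second-largest such (set $p^* = 0$ or $q = 0$ if these do not exist). Using $L_k \subseteq \{0, \dots, k\}^*$ and the decomposition $L_k = L_{k-1} \cup L_{k-1} \cdot k \cdot \{0, \dots, k-1\}^*$, a suffix $s_i = a_i \cdots a_n$ lies in $L_k$ iff it contains no letter $> k$, at most one letter equal to $k$, and, when exactly one, its prefix before that $k$ is in $L_{k-1}$. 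From this I would show that the bits $r_i := [s_i \in L_k]$ vanish for $i \le q$ and that $(r_q, r_{q+1}, \ldots, r_{p^*})$ and $(r_{p^*}, r_{p^*+1}, \ldots, r_{n+1})$ are exactly the $L_{k-1}$-suffix-status sequences of
\[
  y' := k \cdot a_{q+1} \cdots a_{p^*-1} \qquad \text{and} \qquad z' := k \cdot a_{p^*+1} \cdots a_n,
\]
respectively (the case $a_{p^*} > k$ degenerates so that only $z'$ contributes). The crucial trick is that prepending $k$ to each piece pins the first bit of its $L_{k-1}$-suffix-status sequence to $0$, since $k \notin L_{k-1} \subseteq \{0, \dots, k-1\}^*$, which matches the forced $0$ at $r_q$ and $r_{p^*}$ at no flip cost. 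Counting flips gives
\[
  \mathrm{alt}_{L_k}(x) \le \mathrm{alt}_{L_{k-1}}(y') + \mathrm{alt}_{L_{k-1}}(z') \le 2(2^{k+1}-2) = 2^{k+2}-4.
\]

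For the space bound, I would invoke the DFA case of \cref{prop:uniform-algorithm} with $\A = \A_k$, which has $m = k + 3$ states; this yields precisely the claimed $(2^{k+3}(k+3)+1)\log n + c_k$. The only hypothesis to verify is that $\A_k\revdet$ is well-behaved. I would do so by listing the reachable states of $\A_k\revdet$ explicitly: writing $\star$ for the initial state of $\A_k$, they are the sets $T_c := \{0, \dots, c-1\}$ and $T'_c := \{\star\} \cup T_c$ for $c \in [1, k+1]$, together with $\emptyset$. A non-zero letter $c'$ sends the current state to $T_{c'}$ if $c' < c$ (strictly decreasing the maximum element) and to $\emptyset$ otherwise, while $0$ either adjoins $\star$ (sending $T_c$ to $T'_c$) or is a self-loop (at $T'_c$ and $\emptyset$). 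Hence every cycle is a self-loop, every SCC is a singleton, and $\A_k\revdet$ is trivially well-behaved.

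The main obstacle is the bookkeeping in Part 1: a naive decomposition around the last $k$ would pay up to three extra boundary flips per recursion level (between the zero-prefix and the $y'$-block, between the $y'$-block and the fixed $r_{p^*} = 0$, and between $r_{p^*}$ and the $z'$-block), which would accumulate into an $\Omega(k)$ additive term and spoil the exponential bound. Prepending $k$ to both $y'$ and $z'$ is the insight that absorbs these boundary flips into $\mathrm{alt}_{L_{k-1}}(y')$ and $\mathrm{alt}_{L_{k-1}}(z')$ at zero cost; identifying this clean decomposition is the heart of the argument.
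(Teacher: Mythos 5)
Your proof of the alternation bound is correct and follows essentially the same induction as the paper's: the paper splits $x$ at the \emph{last} occurrence of a letter $\ge k$ and pays one extra boundary alternation, obtaining $1 + 2(2^{k+1}-2) \le 2^{k+2}-2$, whereas you split at the last \emph{two} such occurrences and absorb the boundaries into the recursive calls by prepending $k$, obtaining $2^{k+2}-4$. Both recurrences close, so this part is a harmless refinement. (Your worry that a naive split would cost an additive $\Omega(k)$ is not quite accurate: boundary costs enter a recurrence of the form $\mathrm{alt}_{L_k} \le 2\,\mathrm{alt}_{L_{k-1}} + c$ and therefore get multiplied into the $2^k$ term; with $c=1$, as in the paper, the bound $2^{k+2}-2$ still holds.)

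The gap is in your verification that $\A_k\revdet$ is well-behaved. Your explicit description of its reachable part is wrong for $k \ge 2$. The state reached from a set $S$ on a letter $c' \ge 1$ is the predecessor set $\{p : \delta(p,c') \in S\}$, and since every state $s > c'$ of $\A_k$ has a self-loop on $c'$ (see \cref{fig:zimim}), reading $c'$ from $T'_c = \{\star\}\cup\{0,\dots,c-1\}$ with $1 \le c' < c$ yields $\{0,\dots,c-1\}\setminus\{c'\}$, not $T_{c'} = \{0,\dots,c'-1\}$. Concretely, for $k=2$ the initial state $\{\star,0,1,2\}$ goes to $\{0,2\}$ on letter $1$, which is not among your $T_c$, $T'_c$, $\emptyset$. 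So the transition structure you describe, and with it the claim that you have listed all reachable states and that all cycles are self-loops, does not hold as stated. The conclusion itself is true (all SCCs of $\A_k\revdet$ are indeed singletons, but showing this needs a different argument, e.g.\ that a letter $c \ge 1$ fixes $S\cap\{0,\dots,k\}$ only when $S\cap\{0,\dots,c\}=\emptyset$ and otherwise strictly decreases a suitable rank of that set). The cheaper fix is the paper's own route, which makes Part 2 a consequence of Part 1: by \cref{lem:alternations} the alternation bound shows that $L_k$ is a Boolean combination of regular left ideals, hence $L_k \in \V(\mathcal{O}(\log n))$ by \cref{thm-characterizations}; the contrapositive of \cref{prop-linear-lower-bound} then forces every DFA for $L_k\rev$, in particular $\A_k\revdet$, to be well-behaved, and \cref{prop:uniform-algorithm} with $m=k+3$ gives the stated constant.
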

	
\begin{proof}
	We prove the lemma by induction on $k \ge 0$.
	Clearly each word has at most 2 alternation points with respect to $L_0 = 0^+$.
	Now let $k \ge 1$ and $x \in \N^*$.
	If all occurring numbers in $x$ are at most $k-1$,
	then $\mathrm{alt}_{L_k}(x) = \mathrm{alt}_{L_{k-1}}(x)$ and the claim follows by induction.
	Otherwise consider the last occurrence of a number $\ge k$ and factorize $x = y \ell z$
	where $y \in \N^*$, $\ell \ge k$ and $z \in \{0, \ldots, k-1\}^*$.
	If $\ell > k$, then the first $|y|$ positions of $x$ cannot contain $L_k$-alternation points
	and we get 
	\[
		\mathrm{alt}_{L_k}(x) \le 1 + \mathrm{alt}_{L_k}(z) = 1 + \mathrm{alt}_{L_{k-1}}(z) \le 2^{k+1} - 1 \le 2^{k+2} - 2.
	\]
	Now assume $x = y k z$.
	By the definition of $L_k$ each $L_k$-alternation point in $x$ is either (i) an $L_{k-1}$-alternation point in $y$,
	(ii) an $L_{k-1}$-alternation point in $z$, % (iii) the last position of $y$ 
	or (iii) position $|y|+1$ (i.e., the last position, where $k$ occurs).
	Hence we have
	\[
		\mathrm{alt}_{L_k}(x)  \le  1 + \mathrm{alt}_{L_{k-1}}(y) + \mathrm{alt}_{L_{k-1}}(z) \le  1 + (2^{k+1} - 2) + (2^{k+1} - 2)  \le 2^{k+2} - 2.                           	
		%= 2 + \mathrm{alt}_{L_{k-1}}(y) + \mathrm{alt}_{L_{k-1}}(z) \le 2 + (2^{k+1} - 2) + (2^{k+1} - 2)
		%= 2^{k+2} - 2.
	\]
	From \cref{prop:uniform-algorithm,thm-characterizations,lem:alternations} 
	we obtain $V_{L_k}(n) \le (2^{k+3} \cdot (k+3) +1) \cdot \log n  \ + \ c_k$ for $n$ large enough, where $c_k$ only depends on $k$.
\end{proof}

\begin{thm}
	For each $k \ge 1$ there exists a language $L_k \subseteq \{0, \ldots, k\}^*$
	recognized by a DFA with $k+3$ states
	such that $F_{L_k}(n) \geq (2^k-1) \cdot \log n - c'_k$, where $c'_k$ only depends on $k$.
	%(2^k-1) \cdot (\log n - k)$. 
\end{thm}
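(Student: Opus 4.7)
The plan is to exhibit, for every sufficiently large $n$, roughly $n^{2^k - 1}$ length-$n$ words that are pairwise inequivalent under the Myhill--Nerode congruence of $L_{k,n} := \{w \in \{0,\ldots,k\}^* : \last_n(w) \in L_k\}$; the lower bound on $F_{L_k}(n)$ then follows by counting states of the optimal fixed-size DFA. Set $N = 2^k - 1$ and define the peak sequence $S_k \in \{1,\ldots,k\}^N$ recursively by $S_1 = 1$ and $S_k = S_{k-1} \cdot k \cdot S_{k-1}$. Writing $S_k = s_1 s_2 \cdots s_N$, I associate to every tuple $\mathbf{a} = (a_1,\ldots,a_{N+1}) \in \N^{N+1}$ with $a_1 \ge 1$ and $a_1 + \cdots + a_{N+1} = n - N$ the length-$n$ word
\[
w(\mathbf{a}) \;=\; 0^{a_1} s_1 \, 0^{a_2} s_2 \cdots s_N \, 0^{a_{N+1}} \in \{0,\ldots,k\}^n.
\]

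First I would prove, by induction on $k$ along the recursion $S_k = S_{k-1} \cdot k \cdot S_{k-1}$, that \emph{every contiguous suffix} of $S_k$ is a valid $L_k$ running-max sequence starting from max~$0$: the central $k$ raises the running max to $k$, after which the trailing $S_{k-1}$ only contains symbols in $\{1,\ldots,k-1\}$, while a suffix lying inside one copy of $S_{k-1}$ is covered by the induction hypothesis. This gives $w(\mathbf{a}) \in L_k$ and the following dichotomy at every suffix length $\ell \in \{1,\ldots,n-1\}$: letting $s$ be the length-$\ell$ suffix of $w(\mathbf{a})$, either $s$ starts with a nonzero symbol (in which case $sz \notin L_k$ for every $z$, because $L_k \subseteq 0 \cdot \{0,\ldots,k\}^*$) or $s$ starts with $0$ (in which case $s \cdot 0^{n-\ell} \in L_k$, since the nonzero symbols appearing in $s$ form a suffix of $S_k$).

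The positions of the nonzero symbols of $w(\mathbf{a})$ are $p_j(\mathbf{a}) = a_1 + \cdots + a_j + j$ for $1 \le j \le N$, and $\mathbf{a}$ is recovered from $(p_1(\mathbf{a}),\ldots,p_N(\mathbf{a}))$. Given $\mathbf{a} \neq \mathbf{a}'$, pick the smallest $j$ with $p_j(\mathbf{a}) \neq p_j(\mathbf{a}')$, say $p_j(\mathbf{a}) < p_j(\mathbf{a}')$. At position $p_j(\mathbf{a})$, the word $w(\mathbf{a})$ carries the nonzero symbol $s_j$, whereas $w(\mathbf{a}')$ carries a $0$ (the position lies strictly between the $(j-1)$-th and $j$-th peaks of $w(\mathbf{a}')$, since $p_i(\mathbf{a}) = p_i(\mathbf{a}')$ for $i < j$). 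Setting $\ell = n - p_j(\mathbf{a}) + 1 < n$ and $z = 0^{n-\ell}$, the dichotomy yields $\last_n(w(\mathbf{a}')z) \in L_k$ and $\last_n(w(\mathbf{a})z) \notin L_k$, so $w(\mathbf{a}) \not\sim_{L_{k,n}} w(\mathbf{a}')$.

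The number of eligible tuples equals $\binom{n-1}{N}$ by stars-and-bars (substitute $a_1' = a_1 - 1$), so
\[
F_{L_k}(n) \;\ge\; \log \binom{n-1}{2^k - 1} \;\ge\; (2^k - 1)\log n - c'_k
\]
for $n$ large, where $c'_k$ absorbs $\log((2^k - 1)!)$ and other $k$-dependent constants. The hardest step will be the suffix-validity lemma for $S_k$---one must verify that every truncation of the nested sequence $S_{k-1} \cdot k \cdot S_{k-1}$ remains a valid running-max sequence from max~$0$. Once that is proved, the dichotomy and hence distinguishability become immediate from the trivial all-zero extension, and the exponent $2^k - 1$ drops out of the binomial count.
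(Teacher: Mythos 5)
Your proposal is correct and follows essentially the same route as the paper: the paper's witness set $Z_k = Z_{k-1}\,k\,Z_{k-1}$ (with $Z_0 = 0^*$) consists of exactly the words whose nonzero subsequence is your peak sequence $S_k$ of length $2^k-1$, your suffix-validity lemma is the paper's inductive claim that every suffix of such a word starting with $0$ lies in $L_k$ (while suffixes starting with a positive symbol never do), and both arguments distinguish two such words by shifting in zeros until the first discrepancy reaches the left end of the window, yielding the same $\binom{n}{2^k-1}$-type state count. The only cosmetic differences are your restriction to $a_1\ge 1$ (giving $\binom{n-1}{2^k-1}$ rather than $\binom{n}{2^k-1}$, which is immaterial) and locating the discrepancy via peak positions rather than the first differing symbol.
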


\begin{proof}
	Of course, we take the languages $L_k$ considered in this section.
	We define the languages $Z_0 = 0^*$ and $Z_k = Z_{k-1} \; k \; Z_{k-1}$ for $k \geq 1$. An example word from $Z_3$ is $0010002100300010020010$.
	The crucial fact about words $x \in Z_k$ that we are using is the following: Every suffix of $x$ that starts with $0$ belongs to $L_k$ and 
	every suffix of $x$ that starts with $a > 0$ does not belong to $L_k$. The former follows by induction on $k$; the latter holds since words in $L_k$ start with $0$.
	        
	Fix some $k \ge 1$ and
	let $\B=(\B_n)_{n \ge 0}$ be a fixed-size sliding window algorithm for $L_k$ where $S_n$ is the state space of $\B_n$.
	Let us consider window size $n$.
	We claim that $\B_n$ distinguishes all $\binom{n}{2^k-1}$ words in $Z_k$ of length $n$. 
		
	\medskip
	\noindent
	{\em Claim.} Let $x, y \in Z_k$ such that $|x|=|y| = n$ and $x \neq y$. Then $\B_n(x) \neq \B_n(y)$.
		
	\medskip
	\noindent
	In order to get a contradiction,
	consider two words $x,y \in Z_k$ with $|x|=|y| = n$, $x \neq y$, and $\B_n(x) = \B_n(y)$.
	Thus, we can write
	$x = z a u$ and $y = z b v$ with $a,b \in \{0,\ldots,k\}$, $a \neq b$. We must have $a = 0$ and $b > 0$ or vice versa.
	Assume that $a=0$ and $b > 0$. Thus, $au \in L_k$ and $bv \not\in L_k$. Hence, we have
	$\wnd(x0^{|z|}) = au0^{|z|} \in L_k$ and $\wnd(y0^{|z|}) = bv0^{|z|} \not\in L_k$.
	But if $\B_n(x) = \B_n(y)$, then also 
	$\B_n(x0^{|z|}) = \B_n(y0^{|z|})$, which yields a contradiction.
	
	\medskip
	\noindent
	The above claim implies that $\B_n$ has at least $\binom{n}{2^k-1}$ many states.
	Hence, the space complexity of $\B$ is at least
	$$
	\log \binom{n}{2^k-1} \geq \log \bigg( \frac{n}{2^k-1} \bigg)^{2^k-1}  \geq \log \bigg( \frac{n}{2^k} \bigg)^{2^k-1} = (2^k-1) \cdot (\log n - k) .
	$$
	This concludes the proof.
\end{proof}

\section{Constant space algorithms}

\Cref{lem:psi-L} implies that $V_L(n) \geq \log n$ if $\emptyset \neq L \neq \Sigma^*$. 
Thus, only trivial languages have a constant-space variable-size streaming algorithm.
This changes in the fixed-size window model. In \cite{GanardiHL16} we characterized those regular languages
$L$ in $\F(\mathcal{O}(1))$ in terms
of the  left Cayley graph of the syntactic monoid of $L$. Here we give a more natural characterization
that will be used in the next section.

A language $L\subseteq\Sigma^*$ is called {\em $k$-suffix testable}
if for all $x,y \in \Sigma^*$ and $z \in \Sigma^k$ we have
\[
	xz \in L \iff yz \in L.
\]
Equivalently, $L$ is a Boolean combination of languages of the form $\Sigma^* w$ where $w \in \Sigma^{\le k}$.
We call $L$ {\em suffix testable} if it is $k$-suffix testable for some $k \ge 0$. Clearly, every finite language
is suffix testable: if $L \subseteq \Sigma^{\leq k}$ then $L$ is $(k+1)$-suffix testable. 
The class of suffix testable languages corresponds to the variety $\mathbf{D}$ of definite monoids \cite{Str85}.

Recall the languages
\begin{equation} \label{eq-L_n}
	L_n := \{w \in \Sigma^* : \last_n(w) \in L \}
\end{equation}
recognized by a family of streaming algorithms in the fixed-size model.
The main result of this section is:
		
\begin{thm}
	\label{thm:reg-constant}
	A regular language $L \subseteq \Sigma^*$ belongs to $\F(\mathcal{O}(1))$ if and only if
	$L$ is a finite Boolean combination of suffix testable languages and regular length languages.
\end{thm}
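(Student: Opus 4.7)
For the ($\Leftarrow$) direction, I verify each atomic class is in $\F(\mathcal{O}(1))$ and invoke \cref{lem:boolean}. If $L$ is $k$-suffix testable, the last $\min(n,k)$ symbols of the stream determine whether $\last_n(w) \in L$, so a buffer of $k \lceil \log|\Sigma| \rceil$ bits---a constant independent of $n$---suffices. If $L$ is a regular length language, then $L_n$ equals $\emptyset$ or $\Sigma^*$ depending only on whether $\Sigma^n \subseteq L$, and no state is required.

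For the ($\Rightarrow$) direction, assume $L \in \F(\mathcal{O}(1))$, let $\A$ be a DFA for $L$, and set $\B = \A\revdet$; since $F_L(n) \in \mathcal{O}(1)$, \cref{prop-linear-lower-bound} forces $\B$ to be well-behaved. I then strengthen well-behavedness to the structural statement that every reachable SCC of $\B$ with an outgoing transition to a distinct SCC is trivial (a single state without self-loop). The proof is by contradiction: if some reachable SCC $C$ is non-trivial (contains a cycle of some length $\ell \ge 1$) and exits to another SCC, then iterating the cycle $j$ times before taking the exit, for $j$ ranging over $\Theta(n)$ consecutive values, yields that many length-$n$ runs of $\B$ with pairwise distinct path summaries. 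Adapting the row-encoding argument from the proof of \cref{prop:uniform-algorithm}, these distinct path summaries give rise to pairwise distinct $\sim_{L_n}$-classes, forcing $F_L(n) \in \Omega(\log n)$ and contradicting the hypothesis. This lower bound---ensuring the cycle-iteration count is genuinely visible to the fixed-size streaming model---is the main technical obstacle.

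With the structural statement in hand, let $N$ be the number of transient SCCs of $\B$ and let $p$ be the least common multiple of the cycle lengths across the terminal SCCs. For $|w| \geq N$, the run of $\B$ on $w\rev$ enters a terminal SCC within its first $N$ symbols, and the terminal SCC reached is determined by those first $N$ symbols---equivalently, by the last $N$ symbols of $w$. Inside a well-behaved terminal SCC, acceptance after $|w|-N$ further steps depends only on $|w|-N$ modulo $p$. Consequently, $L \cap \Sigma^{\geq N}$ is a finite union of languages of the form $\Sigma^* z \cap \{x : |x| \equiv i \pmod p\}$ with $z \in \Sigma^N$ and $0 \leq i < p$, each a Boolean combination of the suffix-testable language $\Sigma^* z$ and a regular length language, while $L \cap \Sigma^{<N}$ is finite and hence suffix-testable. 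This completes the Boolean decomposition.
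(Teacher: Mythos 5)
Your ($\Leftarrow$) direction is correct and is essentially the paper's. Your ($\Rightarrow$) direction takes a genuinely different route: the paper never passes through well-behavedness of a DFA for $L\rev$; it uses the distance bound $d(p,q)\le|Q|$ of Gawrychowski--Je\.z (\cref{lem:testable-distance}, \cref{thm:suffix-language}) to get a \emph{uniform} $k$ with every $L_n$ $k$-suffix testable, and then writes $L=(L\cap\Sigma^{\le k-1})\cup\bigcup_{z\in\Sigma^k}\bigl((Lz^{-1})\Sigma^k\cap\Sigma^*z\bigr)$, checking that each right quotient $Lz^{-1}$ is a length language. Your route could be made to work, but it has a genuine gap exactly at the step you yourself flag as ``the main technical obstacle'': the proof of the structural claim that every non-trivial reachable SCC of $\B$ is terminal. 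The inference ``pairwise distinct path summaries give rise to pairwise distinct $\sim_{L_n}$-classes'' is the converse of the implication actually used in \cref{prop:uniform-algorithm}: a path summary \emph{determines} the acceptance row $A_w(z,\cdot)$, so distinct rows force distinct summaries, but two runs with distinct summaries can perfectly well yield $\sim_{L_n}$-equivalent windows (for instance when every state reachable after the exit transition is accepting). So no lower bound on the number of Myhill--Nerode classes of $L_n$ follows, and the contradiction is not established.

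The claim can be repaired, but not for $\B=\A\revdet$; you need the \emph{minimal} DFA for $L\rev$. There, a non-trivial reachable SCC $C$ with an exit contains a state $q$, reachable by words of every sufficiently large length, having transitions $q\xrightarrow{a}q_1$ inside $C$ and $q\xrightarrow{b}q_2$ outside $C$ with $q_1\neq q_2$; minimality yields $y$ with exactly one of $\delta(q_1,y),\delta(q_2,y)$ final, and reversing the two equal-length words $uay$ and $uby$ (with $|u|\ge k$) shows that $L_n$ is not $k$-suffix testable for any $k$, contradicting \cref{cor:constant-suffix} --- no path summaries needed. Without minimality the structural statement itself is doubtful, since $q_1$ and $q_2$ could be equivalent states and the branching spurious. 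Two smaller issues in your final decomposition: terminal SCCs of a well-behaved DFA need not be cycles, so ``the least common multiple of the cycle lengths across the terminal SCCs'' is not well defined --- what you actually have is that, by well-behavedness and terminality, acceptance from a state of a terminal SCC after $j$ steps depends only on $j$ and is \emph{eventually} periodic in $j$, so a preperiod must be absorbed into the finite (hence suffix-testable) part of the decomposition by enlarging $N$. With these repairs your argument goes through, but as written the central lower-bound step is missing.
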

The following definitions are useful, which are also studied in~\cite{GawrychowskiJ09}.
For two languages $K,L \subseteq \Sigma^*$, we denote by $K \triangle L=(K\setminus L)\cup(L\setminus K)$
the symmetric difference of $K$ and $L$. We define the distance $d(K,L)$ by
\[
	d(K,L) = \begin{cases}
	\sup \{ |u| : u \in K \triangle L \} + 1, & \text{if } K \neq L, \\
	0, & \text{if } K = L.
	\end{cases}
\]
Notice that $d(K,L)<\infty$ if and only if $K\triangle L$ is finite.
For a DFA $\A = (Q,\Sigma,q_0,\delta,F)$ and a state $p\in Q$, we define $\A_p = (Q,\Sigma,p,\delta,F)$.
For two states $p,q\in Q$, we define the distance $d(p,q)=d(L(\A_p),L(\A_q))$.
It is known that $d(p,q)<\infty$ implies $d(p,q) \le |Q|$, see \cite[Lemma~1]{GawrychowskiJ09}.
	
\begin{lem}
	\label{lem:testable-distance}
	Let $L \subseteq \Sigma^*$ be regular and $\A = (Q,\Sigma,q_0,\delta,F)$ be its minimal DFA. 
	We have:
	\begin{enumerate}[(i)]
		\item $d(p,q) \le k$ if and only if $\delta(p,z) = \delta(q,z)$ for all $p,q \in Q$ and $z \in \Sigma^k$.
		\item $L$ is $k$-suffix testable if and only if $d(p,q) \le k$ for all $p,q \in Q$.
		\item If there exists $k \ge 0$ such that $L$ is $k$-suffix testable, then $L$ is $|Q|$-testable.
	\end{enumerate}
\end{lem}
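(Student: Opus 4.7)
The plan is to prove the three parts in order, with (i) serving as a pointwise bridge between the symmetric-difference distance and transition behaviour, (ii) exploiting (i) to translate suffix testability into a uniform statement about states, and (iii) invoking the cited finite-implies-bounded result.

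For (i), I would unfold the definition of $d(p,q)$: the inequality $d(p,q) \le k$ is equivalent to $L(\A_p) \triangle L(\A_q) \subseteq \Sigma^{<k}$, i.e.\ for every $w \in \Sigma^*$ with $|w| \ge k$ we have $\delta(p,w) \in F \iff \delta(q,w) \in F$. For the forward direction, fix $z \in \Sigma^k$ and set $p' = \delta(p,z)$, $q' = \delta(q,z)$; then for every $u \in \Sigma^*$ the word $zu$ has length at least $k$, so $\delta(p',u) \in F \iff \delta(q',u) \in F$. Thus $L(\A_{p'}) = L(\A_{q'})$, and minimality of $\A$ forces $p' = q'$. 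For the converse, any $w$ with $|w| \ge k$ factors as $w = zu$ with $|z| = k$, and the hypothesis gives $\delta(p,w) = \delta(\delta(p,z),u) = \delta(\delta(q,z),u) = \delta(q,w)$.

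For (ii), I would use (i) to translate between the two formulations. Assume first that $L$ is $k$-suffix testable, and fix $p,q \in Q$ and $z \in \Sigma^k$. By minimality, write $p = \delta(q_0,x)$ and $q = \delta(q_0,y)$. For any $u \in \Sigma^*$ the words $xzu$ and $yzu$ share a common suffix of length $k$, namely the last $k$ letters of $zu$ (well defined since $|zu| \ge k$), so $k$-suffix testability gives $xzu \in L \iff yzu \in L$. Hence $L(\A_{\delta(p,z)}) = L(\A_{\delta(q,z)})$, and minimality yields $\delta(p,z) = \delta(q,z)$; applying (i), $d(p,q) \le k$ for all $p,q$. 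Conversely, if $d(p,q) \le k$ for all $p,q$, then by (i) $\delta(p,z) = \delta(q,z)$ for all $p,q \in Q$ and $z \in \Sigma^k$; specialising to $p = \delta(q_0,x)$ and $q = \delta(q_0,y)$ gives $\delta(q_0,xz) = \delta(q_0,yz)$, so $xz \in L \iff yz \in L$, proving $k$-suffix testability.

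For (iii), I would combine (ii) with the cited bound from \cite{GawrychowskiJ09}: if $L$ is $k$-suffix testable for some $k$, then (ii) gives $d(p,q) \le k < \infty$ for all $p,q \in Q$, so $d(p,q) \le |Q|$ by \cite[Lemma~1]{GawrychowskiJ09}, and (ii) applied in the other direction yields that $L$ is $|Q|$-suffix testable. The main obstacle, such as it is, is the careful bookkeeping in the forward part of (ii): one must verify that the inserted letters of $z$ really sit inside the common suffix of length $k$ of $xzu$ and $yzu$, which comes down to the bound $|zu| \ge k$. Once this is nailed down, all three parts are formal translations between the definitions, using minimality of $\A$ throughout to identify states with their right languages.
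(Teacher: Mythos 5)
Your proof is correct and follows essentially the same route as the paper's: all three parts are formal translations between the definitions, using minimality of $\A$ to identify states with equal right languages, and (iii) is the cited bound from \cite{GawrychowskiJ09}. The only cosmetic difference is that the paper proves (i) by induction on $k$ and argues (ii) directly on the symmetric difference, whereas you prove (i) by factoring words of length at least $k$ and route the forward half of (ii) through (i); the content is identical.
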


\begin{proof}
	The proof of (i) is an easy induction: If $k=0$, the statement is $d(p,q) = 0$ iff $p = q$,
	which is true because $\A$ is minimal. For the induction step, we have
	$d(p,q) \le k+1$ iff $d(\delta(p,a),\delta(q,a)) \le k$ for all $a \in \Sigma$ iff $\delta(p,z) = \delta(q,z)$
	for all $z \in \Sigma^{k+1}$.
					
	For (ii), assume that $L$ is $k$-suffix testable and consider two states $p = \A(x)$ and $q= \A(y)$.
	If $z \in L(\A_p) \triangle L(\A_q)$, then $|z| < k$ because $xz \in L$ iff $yz \notin L$
	and $L$ is $k$-suffix testable.
						
	Now assume that $d(p,q) \le k$ for all $p,q \in Q$ and consider $x,y \in \Sigma^*$, $z \in \Sigma^k$.
	Since $d(\A(x),\A(y)) \le k$, (i) implies $\A(xz) = \A(yz)$, and in particular $xz \in L$ iff $yz \in L$.
	Therefore, $L$ is $k$-suffix testable.

	Point (iii) follows from (ii) and the above mentioned results from \cite[Lemma~1]{GawrychowskiJ09}.
\end{proof}

\begin{thm}\label{thm:suffix-language}
	For any $L \subseteq \Sigma^*$ and $n \ge 0$,
	the language $L_n$ from \eqref{eq-L_n} is $(2^{F_L(n)+1}-1)$-suffix testable.
\end{thm}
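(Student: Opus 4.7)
The plan is to invoke \cref{lem:testable-distance}(ii), which reduces $k$-suffix testability of a regular language to a bound on pairwise state distances in its minimal DFA. So I would fix the minimal DFA $\A = (Q, \Sigma, q_0, \delta, F)$ for $L_n$ and show $d(p,q) \leq 2^{F_L(n)+1}-1$ for all $p, q \in Q$.

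First I would bound the number of states. Since $F_L(n)$ is by definition the maximum encoding length of a state in the optimal (minimal) fixed-size sliding-window algorithm $\A$ for $L$ at window size $n$, every state of $\A$ is encoded by a bit string of length at most $F_L(n)$, of which there are at most $2^{F_L(n)+1}-1$. Hence $|Q| \leq 2^{F_L(n)+1}-1$.

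The core observation is that all pairwise distances in $\A$ are finite. Given $p, q \in Q$, pick words $u, v \in \Sigma^*$ with $\A(u) = p$ and $\A(v) = q$. For any $z \in \Sigma^*$ with $|z| \geq n$, the definition of $\last_n$ gives $\last_n(uz) = \last_n(z) = \last_n(vz)$, so $uz \in L_n$ iff $vz \in L_n$; equivalently $z \in L(\A_p)$ iff $z \in L(\A_q)$. Therefore $L(\A_p) \triangle L(\A_q) \subseteq \Sigma^{<n}$, which is finite, so $d(p,q) < \infty$.

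Finally, the result of \cite[Lemma~1]{GawrychowskiJ09} cited right before \cref{lem:testable-distance} states that $d(p,q) < \infty$ implies $d(p,q) \leq |Q|$. Combining with the state count gives $d(p,q) \leq |Q| \leq 2^{F_L(n)+1}-1$ uniformly in $p, q$, and \cref{lem:testable-distance}(ii) then yields that $L_n$ is $(2^{F_L(n)+1}-1)$-suffix testable. There is no real obstacle here; the only genuine content beyond bookkeeping is the suffix-stability argument showing $d(p,q) < \infty$, everything else is a direct appeal to \cref{lem:testable-distance} and the encoding size bound.
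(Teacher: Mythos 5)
Your proposal is correct and follows essentially the same route as the paper: the paper first notes that $L_n$ is $n$-suffix testable (via the decomposition $L_n = \{ w \in \Sigma^{\le n-1} : \last_n(w) \in L \} \cup \Sigma^* (L \cap \Sigma^n)$, which is the same fact as your direct verification that $d(p,q) \le n < \infty$), then invokes \cref{lem:testable-distance}(iii) — which is itself just part (ii) combined with the cited bound $d(p,q)<\infty \Rightarrow d(p,q)\le|Q|$ — and finally bounds $|S_n| \le 2^{F_L(n)+1}-1$ from the encoding length exactly as you do. No gaps; you have merely inlined the proof of part (iii).
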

\begin{proof}
	Let $(\A_n)_{n \ge 0}$ be an optimal fixed-size sliding window algorithm for $L$
	where $\A_n=(S_n,\Sigma,s_n,\delta_n,F_n)$, which is the minimal DFA for $L_n$.
	For every $n$, the language $L_n$ is $n$-suffix testable because
	\[
		L_n = \{ w \in \Sigma^{\le n-1} : \last_n(w) \in L \} \cup \Sigma^* (L \cap \Sigma^n).
	\]
	By \cref{lem:testable-distance}(iii) every language $L_n$ is $|S_n|$-suffix testable.
	Together with $F_L(n) = \log |S_n|$ this proves the claim for $L_n$.
\end{proof}

\begin{cor}\label{cor:constant-suffix}
	A language $L \subseteq \Sigma^*$ belongs to $\F(\mathcal{O}(1))$ if and only if
	there exists a $k \ge 0$ such that $L_n$ is $k$-suffix testable for all $n \ge 0$.
\end{cor}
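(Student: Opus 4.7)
The plan is to split the biconditional into the two directions, each of which is essentially a single invocation of a result proved earlier in this section.

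For the forward direction ($\Rightarrow$), I would simply apply Theorem~\ref{thm:suffix-language}. If $L \in \F(\mathcal{O}(1))$, there is a constant $c$ with $F_L(n) \le c$ for all $n$, and then $L_n$ is $(2^{c+1}-1)$-suffix testable for every $n$. So the constant $k = 2^{c+1}-1$ witnesses the right-hand side.

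For the backward direction ($\Leftarrow$), assume there is a constant $k$ such that $L_n$ is $k$-suffix testable for every $n \ge 0$. I would bound $F_L(n)$ by a quantity depending only on $k$ and $|\Sigma|$, using the minimal DFA characterization. Let $\A_n = (S_n, \Sigma, s_n, \delta_n, F_n)$ be the minimal DFA of $L_n$; recall $F_L(n) = \log |S_n|$. By Lemma~\ref{lem:testable-distance}(ii), every pair of states $p, q \in S_n$ satisfies $d(p, q) \le k$, and then Lemma~\ref{lem:testable-distance}(i) yields $\delta_n(p, z) = \delta_n(q, z)$ for all $p, q \in S_n$ and $z \in \Sigma^k$. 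In other words, reading any length-$k$ word collapses all states of $\A_n$ onto a single state depending only on that word. Consequently, every state reachable from $s_n$ by a word of length $\ge k$ coincides with one of the at most $|\Sigma|^k$ states obtainable by reading a length-$k$ word from $s_n$, while the states reachable in fewer than $k$ steps contribute at most $\sum_{i=0}^{k-1}|\Sigma|^i$ additional states. Since $\A_n$ is minimal (hence has no unreachable states), this gives
\[
	|S_n| \;\le\; \sum_{i=0}^{k} |\Sigma|^i,
\]
a bound depending only on the fixed constants $k$ and $|\Sigma|$. Therefore $F_L(n) = \log|S_n| \in \mathcal{O}(1)$, i.e., $L \in \F(\mathcal{O}(1))$.

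The only real obstacle is in the second direction: converting the purely syntactic property ``$L_n$ is $k$-suffix testable for every $n$'' into a uniform $\mathcal{O}(1)$ bound on $|S_n|$ that does not depend on $n$. The leverage comes from Lemma~\ref{lem:testable-distance}(i), which shows that $k$-suffix testability is equivalent to a $k$-step state-collapse property in the minimal DFA; once this is in hand, the state-count bound is immediate from basic reachability considerations, and no further work is required.
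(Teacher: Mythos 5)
Your proposal is correct. The forward direction is exactly the paper's: invoke \cref{thm:suffix-language} and take $k = 2^{c+1}-1$ for a uniform bound $c$ on $F_L(n)$. For the backward direction you take a genuinely different (if slightly heavier) route. The paper argues directly and algorithmically: if every $L_n$ is $k$-suffix testable, the window-$n$ algorithm simply maintains the last $k$ symbols, which costs $\mathcal{O}(k\log|\Sigma|) = \mathcal{O}(1)$ bits. You instead bound the size of the minimal DFA $\A_n$ of $L_n$ (which is regular for every $L$, so \cref{lem:testable-distance} applies): parts (ii) and (i) give the collapse $\delta_n(p,z)=\delta_n(q,z)$ for all states $p,q$ and all $z \in \Sigma^k$, so reachable states number at most $\sum_{i=0}^{k}|\Sigma|^i$, whence $F_L(n)=\log|S_n|\in\mathcal{O}(1)$. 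Both arguments encode the same insight --- only the last $k$ symbols matter --- but yours buys an explicit $n$-independent bound on $|S_n|$ at the cost of routing through the distance lemma, while the paper's one-liner avoids any appeal to minimal automata. No gap in either direction.
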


\begin{proof}
	The left-to-right direction follows from \cref{thm:suffix-language}.
	If each $L_n$ is $k$-suffix testable, then the streaming algorithm for window length $n$
	only needs to maintain the last $k$ symbols to test membership of a word of length $n$ in $L_n$, or equivalently in $L$.
\end{proof}

\begin{proof}[Proof of \cref{thm:reg-constant}]
	First, let $L \subseteq \Sigma^*$ be a regular language in $\F(\mathcal{O}(1))$.
	By \cref{thm:suffix-language} there exists $k \ge 0$ such that
	$L_n$ is $k$-suffix testable for all $n \ge 0$.
	We write $L$ as the Boolean combination
	\[
		L = (L \cap \Sigma^{\le k-1}) \cup \bigcup_{z \in \Sigma^k} (Lz^{-1}) \, z = (L \cap \Sigma^{\le k-1}) \cup \bigcup_{z \in \Sigma^k} ((Lz^{-1}) \, \Sigma^k \cap \Sigma^* z)
	\]
	where $Lz^{-1} = \{ x \in \Sigma^* : xz \in L \}$ is the regular right quotient of $L$ by $z$.
	The set $L \cap \Sigma^{\le k-1}$ is finite and hence suffix testable. 
	It remains to show that each $Lz^{-1}$ is a length language.
	Consider two words $x,y \in \Sigma^*$ of the same length $|x|=|y|=n$.
	Since $|xz|=|yz| = n+k$ and $L_{n+k}$ is $k$-suffix testable we have $xz \in L$ iff $yz \in L$,
	and hence $x \in Lz^{-1}$ iff $y \in Lz^{-1}$.
			
	For the other direction note that:
	\begin{itemize}
		\item if $L$  is a length language or suffix testable language then clearly $L \in \F(\mathcal{O}(1))$, and
		\item $\F(\mathcal{O}(1))$ is closed under Boolean operations by \cref{lem:boolean}.
	\end{itemize}
	This proves the theorem.
\end{proof}
We give another characterization of the regular languages in $\F(\mathcal{O}(1))$, which yields a decision procedure in the next section.

\begin{prop}
	\label{prop:constant-decision}
	Let $L \subseteq \Sigma^*$ be regular and $\A = (Q,\Sigma,q_0,\delta,F)$ be its minimal DFA.
	Then $L \in \F(\mathcal{O}(1))$ if and only if for all $x,y \in \Sigma^*$ with $|x|=|y|$ and $z \in \Sigma^{|Q|}$ we have $\A(xz) = \A(yz)$.
\end{prop}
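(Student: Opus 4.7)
The plan is to derive each direction from results already in hand. For $(\Rightarrow)$, I would combine the Boolean-combination characterization of \cref{thm:reg-constant} with the distance machinery of \cref{lem:testable-distance} and the cited fact that $d(p,q) < \infty$ implies $d(p,q) \le |Q|$. For $(\Leftarrow)$, I would reduce to \cref{cor:constant-suffix} by showing that the hypothesis forces each $L_n$ to be uniformly $|Q|$-suffix testable.

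Concretely, for $(\Rightarrow)$, assume $L \in \F(\mathcal{O}(1))$. By \cref{thm:reg-constant}, $L$ is a Boolean combination of some $k_i$-suffix testable languages $T_i$ and regular length languages; set $K = \max_i k_i$. The key observation is that for any $x,y \in \Sigma^*$ with $|x|=|y|$ and any $u \in \Sigma^*$ with $|u| \ge K$, the length-$K$ suffix of $xu$ coincides with that of $yu$, and $|xu|=|yu|$. Hence membership in each $T_i$ and in each length language agrees on $xu$ and $yu$, so $xu \in L \iff yu \in L$. This yields $(x^{-1}L) \triangle (y^{-1}L) \subseteq \Sigma^{<K}$, a finite set, which by the cited fact implies $d(\A(x), \A(y)) \le |Q|$; \cref{lem:testable-distance}(i) then delivers $\A(xz) = \A(yz)$ for every $z \in \Sigma^{|Q|}$.

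For $(\Leftarrow)$, assume the stated condition. I would prove that $L_n$ is $|Q|$-suffix testable for every $n \ge 0$ and then invoke \cref{cor:constant-suffix}. Given $u, v \in \Sigma^*$ and $z \in \Sigma^{|Q|}$, one needs $\last_n(uz) \in L \iff \last_n(vz) \in L$. If $n \le |Q|$, both $\last_n(uz)$ and $\last_n(vz)$ equal the length-$n$ suffix of $z$, so the equivalence is immediate. If $n > |Q|$, a short case split on whether $|uz| \ge n$ (true suffix) or $|uz| < n$ (left-padding by the filler symbol) shows uniformly that $\last_n(uz) = \tilde u z$ and $\last_n(vz) = \tilde v z$ with $|\tilde u| = |\tilde v| = n - |Q|$; applying the hypothesis to $\tilde u, \tilde v, z$ then gives $\A(\tilde u z) = \A(\tilde v z)$, which in particular yields $\tilde u z \in L \iff \tilde v z \in L$.

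The only genuine bookkeeping is checking that the two subcases in the definition of $\last_n$ both produce a prefix of length exactly $n - |Q|$; once that is in hand, the rest of the argument is a direct application of the previously proved lemmas, and I do not foresee any substantial obstacle.
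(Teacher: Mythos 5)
Your proposal is correct and follows essentially the same route as the paper: the forward direction passes through the finiteness of the distance $d(\A(x),\A(y))$, the bound $d(p,q)\le |Q|$, and \cref{lem:testable-distance}(i), while the backward direction amounts to the observation that the last $|Q|$ symbols suffice to simulate $\A$ on the active window (which the paper states directly and you formalize via $|Q|$-suffix testability of each $L_n$ and \cref{cor:constant-suffix}). The only cosmetic difference is that you enter the forward direction through \cref{thm:reg-constant} rather than \cref{cor:constant-suffix}, which changes nothing of substance.
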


\begin{proof}
	Assume that $L \in \F(\mathcal{O}(1))$. By \cref{cor:constant-suffix} there exists $k \ge 0$ such that each $L_n$ is $k$-suffix testable.
	Let $x,y \in \Sigma^*$ with $|x|=|y|=n$. For all $z \in \Sigma^k$ we have $xz \in L_{n+k}$ iff $yz \in L_{n+k}$.
	Thus, $xz \in L$ iff $yz \in L$ and hence $d(\A(x),\A(y)) \le k$.
	This implies $d(\A(x),\A(y)) \le |Q|$. By \cref{lem:testable-distance}(i)
	we have $\A(xz) = \A(yz)$ for all $z \in \Sigma^{|Q|}$.
						
	Conversely, assume that $\A(xz) = \A(yz)$ for all $x,y \in \Sigma^*$ with $|x|=|y|$ and $z \in \Sigma^{|Q|}$.
	This means that one can simulate the automaton on the active window by only storing the last $|Q|$ many symbols and hence in space $\mathcal{O}(1)$.
\end{proof}

\section{Deciding space complexity in the sliding window model} \label{sec-computing-space}

\renewcommand{\star}{^{*}}

In this section, we consider the complexity of the following  decision problems:
\begin{itemize}
	\item
	      \textsc{Dfa}$(1)$:
	      Given a DFA $\A$, does $L(\A) \in \F(\mathcal{O}(1))$ hold?
	\item
	      \textsc{Nfa}$(1)$:
	      Given an NFA $\A$, does $L(\A)\in\F(\mathcal{O}(1))$ hold?
	\item
	      \textsc{Dfa}$(\log n)$:
	      Given a DFA $\A$, does $L(\A) \in \F(\mathcal{O}(\log n)) = \V(\mathcal{O}(\log n))$ hold?
	\item
	      \textsc{Nfa}$(\log n)$:
	      Given an NFA $\A$, does $L(\A) \in \F(\mathcal{O}(\log n)) = \V(\mathcal{O}(\log n))$ hold?
\end{itemize}
It is straightforward to show that membership in $\V(\mathcal{O}(1))$ for a regular language
that is given by a 
DFA (resp., an NFA) is $\NL$-complete (resp., $\PSPACE$-complete):
By \cref{thm:trichotomy} one has to check whether $L(\A) = \emptyset$ or 
$L(\A) = \Sigma\star$, and universality for DFAs (resp., NFAs) is 
$\NL$-complete (resp., $\PSPACE$-complete).
In \cref{sec-dfa-case} (resp., \cref{sec-nfa-case})  we show that \textsc{Dfa}$(1)$ and \textsc{Dfa}$(\log n)$
(resp., \textsc{Nfa}$(1)$ and \textsc{Nfa}$(\log n)$) are \NL-complete (resp., \PSPACE-complete).

\subsection{The DFA case} \label{sec-dfa-case}

We start with the \NL-hardness for the DFA case:

\begin{thm} \label{thm-NL-hard}
	\textsc{Dfa}$(1)$ and \textsc{Dfa}$(\log n)$ are \NL-hard.
\end{thm}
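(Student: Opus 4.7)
The plan is to reduce from the complement of $s$-$t$-reachability in directed graphs---an \NL-complete problem since $\NL = \text{coNL}$---to both \textsc{Dfa}$(1)$ and \textsc{Dfa}$(\log n)$ via a single logspace reduction. Given $(G,s,t)$, I will construct a DFA $\B$ such that $L(\B) = \emptyset$ when $t$ is not reachable from $s$, and $L(\B) \notin \F(\mathcal{O}(\log n))$ otherwise.

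First, I build in logspace a DFA $\A'$ over the alphabet $V$ whose language consists of the $G$-paths from $s$ to $t$ of length at most $|V|$, each encoded as its vertex sequence. The states are pairs $(v,i) \in V \times \{0,\dots,|V|\}$ tracking the current vertex and the path length, plus an initial state and a sink; since $i$ strictly increases along transitions, the state graph is acyclic. In particular $L(\A')$ is finite, and it is nonempty iff $t$ is reachable from $s$. Introducing two fresh letters $a,b \notin V$, define $\B$ over $V \cup \{a,b\}$ to recognize
\[ L(\B) \;=\; a\,\{a,b\}^*\,L(\A'). \]
Concretely, $\B$ has two additional prefix states $p_0,p_1$ alongside the states of $\A'$, with $p_0 \xrightarrow{a} p_1$, $p_1$ self-looping on $a$ and $b$, and $p_1$ passing control to $\A'$'s initial transition on any $V$-letter; all other unspecified transitions go to a sink. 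This construction is logspace in $(G,s,t)$.

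Correctness in the non-reachable case is immediate: $L(\A') = \emptyset$ forces $L(\B) = \emptyset \in \F(\mathcal{O}(1))$. In the reachable case I apply \Cref{prop-linear-lower-bound} by exhibiting a DFA $\mathcal{D}$ for
\[ L(\B)\rev \;=\; L(\A')\rev \cdot \{a,b\}^*\,a \]
with a reachable non-well-behaved SCC. The automaton $\mathcal{D}$ runs $\A'\revdet$ on $V$-letters and, as soon as an $\{a,b\}$-letter is read at an accepting state of $\A'\revdet$ (one containing $\A'$'s initial state), it transitions into the two-state DFA for $\{a,b\}^*a$---states $T_0$ (non-final) and $T_a$ (final) with transitions determined by the last letter read---whose SCC $\{T_0,T_a\}$ is non-well-behaved because the same-length words $a$ and $b$ from $T_0$ end in states of opposite acceptance; all other unspecified transitions lead to a sink. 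Because $V \cap \{a,b\} = \emptyset$ and $\A'\revdet$ is acyclic (each subset-construction step strictly decreases the maximum depth coordinate), $\mathcal{D}$ is a well-defined DFA for $L(\B)\rev$ whose only multi-state SCC is $\{T_0,T_a\}$; every other SCC is a singleton and thus trivially well-behaved. The SCC $\{T_0,T_a\}$ is reachable from $\mathcal{D}$'s initial state iff some accepting state of $\A'\revdet$ is reachable, iff $L(\A') \neq \emptyset$, iff $t$ is reachable. Hence \Cref{prop-linear-lower-bound} gives $V_{L(\B)}(n) \in \Omega(n)$ in the reachable case, so $L(\B) \notin \F(\mathcal{O}(\log n)) \supseteq \F(\mathcal{O}(1))$.

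The main subtlety is controlling the SCC structure of $\mathcal{D}$: reverse-determinization can in general introduce non-trivial SCCs that would be non-well-behaved even in the non-reachable case, spoiling the ``iff''. Encoding reachability through the finite, depth-counted automaton $\A'$ is what forces $\A'\revdet$ to be acyclic and leaves the inserted tail gadget as the only possible source of non-well-behavedness.
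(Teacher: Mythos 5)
Your proof is correct, and it follows the same overall strategy as the paper's --- a logspace reduction from directed $s$-$t$-reachability (relying on closure of \NL{} under complement) that yields the empty language when $t$ is unreachable and a language of linear sliding-window complexity otherwise --- but the two key ingredients are realized differently. The paper normalizes $G$ to out-degree two, works over the fixed alphabet $\{a,b,c\}$ with $a,b$ acting as successor selectors so that $L(\A)=K\,\{a,b,c\}^*$ for some $K\subseteq\{a,b\}^+$, and proves the lower bound in the reachable case from scratch: for a fixed $x\in K$ the $2^n$ streams in $\{x,c^{|x|}\}^n$ are pairwise distinguishable by any variable-size algorithm, giving $L(\A)\notin\V(o(n))$ directly. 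You instead keep the vertex set in the alphabet, encode reachability as a finite depth-bounded path language $L(\A')$, wrap it as $a\{a,b\}^*L(\A')$, and delegate the lower bound to \cref{prop-linear-lower-bound} by exhibiting a DFA for the reversal whose only non-singleton reachable SCC is the two-state gadget $\{T_0,T_a\}$; the depth counter forcing $\A'\revdet$ to be acyclic is exactly what keeps the SCC structure under control, and invoking \cref{prop-linear-lower-bound} is legitimate since that theorem only requires \emph{some} non-well-behaved DFA recognizing the reversal. The trade-offs: the paper's argument is self-contained and uses a fixed three-letter alphabet, while yours reuses the well-behavedness machinery at the price of an input-dependent alphabet (harmless for the decision problems as stated) and a larger intermediate automaton. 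Both reductions are computable in logspace and establish \NL-hardness of both problems.
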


\begin{proof}
	We reduce from the \NL-complete reachability problem in finite directed graphs.
	Given a finite directed graph $G = (V,E)$ and two vertices $s, t \in V$,
	the question is whether there exists a path from $s$ to $t$.
	We can assume that $s \neq t$ and that each vertex $v \in V$ has exactly two successors $v_a, v_b \in V$.
	Let $\A = (V \cup \{\bot\},\{a,b,c\},s,\delta,\{t\})$ be a DFA
	where
	\[
		\delta(v,x) = \begin{cases}
		v_x & \text{if } v \in V \setminus \{t\}, \, x \in \{a,b\}, \\
		t & \text{if } v = t, \, x \in \{a,b,c\}, \\
		\bot & \text{otherwise.}
		\end{cases}
	\]
	Since $s \neq t$, we can write $L(\A)$ as  $K \, \{a,b,c\}^*$ for some $K \subseteq \{a,b\}^+$.
	Furthermore, there exists a path from $s$ to $t$ in $G$ if and only if $K \neq \emptyset$.
	If $K = \emptyset$, then $L(\A) = \emptyset$ belongs to $\F(\mathcal{O}(1))$ and to $\V(\mathcal{O}(\log n))$.
	If $K \neq \emptyset$, then we claim that $L(\A) = K\; \{a,b,c\}^*$ does not belong to $\V(\mathcal{O}(\log n))$.
	Consider a variable-size sliding window algorithm $\mathcal{M}$ for $L(\A)$.
	Fix an arbitrary word $x \in K$ and let $k=|x|$. Moreover, let $n \ge 0$ and consider the set $\{x,c^k\}^n$ of size $2^n$.
	Let us read two distinct words from $\{x,c^k\}^n$ into two instances of $\mathcal{M}$. We can write these
	words as $uxw$ and $vc^kw$ for some $u,v,w \in \{a,b,c\}^*$ with $|u| = |v|$.
	By removing the first $|u|=|v|$ many symbols from the window, we obtain the active windows
	$xw \in L(\A)$ and $c^kw \notin L(\A)$ and therefore $\mathcal{M}(uxw) \neq \mathcal{M}(vc^kw)$.
	Hence, $\mathcal{M}$ must contain at least $2^n$ many states that are reachable by words of length $k n$. 
	This implies that $L(\A) \notin \V(o(n))$.
\end{proof}

\begin{thm} \label{thm-NL-O(1)}
	\textsc{Dfa}$(1)$ is \NL-complete.
\end{thm}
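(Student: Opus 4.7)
Since \cref{thm-NL-hard} already supplies the \NL-hardness, the remaining task is to place \textsc{Dfa}$(1)$ in \NL. I would exploit the fact that \NL\ is closed under complement (Immerman--Szelepcs\'enyi) and design an \NL\ procedure for the complement: on input $\A$, verify that $L(\A) \notin \F(\mathcal{O}(1))$. The starting point is \cref{prop:constant-decision}, which characterizes membership in $\F(\mathcal{O}(1))$ by a ``synchronization'' condition stated on the minimal DFA.

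The first step is to rephrase \cref{prop:constant-decision} so that it refers to the given DFA $\A = (Q,\Sigma,q_0,\delta,F)$ rather than to its minimization. Let $\A'$ be the minimal DFA for $L(\A)$, with state set $Q'$ of size at most $|Q|$, and write $\sim$ for state equivalence on $Q$, so that $p \sim q$ iff they are identified in $\A'$. \cref{prop:constant-decision} then reads: $L(\A) \in \F(\mathcal{O}(1))$ iff for all $x, y \in \Sigma^*$ with $|x|=|y|$ and all $z \in \Sigma^{|Q'|}$, $\delta(q_0, xz) \sim \delta(q_0, yz)$. Combining \cref{lem:testable-distance}(i) with the bound ``$d(\cdot,\cdot) < \infty \Rightarrow d(\cdot,\cdot) \le |Q'|$'' from \cite{GawrychowskiJ09}, synchronization at length $|Q'|$ is equivalent to synchronization at every length $\ge |Q'|$, and in particular at length $|Q| \ge |Q'|$. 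Negating, the criterion I want to check is: there exist same-length reachable states $p, q \in Q$ (i.e., $p = \delta(q_0, x)$, $q = \delta(q_0, y)$ for some $x,y \in \Sigma^*$ with $|x|=|y|$) and a word $z \in \Sigma^{|Q|}$ with $\delta(p, z) \not\sim \delta(q, z)$.

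I would then implement this criterion by an \NL\ algorithm operating on the product $\A \times \A$. Starting from $(q_0, q_0)$ and storing only the current pair of states together with an $O(\log |Q|)$-bit counter, the algorithm proceeds in three phases: (i) read nondeterministically guessed letter pairs $(a_i, b_i) \in \Sigma \times \Sigma$ for arbitrarily many steps, thereby reaching any same-length reachable pair $(p, q)$; (ii) read a common letter in both components for exactly $|Q|$ steps, ending in $(p', q') = (\delta(p, z), \delta(q, z))$; (iii) continue reading common letters for at most $|Q|^2$ further steps and accept as soon as exactly one of the two tracked states lies in $F$. Phase (iii) witnesses $\delta(p', z') \in F$ xor $\delta(q', z') \in F$ for some $z' \in \Sigma^{\le |Q|^2}$, which is the standard length bound for a shortest distinguishing word, obtained via reachability in the pair product.

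The main obstacle, and the step I expect to dominate the write-up, is the translation from \cref{prop:constant-decision} to a criterion formulated directly on the input DFA with length bound $|Q|$ rather than $|Q'|$; this requires combining \cref{lem:testable-distance}(i) with the distance bound from \cite{GawrychowskiJ09}, and checking that the condition is invariant under padding the witness $z$ beyond length $|Q'|$. Once that is in place, the three-phase simulation on $\A \times \A$ is routine \NL: product-state reachability, bounded counting, and guessing a polynomially bounded distinguishing suffix.
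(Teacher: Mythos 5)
Your proposal is correct, and the core of the argument (NL-hardness from \cref{thm-NL-hard}, closure of \NL{} under complement, the criterion of \cref{prop:constant-decision}, and the two-track simulation of $\A$ on two equal-length guessed words followed by a common suffix of length $|Q|$) coincides with the paper's proof. Where you genuinely diverge is in handling a non-minimal input DFA. The paper first proves the claim for minimal DFAs, where the final test is simple inequality of the two reached states, and then deals with the general case by simulating the minimal DFA on the fly using an \NL{} oracle for DFA state equivalence, concluding via $\NL = \NL^{\NL}$. You instead keep the given DFA and observe that the negated condition only ever asks for \emph{non}-equivalence of the two states reached after the common suffix; since non-equivalence is itself witnessed existentially by a distinguishing word of length at most $|Q|^2$ (reachability in the pair product), it folds directly into the same nondeterministic guess as a third phase. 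Your length-transfer step (replacing the witness length $|Q'|$ of the minimal automaton by $|Q|$ via \cref{lem:testable-distance}(i) and the bound $d(p,q)<\infty \Rightarrow d(p,q)\le |Q'|$ from \cite{GawrychowskiJ09}) is sound in both directions, since synchronization at length $k$ propagates to all lengths $\ge k$ and finite distance collapses back to $|Q'|$. The net effect is that your route avoids relativized classes and the implicit minimization entirely, at the cost of the extra bookkeeping needed to justify the $|Q|$-versus-$|Q'|$ substitution; the paper's route keeps the characterization clean on the minimal automaton but pays with the oracle machinery.
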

	
\begin{proof}
	Let us first assume that the input DFA $\A$ is  minimal. Later, we will argue how to handle the general case.
	Since nondeterministic logspace is closed under complement,
	it suffices to decide whether $L(\A) \notin \F(\mathcal{O}(1))$.
	By \cref{prop:constant-decision} this is the case if and only if
	there exist words $x,y,z \in \Sigma^*$ such that $|x|=|y|$, $|z| = |Q|$
	and $\A(xz) \neq \A(yz)$. The existence of such words can be easily verified in nondeterministic
	logspace: One simulates $\A$ on two words of the same length (the words $x,y$), and thereby only stores
	the current state pair. At every time instant, the algorithm can nondeterministically
	decide to continue the simulation from the current state pair $(p,q)$ with a single word (the word $z$)
	for $|Q|$ steps. The algorithm accepts if at the end the two states are distinct. 
		
	The general case, where $\A$ is not minimal is handled as follows:
	Assume that $\A = (\{1, \ldots, k\}, \Sigma,1,\delta,F)$ is the input DFA.
	It is known that DFA equivalence is in \NL  \cite{CH92}. Hence, one can test
	in nondeterministic logspace, whether two states $p,q \in Q$ are equivalent 
	(in the sense that $\delta(p,w) \in F$ iff $\delta(q,w) \in F$ for all $w \in \Sigma^*$).
	We will use this problem as an \NL-oracle in the above \NL-algorithm for minimal DFAs.
	More precisely, let 
	$\A' = (Q,\Sigma,1,\delta',F')$ be the minimal DFA for $\A$, where we assume that $Q$ is the set 
	of all states $q \in \{1, \ldots, k\}$ such that there is no state $p<q$ that is equivalent to $q$.
	We run the \NL-algorithm above for minimal DFAs on $\A'$ without explicitly constructing
	$\A'$. If we have to compute a successor state $\delta'(q,a)$ (where $q \in Q$) we compute, using
	the above \NL-oracle the smallest state that is equivalent to $\delta(q,a)$.
		
	The above argument shows that \textsc{Dfa}$(1)$ belongs to 
	$\NL^{\NL}$. Finally, we use the well-known identity $\NL = \NL^{\NL}$ \cite{Immerman88}.
\end{proof}
In the rest of the section, we show that one can also decide in
nondeterministic logspace whether $L \in \V(\mathcal{O}(\log n))$ (or equivalently $L \in \F(\mathcal{O}(\log n))$).  
As in the proof of \cref{thm-NL-O(1)} we can assume that $L$ is given by its minimal DFA $\A$.

For words $u,x_0,x_1 \in \Sigma^*$ we define 
\[
	Q(u,x_0,x_1) = \{ \A(u x) : x \in \{ x_0,x_1 \}^* \},
\]
which is the set of states of $\A$ reachable from the initial state by first reading $u$ and then an arbitrary product of copies of $x_0$ and $x_1$.
	
\begin{lem}
	\label{lemma:disjoint}
	We have $V_L(n) \in \Theta(n)$ if and only if there are words
	$u_0,u_1, v_0,v_1 \in \Sigma^*$ such that $|u_0|=|u_1| \geq 1$ and 
	$Q(u_0,v_0u_0,v_1u_1) \cap Q(u_1,v_0u_0,v_1u_1) = \emptyset$.
\end{lem}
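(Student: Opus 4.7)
The plan is to bridge the $Q$-disjointness condition and non-well-behavedness of $\A\revdet$, then invoke \cref{thm:trichotomy} together with the characterization (equivalence of items 2 and 3 in \cref{thm-characterizations}) that $V_L(n)\in\mathcal{O}(\log n)$ iff $L\rev$ admits a well-behaved DFA.

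For the ``if'' direction, I will first normalize so that $|v_0u_0| = |v_1u_1|$ (hence $|v_0| = |v_1|$, since $|u_0| = |u_1|$) by replacing each $v_b$ with $v_b(u_bv_b)^{M_b-1}$, where $M_b = \mathrm{lcm}(|v_0u_0|,|v_1u_1|)/|v_bu_b|$; because $\{v_0'u_0', v_1'u_1'\}^* \subseteq \{v_0u_0, v_1u_1\}^*$, disjointness transfers to the padded words. Writing $k = |v_0u_0|$, I form the $2^n$ words $X(\alpha) = v_{\alpha_1}u_{\alpha_1}\cdots v_{\alpha_n}u_{\alpha_n}$ of length $kn$ for $\alpha \in \{0,1\}^n$. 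For distinct $\alpha\neq\beta$ first differing at position $i$, the length-$(|u_0| + (n-i)k)$ suffixes of $X(\alpha)$ and $X(\beta)$ (at the same position in both) have the form $u_{\alpha_i}y_\alpha$ and $u_{\beta_i}y_\beta$ with $y_\alpha,y_\beta \in \{v_0u_0,v_1u_1\}^*$, so their $\A$-states lie in $Q(u_{\alpha_i},v_0u_0,v_1u_1)$ and $Q(u_{\beta_i},v_0u_0,v_1u_1)$ respectively; the disjointness hypothesis together with minimality of $\A$ forces different $\sim_L$-classes, hence $\psi_L(X(\alpha)) \neq \psi_L(X(\beta))$. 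This yields $2^n$ distinct values in $\psi_L(\Sigma^{\le kn})$, so \cref{lem:psi-L} gives $V_L(n) \in \Omega(n)$, upgraded to $V_L(n)\in\Theta(n)$ by \cref{thm:trichotomy}.

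For the converse, assume $V_L(n)\in\Theta(n)$. By the trichotomy and \cref{thm-characterizations}, $\A\revdet$ is not well-behaved, so it contains a reachable SCC with a state $p$ and loops $p\xrightarrow{u_b}p_b\xrightarrow{v_b}p$ for $b\in\{0,1\}$ with $|u_0| = |u_1| \ge 1$ (nonzero, because $u_0 = u_1 = \varepsilon$ would force $p_0 = p_1$, contradicting their different finality), where $p_0$ is non-final and $p_1$ is final. Identify $p$ with its underlying subset of $Q$. The subset-construction identity $\delta\revdet(P,w) = \{q \in Q : \delta(q,w\rev) \in P\}$ yields two consequences: first, finality in $\A\revdet$ means containing $q_0$, so $\A(u_0\rev) \notin p$ and $\A(u_1\rev) \in p$; second, the loop equations $\delta\revdet(p,u_bv_b) = p$ unfold to $q \in p \iff \delta(q, v_b\rev u_b\rev) \in p$ for every $q \in Q$, making both $p$ and $Q\setminus p$ invariant under the action of $v_b\rev u_b\rev$. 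Setting $u_b' := u_b\rev$ and $v_b' := v_b\rev$, propagating these initial memberships through $\{v_0'u_0', v_1'u_1'\}^*$ gives $Q(u_0',v_0'u_0',v_1'u_1') \subseteq Q\setminus p$ and $Q(u_1',v_0'u_0',v_1'u_1') \subseteq p$, which are disjoint, and $|u_0'| = |u_1'| \geq 1$ as required.

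The main technical delicacy is tracking reversals and subset-construction semantics carefully when moving between $\A$ and $\A\revdet$: the subset $p \subseteq Q$ that witnesses non-well-behavedness of $\A\revdet$ must be reinterpreted as a simultaneous invariant for the two loop actions read on $\A$, and this dual interpretation is precisely what converts the structural defect of $\A\revdet$ into the $Q$-disjointness statement about $\A$.
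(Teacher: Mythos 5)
Your proof is correct, and its overall architecture (extract a non-well-behavedness witness from a DFA for $L\rev$; build $2^n$ pairwise distinguishable block words) matches the paper's, but both halves are implemented differently. For the direction from $V_L(n)\in\Theta(n)$ to disjointness, the paper works with the \emph{minimal} DFA $\B$ for $L\rev$ and derives disjointness semantically: from $\B(uu_0v_0)=\B(u)=\B(uu_1v_1)$ it gets $uKu_0\cap L\rev=\emptyset$ and $uKu_1\subseteq L\rev$ for $K=\{u_0v_0,u_1v_1\}^*$, so the reversed words are separated by the single suffix $u\rev$ and hence lie in different $\sim_L$-classes, which by minimality of $\A$ gives disjoint state sets. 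You instead work with $\A\revdet$ and exploit the subset-construction identity to show that $p$ and $Q\setminus p$ are simultaneously invariant under $v_0\rev u_0\rev$ and $v_1\rev u_1\rev$; this is a genuinely different derivation that dispenses with the access word $u$ altogether and is more mechanical. (Your appeal to \cref{thm-characterizations} to conclude that $\A\revdet$ specifically is not well-behaved is slightly loose; what you actually need is the contrapositive of \cref{prop:uniform-algorithm}, which is exactly what underlies that equivalence, so nothing is missing.) For the converse, the paper runs a direct fooling-set argument against an arbitrary variable-size algorithm $\mathcal{M}$, using $\downarrow$-operations to expose the distinguishing suffixes, whereas you route the same $2^n$ words through $\psi_L$ and \cref{lem:psi-L}; these are equivalent, since \cref{lem:psi-L} packages precisely that stripping argument. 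Note only that \cref{lem:psi-L} requires $\emptyset\subsetneq L\subsetneq\Sigma^*$, which does hold here because the disjointness hypothesis forces $\A$ to have at least two states, but is worth saying. Your length normalization $v_b\mapsto v_b(u_bv_b)^{M_b-1}$ coincides with the paper's.
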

	
\begin{proof}
	Let $\B$ be the minimal DFA for $L\rev$. If $V_L  \in \Theta(n)$, then $\B$ is not well-behaved,
	i.e., there are words $u, u_0, u_1, v_0, v_1 \in \Sigma^*$ 
	such that
	\begin{itemize}
		\item $|u_0| = |u_1|$,
		\item $\B(uu_0v_0) = \B(u) = \B(uu_1v_1)$,
		\item $\B(uu_0) \notin F$ and $\B(uu_1) \in F$ (and thus $u_0, u_1 \in \Sigma^+$).
	\end{itemize}
	Setting $K = \{u_0v_0, u_1v_1\}^*$ we get $u K u_0 \cap L\rev = \emptyset$ and $u K u_1 \subseteq L\rev$.
	Hence for all $w_0 \in u_0\rev K\rev$ and $w_1 \in u_1\rev K\rev$ we have $w_0 \not\sim_L w_1$.
	Since $\mathcal{A}$ is minimal, this implies 
	$\{ \A(w) : w\in u_0\rev K\rev \} \cap \{ \A(w) : w \in u_1\rev K\rev \} = \emptyset$, i.e.,
	$Q(u_0\rev,v_0\rev u_0\rev,v_1\rev u_1\rev) \cap Q(u_1\rev,v_0\rev u_0\rev,v_1\rev u_1\rev) = \emptyset$.
	
	Next, assume that $Q(u_0,v_0u_0,v_1u_1) \cap Q(u_1,v_0u_0,v_1u_1) = \emptyset$ 
	and $|u_0|=|u_1| \geq 1$ for words $u_0,u_1,v_0,v_1$.
	We clearly have $u_0 \neq u_1$ and hence $v_0 u_0 \neq v_1 u_1$.
	Further, we can choose numbers $p, q \ge 1$ such that $(v_0u_0)^p$ and $(v_1u_1)^q$ have the same length.
	We redefine $v_0$ to be $(v_0u_0)^{p-1} v_0$ and $v_1$ to be $(v_1u_1)^{q-1} v_1$. Thus, $|v_0u_0| = |v_1 u_1|$.
	Moreover, the new resulting sets $Q(u_i,v_0u_0,v_1u_1)$ are contained in the original sets, and are therefore also disjoint.
	Let $c = |v_0u_0| = |v_1 u_1| \geq 1$.
							
	Now consider a variable-size sliding window algorithm $\mathcal{M}$ for $L$ and let $n$ be arbitrary.
	We claim that for all $w_0, w_1 \in \{v_0u_0,v_1u_1\}^n$ with $w_0 \neq w_1$, we have
	$\mathcal{M}(w_0) \neq \mathcal{M}(w_1)$. % reaches two distinct states on the inputs $w_1$ and $w_2$.
	This is because after removing a suitable number of symbols,
	the active windows contain words $x_0 \in u_0 \{v_0u_0,v_1u_1\}^*$ and $x_1 \in u_1 \{v_0u_0,v_1u_1\}^*$, respectively.
	By assumption, reading $x_0$ and $x_1$ in the minimal DFA $\mathcal{A}$ leads to different states.
	Hence there exists a word $z \in \Sigma^*$ such that $x_0 z \in L$ if and only if $x_1 z \notin L$. Thus, we must
	have $\mathcal{M}(x_0) \neq \mathcal{M}(x_1)$ and therefore $\mathcal{M}(w_0) \neq \mathcal{M}(w_1)$.
		
	Since $\{v_0u_0,v_1u_1\}^n$ consists of $2^n$ many words,
	there exists $w \in \{v_0u_0,v_1u_1\}^n$ such that the encoding of $\mathcal{M}(w)$ has
	length at least $n$. Since $|w| = c n$, we have $V_L(n) \geq \lfloor n/c \rfloor$.	
\end{proof}
We call a tuple $(u_0,u_1,w_0,w_1)$ of words {\em critical}, if $|u_0|=|u_1| \geq 1$,
$u_i$ is a suffix of $w_i$ for all $i \in \{0,1\}$ and $Q(u_0,w_0,w_1) \cap Q(u_1,w_0,w_1) = \emptyset$.
Clearly, the condition from \cref{lemma:disjoint} is equivalent to the existence of a critical tuple.

\begin{lem}
	If there exists a critical tuple,
	then there exists a critical tuple $(u_0,u_1,w_0,w_1)$ such that
	$Q(u_0,w_0,w_1)$ and $Q(u_1,w_0,w_1)$ have each size at most three.
\end{lem}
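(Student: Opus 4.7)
The plan is to start with any critical tuple $(u_0,u_1,w_0,w_1)$ and repeatedly replace the generators by appropriate words in the finite transformation monoid of $\A$, shrinking the orbits until they have size at most three. Write $f_j := \delta(\cdot, w_j)$, let $M \leq Q^Q$ be the transformation monoid generated by $f_0, f_1$, and let $q_i := \A(u_i)$, so that $Q(u_i, w_0, w_1) = M q_i$ and the criticality assumption becomes $Mq_0 \cap Mq_1 = \emptyset$. A key first observation is that no element $g \in M$ can be a constant map, since otherwise its single image value would lie in both disjoint orbits.

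First I would pick $v \in \{w_0, w_1\}^+$ such that $e := \delta(\cdot, v)$ is idempotent in $M$; such $v$ exists by finiteness of $M$, for instance as any sufficiently large power of $w_0 w_1$. Letting $P := e(Q)$, the observation above gives $|P| \geq 2$. Then I would replace the tuple by $(u_0 v, u_1 v, w_0 v, w_1 v)$: the lengths match and the suffix condition is preserved, the new orbits are contained in the old ones and therefore remain disjoint, and because each new transition $\delta(\cdot, w_j v) = e \circ f_j$ has image in $P$ while the new starting states $e(q_i)$ also lie in $P$, both new orbits are contained in $P$ and hence of size at most $|P|$.

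If $|P| > 3$, I would then iterate the same argument inside $P$, viewing $P$ as a new state space and $\langle (e f_0)|_P, (e f_1)|_P \rangle$ as a new transformation monoid. The same disjointness argument shows this new monoid again contains no constant map, so a further idempotent of strictly smaller rank is available, and each iteration strictly decreases the size of the ambient invariant set. Since $|P|$ is a positive integer, the process terminates. Throughout the iteration, criticality is maintained: the new starting states always sit in the images of the successive idempotents, so by design the orbits remain disjoint.

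The main obstacle, and what I expect to be the hardest part, is showing that the iteration actually terminates with $|P| \leq 3$ rather than with some larger constant. In the terminal case, the restricted monoid on the final invariant set $P^*$ consists entirely of permutations, i.e., it is a group $G$ acting on $P^*$, with the two starting states lying in disjoint $G$-orbits. To push these orbits down to size three one must go beyond the current $\{w_0, w_1\}^+$ and exploit the freedom to choose entirely new $w_0, w_1 \in \Sigma^*$, producing new idempotents of small rank in the full transition monoid of $\A$ that break the group action. Proving that this final reduction is always available, and that it delivers the precise bound three rather than some weaker constant, is the crux of the argument.
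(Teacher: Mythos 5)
There is a genuine gap, and you have correctly located it yourself: your reduction strategy bottoms out before reaching the bound three. Your iteration shrinks an invariant set $P$ by passing to idempotents of smaller rank, but this is only possible as long as the restricted monoid contains a non-permutation; once the generators act as permutations of $P$, the orbits are orbits of a group action and can have size up to $|P|$, which is not bounded by any constant. Your proposed escape --- finding ``entirely new'' low-rank idempotents outside $\{w_0,w_1\}^+$ to break the group action --- is not how the argument goes, and it is not clear such idempotents would preserve criticality (in particular the suffix condition $u_i$ suffix of $w_i$ and the disjointness of the orbits). So the crux you flag as unproven really is unproven, and the rank-reduction framework does not supply it.

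The actual resolution is orthogonal to rank: one does not shrink the \emph{images} of the transformations but the \emph{submonoid they generate}. Let $h$ be the homomorphism to the transition monoid and let $\omega$ be such that $m^\omega$ is idempotent for every $m$. Replace $(w_0,w_1)$ by $x_1 = (w_0^\omega w_1^\omega)^\omega$ and $x_0 = (w_0^\omega w_1^\omega)^\omega w_0^\omega$. These lie in $\{w_0,w_1\}^*$ and have the form $y_i w_i$, so $(u_0,u_1,x_0,x_1)$ is again critical (same $u_i$, orbits shrink, suffix condition preserved). A short computation with the idempotents $h(w_0)^\omega, h(w_1)^\omega$ shows $h(x_i)\,h(x_j) = h(x_j)$ for all $i,j$, i.e.\ every nonempty product of the new generators equals its last factor; hence $\{h(x_0),h(x_1)\}^* = \{1,h(x_0),h(x_1)\}$ and each orbit $Q(u_i,x_0,x_1)$ has at most three elements. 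Note that this handles precisely the case where you are stuck: if the generators act as a group, then $h(x_0)=h(x_1)=1$ and the orbits collapse to singletons --- no new letters and no rank reduction are needed. Your preliminary steps (disjoint orbits $Mq_0 \cap Mq_1 = \emptyset$, no constant maps, criticality preserved under appending/prepending words from $\{w_0,w_1\}^*$) are all correct, but they feed into a strategy that cannot close.
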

	
\begin{proof}
	Let $h \colon \Sigma^* \to M$ be the canonical homomorphism into the transition monoid $M$ of $\mathcal{A}$,
	which right acts on $Q$ via $Q \times M \to Q$, $(q,m) \mapsto q \cdot m = m(q)$.
	Assume that $(u_0,u_1,w_0,w_1)$ is a critical tuple.
	Notice that
	\[
		Q(u_i,w_0,w_1) = \{ \A(u_i) \cdot m : m \in \{h(w_0),h(w_1)\}^* \}
	\]
	where $X^*$ denotes the submonoid of $M$ generated by a set $X \subseteq M$.
	It suffices to define a new critical tuple $(u_0,u_1,x_0,x_1)$ with the property that
	$h(x_i) \cdot h(x_j) = h(x_j)$ for all $i,j \in \{0,1\}$.
	This implies $\{h(x_0),h(x_1)\}^* = \{1, h(x_0), h(x_1)\}$,
	and hence, $Q(u_i,x_0,x_1)$ contains at most three elements for both $i \in \{0,1\}$.
							
	Notice that if $(u_0,u_1,w_0,w_1)$ is critical, then also $(u_0,u_1,y_0w_0,y_1w_1)$ is critical
	for all $y_0,y_1 \in \{w_0,w_1\}^*$.
	Let $\omega \ge 1$ be a number such that $m^\omega$ is idempotent for all $m \in M$.
	By choosing $e_0 = (h(w_0)^\omega h(w_1)^\omega)^\omega h(w_0)^\omega$
	and $e_1 = (h(w_0)^\omega h(w_1)^\omega)^\omega$ we indeed obtain $e_i e_j = e_j$ for all $i,j \in \{0,1\}$.	
	Hence we define $x_0 = (w_0^\omega w_1^\omega)^\omega w_0^\omega$ and $x_1 = (w_0^\omega w_1^\omega)^\omega$.
\end{proof}

\tikzstyle{state} = [circle,draw,inner sep = 2pt]
	
\begin{figure}
	\begin{center}
		\begin{tikzpicture}[semithick,->,>=stealth]
			\node[state, initial, initial text={}] (a) {};
			\node[state, above right = 3em and 4em of a] (b) {\footnotesize $p$};
			\node[state, below right = 3em and 4em of a] (c) {\footnotesize $r$};
			\node[state, above right = 1.5em and 5em of b] (b0) {\footnotesize $p_0$};
			\node[state, below right = 1.5em and 5em of b] (b1) {\footnotesize $p_1$};
			\node[state, above right = 1.5em and 5em of c] (c0) {\footnotesize $r_0$};
			\node[state, below right = 1.5em and 5em of c] (c1) {\footnotesize $r_1$};
																	
			\draw (a) edge node[above left] {\footnotesize $u_0$} (b);
			\draw (a) edge node[below left] {\footnotesize $u_1$} (c);
																	
			\draw (b) edge node[above left] {\footnotesize $w_0$} (b0);
			\draw (b) edge node[below left] {\footnotesize $w_1$} (b1);
			\draw[loop right] (b0) edge node[right] {\footnotesize $w_0$} (b0);
			\draw[loop right] (b1) edge node[right] {\footnotesize $w_1$} (b1);
			\draw[bend left = 15] (b0) edge node[right] {\footnotesize $w_1$} (b1);
			\draw[bend left = 15] (b1) edge node[left] {\footnotesize $w_0$} (b0);
															
			\draw (c) edge node[above left] {\footnotesize $w_0$} (c0);
			\draw (c) edge node[below left] {\footnotesize $w_1$} (c1);
			\draw[loop right] (c0) edge node[right] {\footnotesize $w_0$} (c0);
			\draw[loop right] (c1) edge node[right] {\footnotesize $w_1$} (c1);
			\draw[bend left = 15] (c0) edge node[right] {\footnotesize $w_1$} (c1);
			\draw[bend left = 15] (c1) edge node[left] {\footnotesize $w_0$} (c0);
		\end{tikzpicture}
	\end{center}
	\caption{A critical tuple $(u_0,u_1,w_0,w_1)$.}
	\label{fig:critical}
\end{figure}
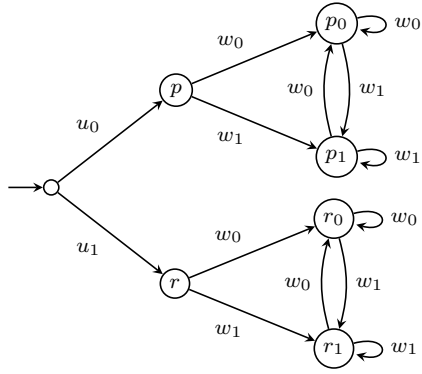
	
\begin{lem} \label{lemma-crit-tripel-NL}
	Given a minimal DFA $\mathcal{A}$, one can test in nondeterministic logspace
	whether $\mathcal{A}$ has a critical tuple.
\end{lem}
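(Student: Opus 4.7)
The plan is to exploit the preceding lemma to reduce the search for a critical tuple to a reachability question in a product of constantly many copies of $\mathcal{A}$. By that lemma, if a critical tuple exists then there is one with $|Q(u_i,w_0,w_1)| \le 3$ and $h(w_i)h(w_j) = h(w_j)$ in the transition monoid. Under this absorption, setting $p = \mathcal{A}(u_0)$, $p_j = \mathcal{A}(u_0w_j)$, $r = \mathcal{A}(u_1)$ and $r_j = \mathcal{A}(u_1w_j)$, the two orbits are exactly $\{p,p_0,p_1\}$ and $\{r,r_0,r_1\}$, and on their union each $w_j$ acts as the constant map sending the first orbit to $p_j$ and the second to $r_j$ (the situation of \cref{fig:critical}).

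The algorithm first nondeterministically guesses the six states $p,p_0,p_1,r,r_0,r_1 \in Q$ in $O(\log|Q|)$ bits and verifies that $\{p,p_0,p_1\} \cap \{r,r_0,r_1\} = \emptyset$. It remains to certify the existence of words $v_0,v_1,u_0,u_1 \in \Sigma^*$ with $|u_0|=|u_1|\ge 1$ such that, writing $w_i = v_iu_i$, we have $\mathcal{A}(u_0)=p$, $\mathcal{A}(u_1)=r$, and in $\mathcal{A}^6$ the word $w_0$ sends the tuple $(p,p_0,p_1,r,r_0,r_1)$ to $(p_0,p_0,p_0,r_0,r_0,r_0)$ while $w_1$ sends it to $(p_1,p_1,p_1,r_1,r_1,r_1)$. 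These multi-transition identities together with the disjointness force $Q(u_0,w_0,w_1) \subseteq \{p,p_0,p_1\}$ and $Q(u_1,w_0,w_1) \subseteq \{r,r_0,r_1\}$, so the orbits are automatically disjoint; and conversely every critical tuple of the restricted form supplied by the preceding lemma produces such a guess.

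I would carry out the existence check by a single-pass, three-phase parallel simulation. Phase~I guesses the symbols of $v_0$ one at a time, updating six tracks starting from $(p,p_0,p_1,r,r_0,r_1)$ and halting nondeterministically at some intermediate 6-tuple $T$. Phase~II does the same for $v_1$, producing a 6-tuple $T'$ while keeping $T$ in memory. Phase~III adjoins two further tracks, both initialized at the initial state of $\mathcal{A}$; at every step it guesses two symbols $a_t,b_t \in \Sigma$, applies $a_t$ to the $T$-tracks and to the first new track, and $b_t$ to the $T'$-tracks and to the second new track. Phase~III accepts after at least one step once the first six tracks equal $(p_0,p_0,p_0,r_0,r_0,r_0)$, the new $u_0$-tracker equals $p$, the next six tracks equal $(p_1,p_1,p_1,r_1,r_1,r_1)$, and the new $u_1$-tracker equals $r$.

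The main obstacle is enforcing $|u_0|=|u_1|$: running two separate NL subroutines for $w_0$ and $w_1$ would lose this length equality, which is why the two letter streams $a_t,b_t$ must be advanced in lockstep inside a single Phase~III rather than split into sequential checks. Throughout the entire run the working memory holds at most $14$ states of $\mathcal{A}$ plus a constant number of phase-bookkeeping bits, that is $O(\log|Q|)$ bits, so the whole procedure lies in \NL.
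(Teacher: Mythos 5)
Your proposal is correct and shares the paper's overall reduction: guess six states $p,p_0,p_1,r,r_0,r_1$ with $\{p,p_0,p_1\}\cap\{r,r_0,r_1\}=\emptyset$ and certify the existence of $v_0,u_0,v_1,u_1$ with $|u_0|=|u_1|\ge 1$, $\A(u_0)=p$, $\A(u_1)=r$, such that $w_i=v_iu_i$ maps all of $\{p,p_0,p_1\}$ to $p_i$ and all of $\{r,r_0,r_1\}$ to $r_i$ --- exactly the configuration of \cref{fig:critical}, and your verification that acceptance forces $Q(u_0,w_0,w_1)\subseteq\{p,p_0,p_1\}$ and $Q(u_1,w_0,w_1)\subseteq\{r,r_0,r_1\}$ is sound. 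The genuine difference lies in how the constraint $|u_0|=|u_1|$ is enforced. The paper packages the witnesses into the language $\{v_0\,\#\,u_0\,\#\,v_1\,\#\,u_1 : \dots\}$ accepted by a one-counter automaton (the counter matches $|u_0|$ against $|u_1|$) and invokes the known fact that emptiness of one-counter automata is decidable in nondeterministic logspace. You instead advance the two letter streams in lockstep inside a single on-the-fly simulation of constantly many copies of $\A$, which obtains the length equality for free; this makes the argument self-contained and avoids the one-counter emptiness black box, at the cost of being slightly less modular. Two minor points, neither affecting correctness: the machine must also retain the six guessed target states for the final comparison, so it tracks roughly $20$ rather than $14$ states of $\A$ (still $O(\log|Q|)$ bits); and, as is standard for such \NL{} guess-and-verify procedures, one should bound the number of simulation steps by the polynomial number of reachable track-tuples to guarantee termination of rejecting computations.
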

	
\begin{proof}
	Let  $\A = (Q,\Sigma,q_0,\delta,F)$ be a minimal DFA.
	\cref{fig:critical} illustrates the structure we need to detect in $\A$.
	To do so, we reduce to testing emptiness of one-counter automata,
	which is known to be decidable in nondeterministic logspace \cite{Latteux}.
	For two states $p,r \in Q$ let $\A_{p,r} = (Q,\Sigma,p,\delta,\{r\})$,
	i.e., the automaton $\mathcal{A}$ with initial state $p$ and final state $r$, and let
	$L(p,r) = L(\A_{p,r})$.
							
	The algorithm iterates over all disjoint sets $\{p,p_0,p_1\}, \{r,r_0,r_1\} \subseteq Q$.
	For $i \in \{0,1\}$ let $\mathcal{A}_i$ be a DFA for the language
	\[
		L(p,p_i) \cap L(p_0,p_i) \cap L(p_1,p_i) \cap L(r,r_i) \cap L(r_0,r_i) \cap L(r_1,r_i).
	\]
	Now consider the language
	\begin{align*}
		\{ v_0 \, \# \, u_0 \, \# \, v_1 \, \# \, u_1 \colon &  v_iu_i \in L(\A_i) \text{ for $i \in \{0,1\}$},  |u_0|=|u_1| \geq 1, \\
		& u_0 \in L(q_0,p), u_1 \in L(q_0,r)  \}           
	\end{align*}
	for which one can construct in logspace a one-counter automaton. The counter is used to verify the constraint 
	$|u_0|=|u_1|$.
	The language above is empty if and only if $\mathcal{A}$ has a critical tuple.
\end{proof}
\cref{lemma:disjoint}, \cref{lemma-crit-tripel-NL} and \cref{thm:trichotomy} together imply:
	
\begin{cor} \label{cor-dfa-lo}
	\textsc{Dfa}$(\log n)$ is $\NL$-complete.
\end{cor}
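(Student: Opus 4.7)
The plan is to combine \NL-hardness from \cref{thm-NL-hard} with an \NL upper bound obtained from the preceding two lemmas. The hard direction is membership in \NL.

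First I would handle the case that the input DFA $\A$ is minimal. By \cref{thm:trichotomy}, the regular language $L(\A)$ fails to belong to $\V(\mathcal{O}(\log n)) = \F(\mathcal{O}(\log n))$ if and only if $V_{L(\A)}(n) \in \Theta(n)$. Invoking \cref{lemma:disjoint}, this in turn is equivalent to the existence of words $u_0,u_1,v_0,v_1$ with $|u_0|=|u_1|\geq 1$ and $Q(u_0,v_0u_0,v_1u_1) \cap Q(u_1,v_0u_0,v_1u_1) = \emptyset$, which is precisely the existence of a critical tuple. By \cref{lemma-crit-tripel-NL}, this existence question is decidable in nondeterministic logspace. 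Since $\NL$ is closed under complement (Immerman--Szelepcs\'enyi), membership of $L(\A)$ in $\V(\mathcal{O}(\log n))$ is also decidable in $\NL$.

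Next I would remove the minimality assumption by the same oracle construction used in the proof of \cref{thm-NL-O(1)}. Given an arbitrary DFA $\A = (\{1,\ldots,k\},\Sigma,1,\delta,F)$, the minimal DFA $\A'$ can be described implicitly by selecting as states exactly those $q$ such that no smaller state $p<q$ is equivalent to $q$ in $\A$, and defining $\delta'(q,a)$ as the smallest state equivalent to $\delta(q,a)$. DFA equivalence is known to lie in $\NL$ by \cite{CH92}, so every transition of $\A'$ can be computed using an $\NL$ oracle. Running the algorithm for minimal DFAs on this implicitly-given $\A'$ yields an $\NL^{\NL}$ procedure, which collapses to $\NL$ by the identity $\NL = \NL^{\NL}$ \cite{Immerman88}.

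The main obstacle is a purely bookkeeping one: the algorithm from \cref{lemma-crit-tripel-NL} internally constructs a one-counter automaton whose transitions refer to states of the minimal DFA, and when the given DFA is not minimal these references must be resolved on the fly via the equivalence oracle. I would make sure that every access to a state or transition of $\A'$ in that construction is mediated by the oracle so that no superlogarithmic amount of information about $\A'$ is ever stored. Combined with the \NL-hardness already established in \cref{thm-NL-hard}, this gives \NL-completeness of \textsc{Dfa}$(\log n)$.
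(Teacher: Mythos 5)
Your proposal is correct and follows essentially the same route as the paper: \NL-hardness from \cref{thm-NL-hard}, the upper bound via \cref{thm:trichotomy}, \cref{lemma:disjoint} and \cref{lemma-crit-tripel-NL} together with closure of \NL{} under complement, and the reduction to the minimal-DFA case by the same equivalence-oracle argument (with $\NL = \NL^{\NL}$) that the paper borrows from the proof of \cref{thm-NL-O(1)}. No gaps.
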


\subsection{The NFA case} \label{sec-nfa-case}

\newcommand{\aut}[1]{\mathcal{#1}}

In this section, we show that the problems \textsc{Nfa}$(1)$ and \textsc{Nfa}$(\log n)$ are both \PSPACE-complete.
The upper bounds follow easily from \cref{thm-NL-O(1)} and \cref{cor-dfa-lo} and the following fact (see \cite[Lemma~1]{LoMa11regular}): If
a mapping $f$ can be computed by a Turing-machine with a polynomially bounded work tape (the output can be of exponential
size) and $L$ is a language that can be decided in polylogarithmic space, then $f^{-1}(L)$ belongs to \PSPACE. Note that from
a given NFA $\A$ one can compute an equivalent DFA using polynomially bounded work space: One iterates over all subsets 
of the state set of $\A$; the current subset is stored on the work tape. For every subset and input symbol one then writes
the corresponding transition of the DFA on the output tape.

\begin{thm}
	\label{thm:constFnfa}
	\textsc{Nfa}$(1)$ is $\PSPACE$-complete.
\end{thm}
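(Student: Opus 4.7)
The plan is to handle the upper bound and the lower bound separately, following the pattern used for \textsc{Dfa}$(1)$ in the previous subsection. For the upper bound I would invoke \cref{thm-NL-O(1)}, which places \textsc{Dfa}$(1)$ in \NL, together with the composition lemma recalled just before the theorem: the map sending an NFA $\A$ to its determinized DFA $\A\det$ is computable by a Turing machine using polynomial work space (the output itself may be of exponential size but is written only once onto the output tape). Composing this map with the \NL algorithm for \textsc{Dfa}$(1)$ immediately yields a \PSPACE algorithm for \textsc{Nfa}$(1)$.

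For \PSPACE-hardness I would reduce from the \PSPACE-complete universality problem for NFAs over the binary alphabet: given an NFA $\A$ over $\{0,1\}$, decide whether $L(\A) = \{0,1\}^*$. The goal is to build, in polynomial time, an NFA $\A'$ over a slightly enlarged alphabet $\{0,1\} \cup \Gamma$ with $|\Gamma| = \mathcal{O}(1)$ such that $L(\A') \in \F(\mathcal{O}(1))$ if and only if $L(\A) = \{0,1\}^*$. Correctness would be established via \cref{prop:constant-decision}: in the universal case one would show that $L(\A')$ is a Boolean combination of suffix-testable and length languages, hence in $\F(\mathcal{O}(1))$ by \cref{thm:reg-constant}; in the non-universal case one would use any fixed $w \notin L(\A)$ to manufacture, for every $n$, equal-length streams $x_n, y_n$ such that no suffix $z$ of length at most the number of states of $\A'\det$ makes the states $\A'\det(x_n z)$ and $\A'\det(y_n z)$ coincide.

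The hardest step will be the design of the gadget language realized by $\A'$. Naive candidates---such as $L(\A) \cdot c$ or $L(\A) \cdot c \cdot (\{0,1\} \cup \{c\})^*$ for a fresh symbol $c$, or a Boolean combination of $L(\A)$ with a fixed non-$\F(\mathcal{O}(1))$ language---fail to give the desired equivalence uniformly, because a non-universal but structurally tame $L(\A)$ (for example $L(\A) = \emptyset$, or $L(\A)$ itself a proper suffix-testable subset of $\{0,1\}^*$) tends to drag $L(\A')$ back into $\F(\mathcal{O}(1))$. The gadget therefore has to couple $L(\A)$ with long-range window structure so that a single missing word $w$ of $\A$ propagates, for arbitrarily large $n$, into exponentially many Myhill-Nerode classes of the corresponding window language. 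This qualitatively parallels the ``$2^n$ distinguishable streams'' argument used at the end of the proof of \cref{thm-NL-hard}, except that the combinatorial witness against $\F(\mathcal{O}(1))$ must now be manufactured from any word not accepted by $\A$ rather than from a fixed path in a graph; this is precisely where the jump from \NL to \PSPACE enters the reduction.
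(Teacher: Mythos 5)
Your upper bound is exactly the paper's: compose the polynomial-workspace determinization $\A \mapsto \A\det$ with the \NL{} algorithm from \cref{thm-NL-O(1)} via the composition fact quoted before the theorem. That half is fine.

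The lower bound, however, has a genuine gap: you correctly pick NFA universality as the source problem and correctly observe that the naive gadgets fail, but you never produce a gadget, and the gadget \emph{is} the proof. The construction the paper uses is simple: add to $\A$ (over $\Sigma=\{a,b\}$) a fresh state $\bar q$ that is both initial and the unique final state, with an $a$-loop on $\bar q$ and a $b$-transition from every old final state into $\bar q$. Then $L(\rho(\A)) = a^* \cup L(\A)\, b\, a^*$. If $L(\A)=\Sigma^*$ this is all of $\Sigma^*$, hence in $\F(\mathcal{O}(1))$. Conversely, if $L(\rho(\A)) \in \F(\mathcal{O}(1))$, then by \cref{thm:reg-constant} it is a Boolean combination of $k$-suffix-testable languages and length languages for some $k$; comparing $a^{n+1+k}$ (accepted) with $xba^k$ for an arbitrary $x\in\Sigma^n$ (same length, same $k$-suffix) forces $xba^k \in L(\rho(\A))$ and hence $x\in L(\A)$. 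Note that this correctness argument is much softer than what you set out to do: you do not need a single missing word to ``propagate into exponentially many Myhill--Nerode classes of the window language'' --- that would establish $L(\A')\notin\V(\mathcal{O}(\log n))$, which is both stronger than required and false in general (e.g.\ for $L(\A)=\Sigma^*\setminus\{w\}$ the window languages of $L(\rho(\A))$ have only polynomially many classes). All that is needed is, for infinitely many $n$, \emph{one} pair of equal-length words sharing a long common suffix that $L(\A')$ separates, and the characterization of \cref{thm:reg-constant} (or \cref{prop:constant-decision}) does the rest. So the plan points in the right direction but the decisive construction and the correct (weaker) falsification criterion are both absent.
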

\begin{proof}
	By the above remark it suffices to
	establish $\PSPACE$-hardness of \textsc{Nfa}$(1)$. For this we will reduce the NFA universality problem to \textsc{Nfa}$(1)$.
	The NFA universality problem is $\PSPACE$-complete \cite{MS72}. 
	W.l.o.g.\ consider the alphabet $\Sigma = \{a,b\}$. For an NFA $\A = (Q,\Sigma,I,\Delta,F)$ we define $\rho(\A)$ to be the automaton that results from $\A$ by adding a new initial state $\bar q$ with an $a$-labeled self-loop and a $b$-labeled transition from every state of $F$ to $\bar q$. The only final state of $\rho(\A)$ is $\bar q$. More formally, we define $\rho(\A)$ as follows:
	\[
		\rho(\A) =
		(Q \cup \{\bar q\},
		\Sigma,
		I \cup \{\bar q\},
		\Delta \cup \{(q,b,\bar q) \mid q \in F \} \cup \{(\bar q,a,\bar q)\},
		\{\bar q\}
		).
	\]
	Notice that the $\rho$-construction implies $L(\rho(\A)) = a\star \cup L(\A)  \, b\,  a\star$. It is then easy to verify that $L(\A) = \Sigma\star$ iff $L(\rho(\A)) = \Sigma\star$.
	If $L(\A) = \Sigma^*$ then clearly $L(\rho(\A)) = \Sigma^* \in F(1)$.
	Conversely, assume that $L(\rho(\A)) = a\star \cup L(\A)  \, b\,  a\star$ belongs to $F(1)$.
	By \cref{thm:reg-constant} there exists a number $k \in \N$ such that $a\star \cup L(\A)  \, b\,  a\star$
	is a Boolean combination of $k$-suffix testable languages and regular length languages.
	Let $x \in \{a,b\}^n$ be any word of length $n$.
	Since $a^{n+1+k} \in L(\rho(\A))$ and $xba^k$ share the same $k$-suffix and are of the same length, we also know that $xba^k \in L(\rho(\A))$
	and hence $x \in L(\A)$. This proves that $\A$ is universal.
	
	\iffalse The final states of the DFA $\aut B = \rho(\A)\det$ are exactly the subsets of $Q \cup \{\bar q\}$ that contain the state $\bar q$, and the initial state of $\aut B$ is $I \cup \{\bar q\}$. We write $\delta$ for the transition function of $\aut B$. The crucial properties for $\aut B$ are the following:
	\begin{align*}
		\bar q \in \delta(S,b) & \iff 
		S \cap F \neq \emptyset, 
		                       &      
		\bar q \in \delta(S,a) & \iff 
		\bar q \in S 
	\end{align*}
	for every state $S \subseteq Q \cup \{\bar q\}$ of $\aut B$. We claim now that $L(\A) = \Sigma\star$ iff $L(\aut B) \in \F(\mathcal{O}(1))$. The left-to-right direction is trivial: $L(\A) = \Sigma\star$ implies that $L(\rho(\A)) = L(\aut B) = \Sigma\star$ and hence $L(\aut B) \in \F(\mathcal{O}(1))$. For the right-to-left direction suppose $L(\A) \neq \Sigma\star$, i.e., there is a string $w$ with $w \notin L(\A)$. Choose $x = wb$, $y = a^{|w|+1}$ and $z = a^N$, where $N$ is the size of the set of states of $\aut B$. It follows that $yz \in L(\aut B)$ because $yz$ is a sequence of $a$'s, and for $xz$ we have:
	\[
		w \notin L(\A) \implies
		\delta(I \cup \{\bar q\}, w) \cap F = \emptyset \implies
		\bar q \notin \delta(I \cup \{\bar q\}, wb) \implies
		\bar q \notin \delta(I \cup \{\bar q\}, wba^N),
	\]
	and therefore $xz = wba^N \notin L(\aut B)$. \cref{prop:constDFA} then gives us that $L(\aut B) \notin \F(\mathcal{O}(1))$. \fi
	We have thus established that the polynomial-time (in fact, log-space) construction $\A \mapsto \rho(\A)$ reduces the universality problem for NFAs to \textsc{Nfa}$(1)$.
\end{proof}

\begin{thm}
	\label{thm:logFVnfa}
	\textsc{Nfa}$(\log n)$ is $\PSPACE$-complete.
\end{thm}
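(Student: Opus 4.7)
The plan is to prove both directions of the complexity bound by leveraging the corresponding results for the DFA case together with the NFA universality problem.

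\textbf{Upper bound.} The membership of \textsc{Nfa}$(\log n)$ in $\PSPACE$ follows from the same composition principle already used for \cref{thm:constFnfa}. Given an NFA $\A$ with $m$ states, the equivalent determinized DFA $\A\det$ has at most $2^m$ states and can be computed by a Turing machine with polynomially bounded work tape (by iterating over subsets of the state set and writing each transition to the output tape). Composing this polynomial-space computable preprocessing with the $\NL$-algorithm for \textsc{Dfa}$(\log n)$ provided by \cref{cor-dfa-lo} yields a $\PSPACE$ procedure for \textsc{Nfa}$(\log n)$.

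\textbf{Lower bound.} We reduce the $\PSPACE$-complete NFA universality problem over $\Sigma=\{a,b\}$ to \textsc{Nfa}$(\log n)$, mirroring the strategy used in \cref{thm:constFnfa} but modified so that non-universality forces \emph{linear} space rather than just non-constant space. Starting from an NFA $\A$ over $\Sigma$, we construct in logspace an NFA $\A'$ over $\Sigma' = \Sigma \cup \{c\}$ with
\[
L(\A') \;=\; L(\rho(\A))\,c \;=\; \bigl(a^{*} \cup L(\A)\,b\,a^{*}\bigr)\,c,
\]
where $\rho$ is the construction from \cref{thm:constFnfa}.

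If $L(\A)=\Sigma^{*}$, then $L(\rho(\A))=\Sigma^{*}$ and $L(\A')=\{a,b\}^{*}c$. By \cref{thm-characterizations} it is enough to verify that the minimal DFA for $(L(\A'))\rev = c\,\{a,b\}^{*}$ is well-behaved. Every strongly connected component of this DFA is a singleton in which all self-loops lead back to a state of the same finality, so the DFA is well-behaved and hence $L(\A')\in\F(\mathcal{O}(\log n))$.

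If $L(\A)\neq\Sigma^{*}$, fix a witness $w\in\Sigma^{*}\setminus L(\A)$ and set $\ell=|w|+1$. The plan is to show $V_{L(\A')}(n)\in\Omega(n)$ by an adversary argument, hence $L(\A')\notin\F(\mathcal{O}(\log n))$ by \cref{thm:trichotomy}. Encode bitstrings $\alpha\in\{0,1\}^{n}$ as $u_{\alpha}=z_{\alpha_1}z_{\alpha_2}\cdots z_{\alpha_n}\in\{a,b\}^{n\ell}$, with $z_0=a^{\ell}$ and $z_1=wb$. For a pair $\alpha\neq\beta$ differing for the \emph{last} time at position $i$, consider the distinguishing experiment $\downarrow^{(i-1)\ell}c$: the resulting active windows become $wb\cdot V\cdot c$ and $a^{\ell}\cdot V\cdot c$ (with common tail $V=z_{\alpha_{i+1}}\cdots z_{\alpha_n}$), and membership in $L(\A')=Kc$ (where $K=L(\rho(\A))$) reduces to asking whether the respective prefix lies in $K$. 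Whenever $V$ contains no $b$, the $a$-block prefix lies in $a^{*}\subseteq K$ while the $w$-prefix cannot lie in $K$ (the only possible decomposition $wbV=xba^{k}$ would force $x=w\in L(\A)$, contradicting the choice of $w$); hence the two windows are distinguished.

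\textbf{Main obstacle.} The crux is ensuring that the restricted family of $\alpha$'s for which the tail $V$ is guaranteed to be $b$-free remains of size $2^{\Omega(n)}$, so as to obtain a linear lower bound. The natural restrictions (e.g.\ forcing the second half of $\alpha$ to be zero) do not immediately achieve this because pairs sharing the same last-$1$ position may still leave $1$-blocks inside $V$. The intended resolution is either to refine the encoding so that the experiment above handles \emph{all} pairs uniformly, or to argue directly via \cref{lemma:disjoint} by exhibiting a critical tuple $(u_0,u_1,v_0,v_1)$ in the minimal DFA of $L(\A')$ using the witness $w$ (taking $u_0=\varepsilon,\;u_1=wb,\;v_0=a^{\ell},\;v_1=(wb)a^{\ell}$, or similar, and verifying that the reachable-state sets $Q(u_0,v_0u_0,v_1u_1)$ and $Q(u_1,v_0u_0,v_1u_1)$ are disjoint). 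Either route completes the reduction and establishes $\PSPACE$-hardness of \textsc{Nfa}$(\log n)$, matching the upper bound.
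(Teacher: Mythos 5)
Your upper bound is correct and is essentially the paper's own argument: determinize with polynomially bounded work space and compose with the \NL{} test from \cref{cor-dfa-lo}. The lower bound, however, has a fatal flaw that goes beyond the counting gap you flag in your ``main obstacle'' paragraph: the map $\A \mapsto L(\rho(\A))\,c$ is not a reduction from NFA universality, because the backward direction fails. Take $L(\A)=\emptyset$. Then $L(\A') = a^*c$, a bounded regular language of polynomial growth; since the identity is a $\leftarrow$-reduction, \cref{thm-characterizations} (condition 4) gives $a^*c \in \F(\mathcal{O}(\log n))$ --- equivalently, the minimal DFA for $ca^*$ is easily checked to be well-behaved. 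So a non-universal $\A$ is mapped to a yes-instance. The same happens for any $\A$ whose language has polynomial growth, e.g.\ $L(\A)=a^*$, which yields the bounded language $(a^*\cup a^*ba^*)c$. Consequently the implication you are trying to establish in the second case ($L(\A)\neq\Sigma^*$ implies $V_{L(\A')}(n)\in\Omega(n)$) is simply false, and neither of your proposed repairs can succeed: no critical tuple exists for $a^*c$, and your candidate tuple with $u_0=\varepsilon$ already violates the requirement $|u_0|=|u_1|\ge 1$ in \cref{lemma:disjoint}.

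The reason the constant-space gadget does not transfer is that the obstruction to $\F(\mathcal{O}(\log n))$ must live \emph{inside a single SCC} of the reversed determinized automaton: one needs two runs of the same length that start in the SCC, return to it, and end in a final and a non-final state respectively. Appending $c$ does not create such a configuration when $L(\A)$ is sparse. The paper therefore uses a different $\rho$: besides the $c$-transitions from the final states to the new state $\bar q$, it adds $c$-transitions from $\bar q$ back to \emph{every} state of $\A$, which collapses $\rho(\A)\revdet$ into a single SCC containing $\bar q$. Since $\bar q$ carries $a$- and $b$-self-loops, every length $n$ admits an accepted word from $\bar q$; if $\A$ is not universal, some word of the same length is rejected, and both runs stay inside the unique SCC, so well-behavedness fails no matter how small $L(\A)$ is. To salvage your approach you would need a gadget with this ``single SCC plus an accepted word of every length'' property.
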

\begin{proof}
	It remains to show that \textsc{Nfa}$(\log n)$ is \PSPACE-hard, which can be shown by reducing the NFA universality problem to \textsc{Nfa}$(\log n)$. W.l.o.g.\ the alphabet of the input automaton is $\Sigma = \{a,b\}$, and we also consider the extended alphabet $\Gamma = \{a,b,c\}$. For an NFA $\A = (Q,\Sigma,I,\Delta,F)$ we define $\rho(\A)$ to be the automaton that results from $\A$ by adding a new initial and final state $\bar q$ with $a$- and $b$-labeled self-loops, a $c$-labeled transition from every state of $F$ to $\bar q$, and a $c$-labeled transition from $\bar q$ to every state of $\A$. The only final state of $\rho(\A)$ is $\bar q$.
	More formally, we define
	\begin{align*}
		\rho(\A) = {}     &   
		(Q \cup \{\bar q\},
		\Gamma,
		I \cup \{\bar q\},
		\rho(\Delta),
		\{\bar q\}
		), \text{ where}
		\\
		\rho(\Delta) = {} &   
		\Delta \cup
		\{(q,c,\bar q) \mid q \in F \} \cup
		\{(\bar q,c,q) \mid q \in Q \} \cup
		\{(\bar q,x,\bar q) \mid x \in \{a,b\} \} .
	\end{align*}
	The automaton $\sigma(\A)$ results from $\A$ be adding a new initial and final state $\bar q$ with $a$- and $b$-labeled self-loops, a $c$-labeled transition from $\bar q$ to each initial state of $\A$, and a $c$-labeled transition from every state of $\A$ to $\bar q$. The only initial state of $\sigma(\A)$ is $\bar q$. More formally, we define $\sigma(\A)$ as follows:
	\begin{align*}
		\sigma(\A) = {}     &   
		(Q \cup \{\bar q\},
		\Gamma,
		\{\bar q\},
		\sigma(\Delta),
		F \cup \{\bar q\}
		), \text{ where}
		\\
		\sigma(\Delta) = {} &   
		\Delta \cup
		\{(\bar q,c,q) \mid q \in I \} \cup
		\{(q,c,\bar q) \mid q \in Q \} \cup
		\{(\bar q,x,\bar q) \mid x \in \{a,b\} \} .
	\end{align*}
	Then, we have $\rho(\A)\rev = \sigma(\A\rev)$, which is also equal to
	\begin{align*}
		\rho(\A)\rev = {}     &   
		(Q \cup \{\bar q\},
		\Gamma,
		\{\bar q\},
		\rho(\Delta)\rev,
		I \cup \{\bar q\}
		) \text{ with}
		\\
		\rho(\Delta)\rev = {} &   
		\Delta\rev \cup
		\{(\bar q,c,q) \mid q \in F \} \cup
		\{(q,c,\bar q) \mid q \in Q \} \cup
		\{(\bar q,x,\bar q) \mid x \in \{a,b\} \} .
	\end{align*}
	Notice that for a deterministic $\A$ (over the alphabet $\Sigma$), the automaton $\sigma(\A)$ (over the alphabet $\Gamma$) is also deterministic. For a nondeterministic automaton $\A$ that satisfies additionally the condition that the state $\emptyset$ is not reachable from the initial state of $\A\det$
	(that is, $\A\det$ does not have the state $\emptyset$), we have that $\sigma(\A)\det \cong \sigma(\A\det)$ (identifying $\bar q$ with $\{\bar q\}$). So,
	\[
		\sigma(\A\revdet) \cong
		\sigma(\A\rev)\det =
		\rho(\A)\revdet
	\]
	under the previously mentioned condition for $\A\rev$, which can be satisfied w.l.o.g.\ by adding an initial non-final state to $\A\rev$ with $a$- and $b$-labeled self-loops (that is, by adding a final non-initial state to $\A$ with $a$- and $b$-labeled self-loops). So, for a nondeterministic automaton $\A$ the claim is:
	\begin{allowdisplaybreaks}
		\begin{eqnarray*}
			L(\A) = \Sigma\star \ & \Leftrightarrow & \  L(\A\revdet) = \Sigma\star  \\
			\ & \Leftrightarrow & \ L(\sigma(\A\revdet)) = \Gamma\star  \\
			\ & \Leftrightarrow & \ \text{the DFA $\sigma(\A\revdet) \cong   \rho(\A)\revdet$ is well-behaved} \\
			\ & \Leftrightarrow & \  \text{$L(\rho(\A)) \in \V(\mathcal{O}(\log n))$.}
		\end{eqnarray*}
	\end{allowdisplaybreaks}%
	The proof of the third equivalence uses the fact that $\sigma(\A\revdet)$ consists of a single SCC. The left-to-right direction is immediate. For the right-to-left direction, observe that the initial and final state $\bar q$ of the DFA $\sigma(\A\revdet)$ has $a$- and $b$-labeled self-loops, i.e., $\Sigma^* \subseteq L(\sigma(\A\revdet))$.
	Thus, for every $n$ there is a word of length $n$ that is accepted from the initial state $\bar q$. But $\sigma(\A\revdet)$ is well-behaved, which implies that all strings of any length must be accepted from $\bar q$, i.e., $L(\sigma(\A\revdet)) = \Gamma^*$.
	So, we have established that the polynomial-time (in fact, log-space) construction $\A \mapsto \rho(\A)$ reduces the universality problem for NFAs to \textsc{Nfa}$(\log n)$.
\end{proof}

\bibliographystyle{plainurl}
\bibliography{bib}

\end{document}